\newtheorem{Theorem}{Theorem}[part]
\newtheorem{Definition}{Definition}[part]
\newtheorem{Remark}{Remark}[part]
\newtheorem{Ansatz}{Ansatz}[part]
\newtheorem{thm}{Theorem}[section]
\newtheorem{rem}[thm]{Remark}
\newtheorem{ans}[thm]{Ansatz}
\makeatletter \@addtoreset{equation}{section}
\def\no{\noindent}
\def\05{\frac{1}{2}}
\def\-1{^{-1}}
\def\1{{1\hspace{-1mm}{\rm I}}}
\def\={\;=\;}
\def\.{\;.}
\title{  }
\author{ }
\def\be{\begin{eqnarray}}
\def\ee{\end{eqnarray}}
\def\b*{\begin{eqnarray*}}
\def\e*{\end{eqnarray*}}
\def\be{\begin{equation}}
\def\ee{\end{equation}}
\def\Esp#1{\mathbb{E}\left[#1\right]}
\def \E{\mathbb{E}}
\def\P{\mathbb{P}}
\def\Q{\mathbb{Q}}
\def\-1{^{-1}}
\def\0.5{\frac{1}{2}}
\def\no{\noindent}
\def\={\;=\;}
\def\.{\;.}
\def\1{{\bf 1}}
\title{No--Arbitrage Commodity Option Pricing \\ with Market Manipulation}
\author{Ren\'e A\"id\footnote{Université Paris-Dauphine, PSL Research University, LEDa. Email: rene.aid@dauphine.psl.eu.}  \quad Giorgia Callegaro\footnote{Universit\`a degli Studi di Padova, Dipartimento di Matematica, Via Trieste 63, I-35121 Padova, Italy. Email: gcallega@math.unipd.it.} \quad Luciano Campi\footnote{London School of Economics, Department of Statistics, Columbia House, 10 Houghton Street, WC2A 2AE, London, United Kingdom. Email: l.campi@lse.co.uk.}}
\date{\today}
\begin{document}

\maketitle

\begin{abstract}
We design three continuous--time models in finite horizon of a commodity price, whose dynamics can be affected by the actions of a representative risk--neutral producer and a representative risk--neutral trader. Depending on the model, the producer can control the drift and/or the volatility of the price whereas the trader can at most affect the volatility. The producer can affect the volatility in two ways: either by randomizing her production rate or, as the trader, using other means such as spreading false information. Moreover, the producer contracts at time zero a fixed position in a European convex derivative with the trader. The trader can be price-taker, as in the first two models, or she can also affect the volatility of the commodity price, as in the third model.

We solve all three models semi--explicitly and give closed--form expressions of the derivative price over a small time horizon, preventing arbitrage opportunities to arise. We find that when the trader is price-taker, the producer can always compensate the loss in expected production profit generated by an increase of volatility by a gain in the derivative position by driving the price at maturity to a suitable level. 
Finally, in case the trader is active, the model takes the form of a nonzero-sum linear-quadratic stochastic differential game and we find that when the production rate is already at its optimal stationary level, there is an amount of derivative position that makes both players better off when entering the game. \medskip

\no {\bf Keywords:} price manipulation, fair game option pricing, martingale optimality principle, linear-quadratic stochastic differential games. 
\end{abstract}


\clearpage
\section{Introduction} 
\label{sec:intro}

\begin{flushright}
{\em The methods and techniques of manipulation are limited only by the ingenuity of man}, \\ in Cargill vs Hardin, US Court of Appeal, 8$^{\rm th}$ circuit, Dec 7, 1971.
\end{flushright}

Price manipulation in financial markets is not the rare event as one may think. In their paper, Aggarwal and Wu \cite{Aggarwal06} provide data on more than 140 cases of market stock price manipulation in the sole period of ten years from 1990 to 2001 released by the Security Exchange Commission. As the authors quote, those cases only correspond to those who were caught.  On commodity markets, illegal practices of market manipulations are abundantly documented and can compete with stock markets (see Pirrong \cite{Pirrong17} for a survey of those practices). More recently, the LIBOR itself was the object of a coordinated manipulation by a cartel of banks. The LIBOR, created more than fifty years ago and controlled by the British Bankers Association, serves as a benchmark for loans and as an index in hundreds of trillions of nominal in derivatives. It is enough to read Duffie and Stein \cite{Duffie15} to measure the extent of social welfare loss induced by the manipulators actions. In their report for the Federal Reserve Bank of New York on the LIBOR scandal, Hou and Skeie \cite{Hou14} explain that if the first motivation for this manipulation in the aftermath of the 2008 financial crisis was to maintain a signal of credit worthiness, the second motivation was  the {\em express intent of benefiting the bank's derivatives positions}. 

Indeed, if the first generation of market price manipulation concentrated on using some market power to increase the market price and then resell the good at a higher price (unravelling strategy), it seems that the second generation of market manipulation will use the leverage effect provided by derivatives. Worrying enough to support this prognosis, the recent paper of Griffin and Shams \cite{Griffin17} asserts the possibility of an on--going VIX manipulation using large position in the out--of--the money options used to compute the VIX. If true, it would mean that some traders are already engaged in what was thirty years ago a theoretical problem in derivative pricing when academics would relax the hypothesis of no--market impact in the Black \& Scholes pricing framework (see  Jarrow \cite{Jarrow94} for a seminal work on this subject).\medskip

In this paper, we take market price manipulation models one step further in considering the possibility of the joint control of the average (the drift) and of the volatility of a commodity price by the actions of a producer and a trader who exchange a derivative. To analyse the behaviours of both players and the distortion of the prices of the commodity and of the derivative, we design three continuous--time models of increasing complexity. In each model, the commodity price is impacted by the actions of a representative risk--neutral producer and a representative risk--neutral trader. Both agents want to maximise their respective expected profits. The representative producer has market power and can increase or decrease the price by reducing or increasing her production rate. Actions on the volatility can be performed either by randomizing the production rate or by spreading false information. Production randomization is just making a strategic use of outages and the question answered in this paper is when this device has an interest for the producer. Further, regarding the use of information on the volatility, we suppose that  the trader or the producer has identified some channels allowing her to act on  the nominal volatility of the underlying by an appropriate {\em rate} of information. For both agents, manipulation of the commodity price comes at some costs, which are included in their profit functions.  

We consider first the case of production--based manipulation: the producer acts alone and can impact both the average price and its volatility by changing her average production rate and the volatility of her production rate. Second, we consider the case of production and information--based manipulation by a producer: the producer acts alone, she can affect the average price by changing her production rate and the volatility as mentioned above, e.g. by spreading appropriate information. Finally, we consider the case of a competition between a producer who can exert market power on the drift of the price and a trader who has an impact on the volatility of the price. In each case, we suppose that the producer contracts at time zero a constant (long or short) position of a European convex derivative  and delivers (or receives) its payoff at maturity. The trader has the opposite position in the derivative. 
Since they have an asymmetric impact on the dynamics of the commodity price, we are able to assess which instrument is more efficient, manipulation of the drift or manipulation of the volatility.  We aim at studying to which extent the producer and the trader can profit from their market power and how the prices of the commodity and its derivatives can be distorted by their actions.

In the classification of market manipulation provided by Allen and Gorton \cite{Allen92a}, the first model corresponds to  an {\em action--based} manipulation (using physical means such as production); the second model is a mixture of action--based and {\em information--based} (spreading false rumours on commodity scarcity or accounting and earnings' manipulation); the last one is a mixture of action--based, information--based and {\em trading--based} (buying to increase the price and then selling back).\medskip

To our knowledge, this is the first paper to study the joint manipulation of a commodity price and a derivative. We give now some reasons why our analysis might get even more relevant in a near future.

First, the commodity business has gone through a concentration trend in the last decades with the emergence of major players like Glencore/Xstrata, Rio Tinto or BHP Billinton. A small amount of international firms concentrate in their hands a significant volume of minerals production. For instance, Glencore concentrate 60\% of the zinc, 50\% of the copper, 45 \% of the lead and 38\% of the aluminium. At the same time, they have to take significant position in the financial markets to hedge their big physical positions. For instance, Glencore \cite[note 28, p. 201]{Glencore18} shows a position of \$3.2 billion of commodity related contracts including futures, options, swaps and physical forwards compared to an adjusted EBITDA of \$15.8 billion or a total asset value of \$128 billion. Rio Tinto \cite[notes p. 193]{RioTinto18} presents an exposition in nominal value of derivatives in aluminum of \$1.786 billion for an EBITDA for aluminum of \$3.1 billion and operating asset value of \$16.5 billion. Players of this size cannot ignore that the impact they may have on the price of a commodity will affect the value of their portfolio derivatives too.  

Second, large commodity firms are not the only big players in financial markets to hold significant positions in commodity derivatives. With the financialization of commodity markets, large hedge funds, banks and institutional players have increased their position in commodity derivatives (see Cheng and Xiong \cite{Cheng14} for an overview). Thus, when trying to move the price at her own advantage, a producer may find some opposition from financial actors harmed by her action. This problem is already documented in the case of large position of derivatives exchanged between financial institution (see the case of Merrill Lynch selling \$500 million of knock--in put options to Leiter's International in Gallmeyer and Seppi \cite{Gallmeyer00}). 

Third, the activities of trading in commodity firms are in general isolated in a subsidiary because they fall within the scope of financial regulation.  Thus, the trading activity might end up in conflict with the production activity regarding the use of market power. \medskip

Our main results can be summarized as follows. For each model we provide closed--form solutions in terms of a coupled Riccati systems of ordinary differential equations. In each model, the price of the derivative is a fair market price in the sense that it is consistent with no--arbitrage condition. First, we find that in all models, the optimal production rate of the producer follows the same pattern: during a transitory phase, it reaches the production rate that maximises the profit rate, then it stays there and at maturity, the production rate increases (resp. decreases) in case of short position (resp. long position). In the case of production--based manipulation, it is optimal for the producer to increase the volatility of the production rate to induce an increase in the volatility of the derivative if and only if he has a short position in the derivative exceeding a given threshold. Without derivative position, the value function of the producer is a non--increasing function of the volatility. Thus the producer prefers to reduce the volatility. But, when she holds a sufficiently large short position in the derivative, the increase in volatility that pushes the price of the derivative up can compensate the induced indirect cost of volatility. Since her impact on the average price is significant, the producer can compensate the loss in expected profit from production due to an increase of volatility by an increased profit from the derivative position. When the producer action on the volatility is information--based, the previous observations still hold, except that now her value function is increasing in the volatility, providing strong incentive to raise the volatility even without derivative position. Thus, it results that if the producer can impact the price to increase her profit in her derivative position, she does it.

What happens if the producer manipulation can be challenged by a trader taking an opposite position in the derivative? We find that the actions of the trader on the volatility only reduces the potential profit made by the producer on the derivative. Further, despite the asymmetry of powers of the two players, we find that when the production rate is already at its optimal stationary level, there is an amount of derivative position that makes both players better off entering the game. \medskip

There is a considerable financial economics literature about market manipulation, which follows in particular a game theoretic approach. A short review of the portion of such a literature related to stock markets has to start with the seminal work of Kyle \cite{Kyle85} and the work of Allen and Gale \cite{Allen92} who provide a simple information condition under which an uninformed trader can make a profitable unravelling strategy (buying the stock, making the price rise and sell the stocks at an average higher price). Chatterjea and Jarrow \cite{Chatterjea98} provide a game theoretic model of US Treasury Securities manipulation. Cooper and Donaldson \cite{Cooper98} design a dynamic game theoretic model of corner strategy. Regarding commodity price manipulation strategy, thorough analysis are available in the works of Pirrong \cite{Pirrong93, Pirrong95, Pirrong17}.

It is also worth mentioning that our modelling and contribution are different from those in the rich literature on market impact,
for which we refer the reader to, e.g., the recent book by Gu\'eant \cite{Gue16} and the references therein. Indeed, while in market impact models the drift of the market price is affected by the traders as a consequence of an optimal execution of a market order, in our setting both drift and volatility are affected and the impact comes directly from market manipulation. Moreover, the modelled financial phenomena are different and so are the problems solved (optimal execution vs profit maximization).

The closest work  to ours is the paper by Nyström and Parviainen \cite{Nystrom17}. The authors provide a zero--sum game between two players who can control the drifts and the volatilities of a multidimensional stock market and show under mild conditions that the game has a value and that it is given by the unique viscosity solution of degenerate parabolic PDE. Considering a more specific model of actors and impact functions, we are able to provide more insights in the gains of the producer and the trader and the distortion of prices.\medskip

 The paper is organised in the following way. Sections~\ref{sec:model1}, \ref{sec:model2} and \ref{sec:model3} present respectively the model of manipulation through production, manipulation through information and competition of manipulation. Section~\ref{sec:numeric} provides numerical illustration of the three models.

\section{Production--based manipulation} 
\label{sec:model1}

This section contains all our results on the first model of a producer of a commodity, who can manipulate the price of the commodity through production. 

More in detail, we consider a producer whose objective is to maximise her profit from production and from investment in a financial market over the time period $[0,T]$ for some $T>0$. The producer can increase her production rate $q_t$ with the instantaneous control rate $u_t$ at the expense of a cost $\frac{\kappa}{2} u^2$ with $\kappa >0$. We suppose that the production is entirely sold at a market price $\tilde S$, which can be affected by the producer: the more the production rate the less the market price. This effect leads to observed market price $\tilde S_t := s_0 - a \, q_t$ where $a > 0$ is some fixed parameter and $s_0 > 0$ is the market price before action of the producer (which in this case is constant). We will relax the hypothesis of a constant market price before impact in the second model (see Section \ref{sec:model2}). The former relation can also be seen as an inverse demand function of the good, where $a$ is its elasticity. Thus, the instantaneous profit rate is $P_t := q_t \, \tilde S_t$.

We suppose that the production rate $q_t$ is affected by a random factor that gathers all the randomness that usually affects production processes (outages, strikes and so on). We suppose that, without intervention of the producer, uncertainty is normally distributed with standard deviation $\sigma > 0$.  Further, we suppose that the producer has an effect on the uncertainty of his production rate. These hypotheses lead to the following dynamics for the production rate:
\be
\label{eq:qt}
dq_t = u_t \, dt +  \sigma \, \sqrt{1+z_t} \, dW_t, \quad q_0 \in \mathbb R,
\ee
where $W$ is a standard Brownian motion, defined on some probability space $(\Omega, \mathcal F,\mathbb P)$, and $z_t$ is the effect (in percentage) on the variance of the production rate. The information available to the producer is modelled by the natural filtration, ${(\mathcal F_t)}_{t \in [0,T]} = {(\mathcal F_t^W)}_{t \in [0,T]}$, generated by the Brownian motion $W$ and completed with all $\mathbb P$-null sets. Hence, anywhere in this section adaptedness will always be referred to this filtration.

We suppose that controlling $z_t$ requires some financial cost $\frac{g}{2} z^2$ with $g >0$. The producer can choose either to decrease or to increase the volatility of the production rate $q_t$. Although the costs incurred to increase the volatility are less easy to grasp than the cost involved to decrease it, they can be interpreted as the costs of the actions needed to hide them.

At this stage of the model description, we notice that an increase of the volatility of $q_t$ has a negative impact on the expected instantaneous profit $\mathbb E[P_t] = s_0 \mathbb E[q_t] - a\mathbb E[q_t ^2]$. In other terms, the producer is Gamma negative. Thus, he has no incentive to increase the volatility of his production facilities.

Most large commodity producers make an important use of financial market for hedging purposes. Hence, we suppose that the producer intervenes in the financial market for his production good by selling derivatives at the initial time. Since the producer is Gamma negative, a natural hedge would be to sell a Gamma positive derivative such as a call option. Here, we suppose that the producer has a net derivative position $\lambda$ which can be positive (short, sale) or negative (long, purchase) with maturity $T$ and payoff $h_T := \tilde S_T^2$. Such a quadratic payoff can be seen as a position over a portfolio of call options with the same maturity $T$ and different strike prices. We denote by $h_0$ the price at time $0$ of that option, its precise definition will be given when specifying the set of admissible controls. Indeed, $h_0$ is not given from the outset, as it depends on the underlying which is in turn controlled.  

The aim of the producer is to maximise the following objective functional 
\be
\label{eq:v}
J^\lambda (u,z,h_0) := \Esp{ \int_0^{T} \left( P_t   - \frac{\kappa}{2}u^2 _t - \frac{g}{2}z^2 _t \right) dt + \lambda \big( h_0 - h_T\big)  }.
\ee
Now, within the producer firm, there are two distinct departments, a production department and an investment department. The former takes any production--related decisions, i.e. it controls $u$ and $z$, while the latter is responsible for selling/buying the derivatives at a fair price and pursuing the corresponding hedging strategy. It is natural to assume that the investment department is using the nowadays classical no-arbitrage machinery to propose a derivative's price. The two departments 
are aware of the fact that their decision affects each other. In particular, the fact that the investment department uses the no-arbitrage approach to price derivatives implies the following natural constraints for the production side: the chosen production plan should not lead to arbitrage opportunities.
We are going to incorporate this idea into the definition of admissible policies. After that, we will give the precise formulation of the producer optimization problem.   

\begin{Definition}\label{adm-case1} We say that any pair $(u,z)$ is \emph{admissible} if the following properties are satisfied:\begin{enumerate}
\item[(i)] $(u_t ,z_t )_{t \in [0,T]}$ are progressively measurable processes with values in $\mathbb R \times (-1,\infty)$ such that
\[ \mathbb E\left[\int_0 ^T (u_t ^2 + z^2 _t)dt\right] < \infty, \quad \mathbb E\left[\int_0 ^T q_t ^2 (1 + z_t) dt\right] < \infty; \]
\item[(ii)] there exists a unique equivalent martingale measure $\mathbb Q^{u,z}$ for the price process $\tilde S$, equivalently for the production process $(q_t)_{t\in [0,T]}$, with $h_T \in L^1 (\mathbb P) \cap L^1 (\mathbb Q^{u,z})$; 
\item[(iii)] there exists a real-valued progressively measurable process $(\Delta^{u,z}_t)_{t\in [0,T]}$ satisfying
\[ \mathbb E\left[ \int_0^T \left( |\Delta^{u,z}_t u_t| + |\Delta^{u,z}_t |^2 (1+z_t) \right) dt \right] < \infty ,\]
and such that the following holds $\mathbb Q^{u,z}$-a.s.
\[ h^{u,z}_t := \E^{\Q^{u,z}} [ h_T| \mathcal F_t] = \E^{\Q^{u,z}} [ h_T]   + \int_0 ^t \Delta^{u,z}_s dq_s,\] 
for all $t \in [0,T]$. \end{enumerate}
The set of all admissible pairs will be denoted by $\mathcal A$. 
\end{Definition}

Hence $h^{u,z}_0 = \E^{\Q^{u,z}} [ h_T] $ can be viewed as the price of the option $h_T$ under the production control $(u,z)$. Notice that such a price is clearly affected by the controls via the risk neutral measure in (ii). It can be interpreted as ``commitment price'': after selling the option, the producer could deviate from the implementation of the hedging strategy that leads to the measure $\Q^{u,z}$.  Here we make the assumption that the producer implements the production controls leading to precisely that measure and thus, that price.

Notice that from $q$'s dynamics we have that the measure $\Q^{u,z}$, whose existence is postulated in (ii) above, is necessarily given by the following Radon-Nikodym derivative
\[ \frac{d\Q^{u,z}}{d\P} = \exp\left\{  -\int_0 ^T \delta_t dW_t - \frac{1}{2}\int_0 ^T \delta_t ^2 dt \right\}, \quad \delta_t = \frac{u_t}{\sigma \sqrt{1+ z_t}}. \]
Before giving the final formulation of the producer optimization problem, we can exploit the admissibility properties above to rewrite the objective functional \eqref{eq:v} as follows
\[ J^\lambda (u,z,h_0) = \Esp{ \int_0^{T} \left( P_t   - \frac{\kappa}{2}u^2 _t - \frac{g}{2}z^2 _t - \lambda \Delta_t^{u,z} u_t \right) dt} = : \tilde J^\lambda (u,z) . \]
Indeed, condition (iii) implies that
\[ h_T - h_0 = \int_0 ^T \Delta_t^{u,z}  dq_t = \int_0 ^T \Delta_t^{u,z}  (u_t dt + \sigma \sqrt{1+z_t}dW_t).\]
Condition (iii) implies that the $dW$-part above has zero expectation under $\mathbb P$. Moreover, we observe that the integrability assumptions in (i) and (iii) ensure that $\tilde  J^\lambda (u,z)$ is finite. 

Finally, after all these preliminaries, we can formulate the producer's optimization problem
\begin{equation}\label{maxJ}
\sup_{(u,z) \in \mathcal A} \tilde J^\lambda (u,z).
\end{equation}

\subsection{Heuristics} 
\label{sec:solution}

In this part we develop the heuristics needed to obtain a candidate for the solution of problem \eqref{eq:v}. In the next sub-section, we will verify that the candidate is indeed the optimal solution according to the definition above. 

First, notice that, since the market is complete, there exists only one possible no-arbitrage price for the derivative $h_T$, which also gives the initial wealth needed to fund the hedging strategy. The derivative can be perfectly replicated by trading in a self-financing way in the underlying $\tilde S_t = s_0 -aq_t$ or, equivalently, in $q_t$. Therefore
\[ h_T = \mathbb E^\Q [h_T] + \int_0 ^T \Delta_t dq_t ,\]
where $\mathbb Q$ is the unique equivalent martingale measure for $\tilde S$ (or, equivalently, for $q$), and $\Delta$ is the delta hedging. More precisely, using Girsanov's theorem we get the dynamics of $q$ under $\Q$, which is
\[ dq_t = \sigma \sqrt{1+z_t} dW_t ^\Q, \quad dW^\Q_t = dW_t - \delta_t dt,\]
where $\delta_t = \frac{u_t}{\sigma \sqrt{1+z_t}}$ for $t\in [0,T]$. Hence, defining the price at time $t$ of the derivative as
\[ h_t = \mathbb E^\Q [ (s_0 -aq_T)^2 | q_t] := \varphi(t,q_t),\]
we obtain the usual PDE for the price
\begin{equation}\label{price-PDE1}
\varphi_t + \frac{1}{2} \sigma^2 (1+z(t,q)) \varphi_{qq} = 0, \quad \varphi(T,q)= (s_0 -aq)^2.
\end{equation}
Finally, we have the usual relationship $\Delta_t = \varphi_q (t,q_t)$.

\begin{Remark} \label{Markov_z}
{\rm Notice from Equation \eqref{price-PDE1} that $\varphi$ depends on $z$ in a functional way.
However we expect the optimal control to be Markovian, which justifies replacing $z_t$ (which could in principle depend on the whole path of the state variable $(q_t)$) with the function $z(t,q)$ of time and of the value $q$ of state variable at time $t$. The PDE above needs to be solved together with the HJB equation for the value function, since the coefficient of the second derivative $\varphi_{qq}$ depends on the control $z$. }
\end{Remark}

The perfect replicability of the derivative allows us to rewrite the objective function in a more suitable way as at the end of the previous sub-section. Indeed, observe first that
\[ h_0 - h_T = - \int_0 ^T \Delta_t dq_t = -\int_0 ^T \varphi_q (t,q_t) (u_t dt + \sigma \sqrt{1+z_t} dW_t), \]
where recall that $W$ is a Brownian motion under $\mathbb P$. Hence, given the hedging strategy $\varphi_q(t,q_t)$, the maximization problem on the production side can be expressed as follows
\begin{equation}
v^\lambda (0,q_0) := \sup_{u,z} \mathbb E \left [ \int_0 ^T \left(q_t (s_0 -aq_t) - \frac{g}{2}z_t ^2 - \frac{\kappa}{2}u_t ^2 - \lambda \varphi_q (t,q_t) u_t \right)dt \right].
\end{equation}
In other terms, the gain coming from selling the derivative has been absorbed by the running profit term. Therefore, we can get the HJB equation
\begin{eqnarray}\label{HJB-case1}
-v_t = \sup_{u,z} \left\{ q(s_0 -a q) - \frac{g}{2}z ^2 - \frac{\kappa}{2}u ^2 - \lambda \varphi_q (t,q) u + u v_q + \frac{\sigma^2}{2}(1+z) v_{qq}\right \},
\end{eqnarray}
with terminal condition $v(T,q)=0$. Notice that the PDE for the price \eqref{price-PDE1} and the HJB equation for the value function are clearly coupled as the optimal $z$ appears in the pricing PDE, while the derivative of the price, $\varphi_q$, appears in the HJB equation (compare to Remark \ref{Markov_z}).
The first order conditions give the two (candidate) optimal controls
\begin{equation}
\widehat u = \frac{1}{\kappa}\left(v_q-\lambda \varphi_q\right), \quad \widehat z =\frac{\sigma^2}{2g} v_{qq}.
\end{equation}
Notice that we have dropped the dependence upon $\lambda$ in the value function for sake of readability. In order to get the full solution, it is natural to make the following

\begin{Ansatz}\label{ans-case1}
Both solutions $\varphi$ and $v^\lambda$ are quadratic in $q$, i.e.
\[ \varphi(t,q)= A(t)q^2 + B(t)q+C(t), \quad v^{\lambda}(t,q)=D(t)q^2 + E(t)q +F(t),\]
where $A,B,C,D,E,F$ are deterministic functions of time, to be determined.
\end{Ansatz}

\paragraph{Solving for $\varphi$.} To ease the notation, we drop the dependence of time from $A,B$ and so on. First of all, applying the Ansatz \ref{ans-case1} to the candidate optimal controls gives
\begin{equation*} \label{opt-uz} \widehat u = \frac{1}{\kappa}\left( 2q(D-\lambda A) +E -\lambda B \right), \quad \widehat z = \frac{\sigma^2}{g}D.\end{equation*}
Next, we substitute the expression above for $\widehat u$ and $\widehat z$ in the pricing PDE \eqref{price-PDE1} and we obtain
\begin{equation*}
\sigma^2 \left(1+\frac{\sigma^2}{g}D\right) A + A' q^2 + B' q + C' =0, \quad A(T)q^2 + B(T) q +C(T) = s_0 ^2-2as_0 q +a^2q^2.
\end{equation*}
In particular, the terminal condition for $\varphi$ gives the corresponding terminal conditions for $A,B,C$ as
\begin{equation*} A(T) = a^2, \quad B(T)=-2as_0 , \quad C(T)=s_0 ^2.\end{equation*}
By identification of the terms in $q$, we get the following ODEs for $A,B$ and $C$
\[ A'=0, \quad B'=0, \quad C'=-\sigma^2a^2\left( 1+ \frac{\sigma^2}{g} D\right),\]
which can be easily solved using the terminal conditions above. Indeed, we obtain
\begin{equation}\label{sol-ABC} 
A(t)=a^2, \quad B(t) = -2as_0 , \quad C(t) = s_0 ^2 + \sigma^2 a^2 \int_t ^T \left( 1+ \frac{\sigma^2}{g}D(r)\right) dr,
\end{equation}
for all $t \in [0,T]$. Notice that the function $D(t)$ will be obtained when solving the HJB equation \eqref{HJB-case1}.
 
\paragraph{Solving for $v^\lambda$.}  Substituting the Ansatz \ref{ans-case1} for $v^\lambda$ (together with the optimal controls) in the HJB equation \eqref{HJB-case1} and identifying the terms in $q$, we obtain the following ODEs for $D,E$ and $F$
\begin{eqnarray*}
-D' &=& -a +\frac{2}{\kappa}(D-\lambda A)^2, \quad D(T)=0,\\
-E' &=& s_0 + \frac{2}{\kappa}(D-\lambda A) (E-\lambda B), \quad E(T)=0,\\
-F' &=& \frac{\sigma^4}{2g}D^2 + \sigma^2 D +   \frac{1}{2 \kappa}(E-\lambda B)^2 , \quad F(T)=0.
\end{eqnarray*}
Now, using (\ref{sol-ABC}) implies
\begin{eqnarray}
-D' &=& -a +\frac{2}{\kappa}(D-\lambda a^2)^2, \label{eqD} \\
-E' &=& s_0 + \frac{2}{\kappa}(D-\lambda a^2) (E + 2\lambda as_0 ),\\
-F' &=& \frac{\sigma^4}{2g}D^2 + \sigma^2 D +\frac{1}{2 \kappa}(E + 2 a \lambda s_0 )^2 ,
\end{eqnarray}
with null terminal conditions $D(T)=E(T)=F(T)=0$.

\begin{Remark}\label{rem:sign-D}
{\rm We observe that, while the equation for $D$ is a one-dimensional Riccati ODE, the second one is linear and the third one can be solved just by integration. The Riccati equation (\ref{eqD}) can be easily proved to have a unique solution over the whole time interval $[0,T]$. Indeed, this is a direct consequence of, e.g., Lemma 10.12 in \cite{Filipovic09}. Moreover, that lemma also implies that
\[ D(t) \le 0 \quad \text{for all }t \in [0,T] \quad \Leftrightarrow \quad D'(T) = a - \frac{2\lambda^2 a^4}{\kappa} \ge 0,\]
which will be important later for the interpretation of our results. The value $a - \frac{2\lambda^2 a^4}{\kappa}$ corresponds to the slope of $D$ close to $T$.}
\end{Remark}

Set $\theta = \sqrt{8a / \kappa}$. Solving the equations above gives the following expressions
\begin{eqnarray}
D(t) &=& - \frac{2(a-\frac{2\lambda^2 a^4}{\kappa})(e^{\theta (T-t)} -1)}{\theta (e^{\theta(T-t)} +1) +\frac{4\lambda a^2}{\kappa} (e^{\theta(T-t)} -1)}, \label{eq-D} \\
E(t) &=& s_0 \int_t ^{T} e^{\int_t ^{u} \frac{2}{\kappa}(D(r)-\lambda a^2)dr} \left[ 1 +\frac{4 a \lambda}{ \kappa}(D(u)-\lambda a^2 ) \right] du, \label{eq-E} \\
F(t) &=&  \int_t ^T \left(\frac{\sigma^4}{2g}D(u) ^2 + \sigma^2 D(u) + \frac{1}{2 \kappa}  {{(E(u) + 2a \lambda s_0 )}^2} du \right) , \nonumber \\
&& \label{eq-F}
\end{eqnarray}
for all $t \in [0,T]$.

\subsection{Verification}
We conclude the section with the verification that the candidate described above is indeed a solution to problem \eqref{maxJ}.
\begin{Theorem}\label{verif-case1}
Let $D,E$ be deterministic functions of time as in, respectively, \eqref{eq-D} and \eqref{eq-E}. Whenever $H:= 1- \frac{2\lambda a}{\kappa}(\lambda a^2 - \frac{g}{\sigma^2})>0$, we assume that the maturity $T$ is small enough, more precisely
\begin{equation}\label{small-T} T < T_{max} := \frac{2}{\theta} \coth^{-1} \left( \frac{2\sigma^2 a}{\theta g} H\right). \end{equation}
Then there exists an optimal policy $(\widehat u, \widehat z) \in \mathcal A$ for problem \eqref{maxJ}, where the production policies are
\begin{equation}
\label{opt-uz-2} \widehat u_t = \frac{1}{\kappa}\left( 2 \widehat q_t (D(t)-\lambda a^2 ) + 2as_0 \lambda +E(t) \right), \quad \widehat z_t = \frac{\sigma^2}{g}D(t), \quad t\in [0,T].\end{equation}
The no-arbitrage price process for the derivative $h_T$ is given by
\begin{equation}\label{h-price} 
\widehat h_t := h_t^{\widehat u, \widehat z}= (s_0 -a\widehat q_t)^2 + \sigma^2 a^2 \int_t ^T \left( 1+ \frac{\sigma^2}{g}D(u)\right) du, \quad t \in [0,T],
\end{equation}
and the hedging process is
\begin{equation}
\widehat \Delta _t := \Delta_t^{\widehat u, \widehat z} = 2 a (a \widehat q_t - s_0), \quad t\in [0,T],
\end{equation}
where the production rate is
\begin{eqnarray}
	\widehat q_t &=& e^{R(t)}\left\{ q_0 + \int_0^t e^{-R(s)} \frac{1}{\kappa}(-2\lambda a^2 + 2as_0 \lambda + E(s))ds \right. \nonumber \\
	&& \left.+ \int_0 ^t e^{-R(s)} \sigma \sqrt{1+ \frac{\sigma^2}{g}D(s)} dW_s \right\}, \label{q_hat_pb1}
\end{eqnarray}
with $R(t) := \int_0 ^t \frac{2}{\kappa} D(s)ds$.
\end{Theorem}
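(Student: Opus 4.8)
This is a verification statement, and the plan is to run the martingale optimality principle for the \emph{coupled} system formed by the pricing PDE~\eqref{price-PDE1} and the HJB equation~\eqref{HJB-case1}, both taken in the quadratic class of Ansatz~\ref{ans-case1}: $v^\lambda(t,q)=D(t)q^2+E(t)q+F(t)$ with $D,E,F$ as in~\eqref{eq-D}--\eqref{eq-F}, and $\varphi(t,q)=a^2q^2-2as_0q+C(t)$ with $C$ from~\eqref{sol-ABC}. Two things have to be shown: (a) the candidate $(\widehat u,\widehat z)$ of~\eqref{opt-uz-2}, together with the measure $\mathbb Q^{\widehat u,\widehat z}$ and the hedge $\widehat\Delta$, lies in $\mathcal A$; and (b) $\tilde J^\lambda(u,z)\le v^\lambda(0,q_0)=\tilde J^\lambda(\widehat u,\widehat z)$ for every $(u,z)\in\mathcal A$. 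The formulas~\eqref{h-price} and~\eqref{q_hat_pb1} will then drop out by direct computation.

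\textbf{Admissibility and the smallness of $T$.} I would start with the only point that is not soft analysis: the diffusion coefficient $\sigma\sqrt{1+\widehat z_t}=\sigma\sqrt{1+\tfrac{\sigma^2}{g}D(t)}$ must be real and bounded, i.e.\ $1+\tfrac{\sigma^2}{g}D(t)>0$ on $[0,T]$. Substituting the closed form~\eqref{eq-D}, isolating $e^{\theta(T-t)}$ and using $\coth^{-1}(x)=\tfrac12\log\tfrac{x+1}{x-1}$, this inequality for all $t\in[0,T]$ becomes precisely $T<T_{max}$ in the case $H>0$; when $H\le 0$ I would appeal instead to Remark~\ref{rem:sign-D} plus continuity of $D$ on the compact interval to get the bound for free. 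Once this holds, $\widehat z$ is a bounded deterministic function, so~\eqref{q_hat_pb1} is a linear SDE with bounded deterministic coefficients; it has a unique strong solution $\widehat q$, which is Gaussian with $\sup_{t\le T}\mathbb E[\widehat q_t^2]<\infty$, and since $\widehat u_t$ is affine in $\widehat q_t$ (with bounded deterministic coefficients) condition (i) of Definition~\ref{adm-case1} is immediate. For (ii), $\delta_t=\widehat u_t/(\sigma\sqrt{1+\widehat z_t})$ again satisfies a linear SDE with bounded coefficients, so $\int_0^T\delta_t^2\,dt$ has exponential moments and the exponential density is a true martingale (equivalently, a Beneš-type linear-growth criterion applies); hence $\mathbb Q^{\widehat u,\widehat z}\sim\mathbb P$ and $h_T=(s_0-a\widehat q_T)^2\in L^1(\mathbb P)\cap L^1(\mathbb Q^{\widehat u,\widehat z})$ by Gaussian integrability under either measure. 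For (iii), under $\mathbb Q^{\widehat u,\widehat z}$ the process $q$ is a martingale, $\varphi$ solves~\eqref{price-PDE1}, and Itô gives $h^{\widehat u,\widehat z}_t=\varphi(t,\widehat q_t)=\mathbb E^{\mathbb Q^{\widehat u,\widehat z}}[h_T]+\int_0^t\varphi_q(s,\widehat q_s)\,dq_s$, so $\widehat\Delta_t=\varphi_q(t,\widehat q_t)=2a(a\widehat q_t-s_0)$, with the integrability in (iii) once more a Gaussian estimate. Reading off $\varphi$ yields~\eqref{h-price}, and $\widehat\Delta_t=\varphi_q(t,\widehat q_t)$ is the hedging formula.

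\textbf{The optimality inequality.} For an arbitrary $(u,z)\in\mathcal A$ with state $q$, I would use the rewriting of the objective obtained before~\eqref{maxJ}: since $A\equiv a^2$ and $B\equiv-2as_0$ in Ansatz~\ref{ans-case1} are control-independent, $\varphi_q(t,q)=2a(aq-s_0)$ is control-free and $\tilde J^\lambda(u,z)=\mathbb E\big[\int_0^T\ell(t,q_t,u_t,z_t)\,dt\big]$ with $\ell(t,q,u,z)=q(s_0-aq)-\tfrac g2 z^2-\tfrac\kappa2 u^2-2\lambda a(aq-s_0)u$. Applying Itô to $v^\lambda(t,q_t)$ and using the ODEs~\eqref{eqD}--\eqref{eq-F}, the drift of $t\mapsto v^\lambda(t,q_t)+\int_0^t\ell(s,q_s,u_s,z_s)\,ds$ equals $v^\lambda_t+\ell+u\,v^\lambda_q+\tfrac{\sigma^2}{2}(1+z)v^\lambda_{qq}$ along $(u,z)$, which by~\eqref{HJB-case1} is $\le 0$ for \emph{every} admissible value of the controls; hence that process is a local supermartingale. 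A standard localization, using the moment bounds (i)--(iii) and the uniform $L^2$ control of $q$ (which dominates the quadratic $v^\lambda(t,q_t)$ and the $dW$-integral with integrand affine in $q$), upgrades it to a true supermartingale, giving $\mathbb E[v^\lambda(T,q_T)]+\tilde J^\lambda(u,z)\le v^\lambda(0,q_0)$; and $v^\lambda(T,\cdot)\equiv 0$. For the candidate, the first-order conditions~\eqref{opt-uz-2} make $(\widehat u,\widehat z)$ the pointwise Hamiltonian maximizers, so the HJB inequality is an equality along $(\widehat u,\widehat z)$ and the process is a true martingale; hence $\tilde J^\lambda(\widehat u,\widehat z)=v^\lambda(0,q_0)$, which closes the verification. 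Solving the linear SDE for $\widehat q$ by variation of constants with $R(t)=\int_0^t\tfrac2\kappa D(s)\,ds$ produces~\eqref{q_hat_pb1}.

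\textbf{Where the difficulty lies.} The algebra is routine; the real work is the stochastic-analysis bookkeeping. The two points I expect to take effort are: (i) proving $\mathbb Q^{\widehat u,\widehat z}$ is a genuine probability measure and that $h_T$ remains integrable after the change of measure; and (ii) the localization/uniform-integrability step turning the local supermartingale into a supermartingale \emph{uniformly over all} $(u,z)\in\mathcal A$, where the quadratic growth of $v^\lambda$ in $q$ must be matched against the square-integrability built into Definition~\ref{adm-case1}. A minor subtlety to keep in mind when comparing a general admissible control to the candidate: the hedge entering the rewritten running cost is the one postulated in (iii), and one should make sure that the control-free form $\varphi_q(t,q_t)=2a(aq_t-s_0)$ is indeed the relevant integrand for the comparison (it is for the candidate, where $\widehat z$ is deterministic).
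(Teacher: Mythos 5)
Your proposal is correct and follows essentially the same route as the paper: admissibility reduces to checking $1+\tfrac{\sigma^2}{g}D(t)>0$ on $[0,T]$ (whence the condition $T<T_{max}$ when $H>0$), the integrability conditions follow from Gaussianity of $\widehat q$, condition (iii) is verified by applying It\^o's formula to $\varphi(t,\widehat q_t)$, and optimality is obtained via the martingale optimality principle applied to $Y^{u,z}_t=\int_0^t \ell(s,q_s,u_s,z_s)\,ds+v^\lambda(t,q_t)$, exactly as in the paper. The only minor differences are that the paper invokes Rydberg's theorem rather than a Novikov/Bene\v{s}-type criterion to establish that $\mathbb Q^{\widehat u,\widehat z}$ is a genuine probability measure (both work here), and that for $H\le 0$ the clean argument is simply that $\coth(\theta T/2)>0\ge\frac{2\sigma^2 a}{\theta g}H$ holds automatically — your appeal to Remark \ref{rem:sign-D} plus continuity of $D$ would not by itself yield the required bound $D(0)>-g/\sigma^2$.
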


\begin{proof}
The proof is structured in two steps.\begin{enumerate}[wide, labelwidth=!, labelindent=0pt]
\item \emph{Admissibility}. Let us verify that the pair $(\widehat u,\widehat z)$ given in the statement above belongs to $\mathcal A$. We start from condition (i) in Definition \ref{adm-case1}. First, $\widehat u, \widehat z$ are trivially progressively measurable and real valued. We need to check $\widehat z_t > -1$, which is equivalent to $\frac{\sigma^2}{g}D(t) > -1$. We distinguish two cases: if $a \le \frac{2\lambda^2a^4}{\kappa}$ we have $D(t) \ge 0$ (this is a consequence of Remark \ref{rem:sign-D} or, alternatively, the explicit formula \eqref{eq-D}), hence $\widehat z_t >-1$ for all $t \in [0,T]$. On the other hand, if $a > \frac{2\lambda^2a^4}{\kappa}$, it follows from expression \eqref{eq-D} that $D(t)$ is nondecreasing with $D(T)=0$. Therefore, it suffices to check that $D(0)> -g/\sigma^2$, where
\[ D(0)= -\frac{2 (a - \frac{2\lambda^2a^4}{\kappa})}{\theta \coth (\frac{\theta T}{2}) + \frac{4\lambda a^2}{\kappa}}.\]
After some computation, we obtain that $D(0)> -g/\sigma^2$ if and only if
\[  \coth \left(\frac{\theta T}{2}\right) > \frac{2\sigma^2 a}{\theta g} H,\]
where $H$ is the constant defined in the statement. Now, if $H <0$ the inequality above is always satisfied as the LHS above is nonnegative. If $H>0$, the inequality above is guaranteed by the condition $T < T_{max}$.
We can conclude that even in this second case, provided $T< T_{max}$, we have $\widehat z_t >-1$ for all $t\in [0,T]$. Regarding the integrability properties, we verify now that
\begin{equation}\label{square-int-sol} \mathbb E\left[\int_0 ^T (\widehat u_t ^2 + \widehat z^2 _t)dt\right] < \infty, \quad \mathbb E\left[\int_0 ^T \widehat q_t ^2 (1 + \widehat z_t) dt\right] < \infty, \end{equation}
where $\widehat q = q^{\widehat u,\widehat z}$. Since $\widehat u$ is affine in $\widehat q$ with continuous time-dependent coefficients and $\widehat z$ is deterministic and continuous in $t$, checking the properties above boils down to show
\[ \mathbb E\left[\int_0 ^T \widehat q_t ^2 dt\right] < \infty.\]
First, we use Fubini's theorem to get $\mathbb E [\int_0 ^T \widehat q_t ^2 dt ] = \int_0 ^T \mathbb E[\widehat q_t ^2] dt$. Moreover, since $\widehat q_t$ is a Gaussian random variable for any fixed $t$ (see Remark \ref{explicit-q} below), the function $t \mapsto \mathbb E[\widehat q_t ^2]$ is continuous over $[0,T]$, so its integral is finite.  

\noindent Regarding condition (ii), we need to show that there exists a unique EMM $\widehat{\mathbb Q} = \mathbb Q^{\widehat u,\widehat z}$ for the production process $\widehat q$. Let us recall that
\[ \frac{d\widehat \Q}{d\P} = \exp\left\{  -\int_0 ^T \widehat \delta_t dW_t - \frac{1}{2}\int_0 ^T \widehat \delta_t ^2 dt \right\}, \quad \widehat \delta_t = \frac{\widehat u_t}{\sigma \sqrt{1+ \widehat z_t}}. \]
We use \cite[Theorem 2.1]{Rydberg97} to prove that under our assumptions the probability $\widehat \Q$ is well-defined (see also \cite{Ruf15} for more general results of the same type). According to that results, we need to check Assumption 2.2 in \cite{Rydberg97}, which in our case is satisfied as long as $\sigma^2 (1+\widehat z_t) > 0$ for all $t\in [0,T]$. By the same arguments used for condition (i), we get the result. A standard application of Girsanov theorem, together with the integrability properties in \eqref{square-int-sol}, yields immediately that $\widehat q$ is a martingale under $\widehat \Q$.

To end checking condition (ii), we have to show $h_T \in L^1(\mathbb P) \cap L^1(\widehat \Q)$. Now, $h_T = (s_0 -a\widehat q_T)^2$, hence quadratic in $\widehat q_T$. Since under both probability measures $\widehat q_T$ is a Gaussian random variable, we have $q_T ^2 \in  L^1(\mathbb P) \cap L^1(\widehat \Q)$, which gives the desired property.

\noindent We pass to condition (iii) in Definition \ref{adm-case1}. First, $\widehat \Delta$ is trivially a progressively measurable process with real values. For the integrability property, since both $\widehat u$ and $\widehat \Delta$ are linear in $\widehat q_t$, we are again reduced to the square integrability $\mathbb E[\int_0 ^T \widehat q_t ^2 dt]<\infty$, which has been proved just before. 

To conclude this part of the proof, it remains to check that, given $(\widehat u, \widehat z)$ as above, $\widehat h_t := h^{\widehat u,\widehat z}_t = \E^{\widehat \Q} [ h_T| \mathcal F_t] = \E^{\widehat \Q} [ h_T]   + \int_0 ^t \widehat \Delta_s dq_s$ a.s. under $\widehat \Q$, for all $t \in [0,T]$. This can be done by direct computation as follows: applying It\^o's formula to $\widehat h_t$ in \eqref{h-price} we get
\[ d\widehat h_t = -2a(s_0 -a\widehat q_t)d\widehat q_t \]
whence, in integral form, 
\[\widehat h_t = \widehat h_0 -2a \int_0 ^t  (s_0 -a\widehat q_s)d\widehat q_s = \widehat h_0 + \int_0 ^t \widehat \Delta_s d\widehat q_s.\] 
Moreover, one easily find
\[ \widehat{\mathbb E} [h_T] = \widehat{\mathbb E} [(s_0 -a\widehat q_T)^2] = s_0^2 - 2as_0 q_0 + a^2 \widehat{\mathbb E}[\widehat q_T ^2],\]
where $\widehat{\mathbb E}$ denotes the expectation with respect to the measure $\widehat \Q$.
Using It\^o's isometry, we also have 
\[ \widehat{\mathbb E}[\widehat q_T ^2]  =  q_0 ^2 +  \int_0 ^T \sigma^2 \left(1+\frac{\sigma^2}{g}D(t)\right) dt,\]
which leads to the remaining property in (iii).

\item \emph{Optimality}. To check the optimality condition, we are going to use the martingale optimality principle (see \cite{elkaroui-stflour}). Let us define the process
\begin{equation}\label{def-Y}
Y^{u,z}_t := \int_0 ^t  \left(q_r (s_0 -aq_r) - \frac{g}{2}z_r ^2 - \frac{\kappa}{2}u_r ^2 - \lambda \widehat \Delta_r u_r \right)dr + V(t,q_t),
\end{equation}
with
\[ V(t,q) = D(t)q^2 + E(t)q + F(t).\]
Thanks to the martingale optimality principle, proving that $Y^{u,z}$ is a supermartingale for all $(u,z) \in \mathcal A$ and a martingale for $(u,z)=(\widehat u,\widehat z)$, will give us the result. It\^o's formula yields
\begin{align} 
dY^{u,z}_t =& \left[q_t (s_0 -aq_t) - \frac{g}{2}z_t ^2 - \frac{\kappa}{2}u_t ^2 - \lambda \widehat \Delta_t u_t + V_t + V_q u_t + \frac{1}{2}V_{qq}\sigma^2 (1+ z_t) \right]dt \nonumber \\
& + V_q \sigma \sqrt{1+z_t} dW_t, \label{Y-dyn}
\end{align}
where $V_t,V_q,V_{qq}$ denote partial derivative of the function $V(t,q)$. We have omitted the dependence on $(t,q)$ for sake of simplicity. First, notice that due to the integrability properties in Definition \ref{adm-case1}(i)  the process $\int_0 ^t V_q \sigma \sqrt{1+z_r} dW_r$ is a true $\mathbb P$-martingale. Therefore, it remains to show that the $dt$-part in \eqref{Y-dyn} above is a nonincreasing process, that is it is lower or equal to zero almost everywhere. By construction (see heuristics), we know that the function $V(t,q)$ satisfies the HJB equation \eqref{HJB-case1}, with
\[ \varphi_q (t,q) = 2a^2 q -2a s_0 ,\]
hence $\varphi_q (t,q_t) = \widehat \Delta_t$ for all $t$. We can conclude that the drift in \eqref{Y-dyn} is lower or equal to zero a.e., yielding that $Y^{u,z}$ is a supermartingale for all $(u,z) \in \mathcal A$. To show that $Y^{\widehat u,\widehat z}$ is a martingale, one proceeds in the same way getting equalities instead of inequalities. In particular one gets that the drift in \eqref{Y-dyn} is equal to zero a.e., implying the martingale property.
\end{enumerate}
Finally, we can solve for $\widehat q_t$ in explicit form, via standard resolution methods, since it is the unique solution of the following linear SDE
	\[ d\widehat q_t = \frac{1}{\kappa}\left( 2\widehat q_t (D(t)-\lambda a^2 ) + 2as_0 \lambda +E(t) \right) dt + \sigma \sqrt{1+\frac{\sigma^2}{g} D(t)} dW_t , \quad \widehat q_0 = q_0.\] 
\end{proof}

\begin{Remark}\label{explicit-q}
{\rm 
Notice, from Equation \eqref{q_hat_pb1}, that $\widehat q_t$ is a Gaussian random variable with time-dependent mean and variance.
}
\end{Remark}

\section{Production and information based manipulation}
\label{sec:model2}

In this section we describe and solve explicitly a variant of the model presented before. We still have a producer of a commodity, who is maximizing her profit coming from both production and a short/long position in some derivative. The main differences are that the market price of the commodity is no longer a constant as it is driven by the Brownian motion $W$, that in turn does not affect the production rate anymore and, finally, the producer can directly control the volatility of the market price (by spreading false information on the state of his production, for instance). Thus, the dynamics of the state variables is now given by
\begin{equation}\label{eq:sq}
\left\{
\begin{array}{rcl}
dS_t & = & \mu \, dt +  \sigma \, \sqrt{1+z_t} \, dW_t, \\
dq_t & = & u_t \, dt.
\end{array}
\right.
\end{equation}
Moreover in this model the market price $\tilde S$ is given by $\tilde S_t = S_t -aq_t$, $t \in [0,T]$. The objective of the producer is the same as before, i.e.
\be
\label{eq:v2}
J^\lambda (u,z,h_0) := \Esp{ \int_0^{T} \left( P_t   - \frac{\kappa}{2}u^2 _t - \frac{g}{2}z^2 _t \right) dt + \lambda \big( h_0 - h_T\big)  }.
\ee
 Analogously to the previous model, we will be working with the following definition of admissible policies, which admits the same interpretation as before.
\begin{Definition} 
\label{adm-case2} 
We say that any pair $(u,z)$ is \emph{admissible} if the following properties are satisfied:
\begin{enumerate}
\item[(i)] $(u_t ,z_t )_{t \in [0,T]}$ are progressively measurable processes with values in $\mathbb R \times (-1,\infty)$ such that
\[ \mathbb E\left[\int_0 ^T (u_t ^2 + z^2 _t)dt\right] < \infty, \quad \mathbb E\left[\int_0 ^T \tilde S_t ^2 (1 + z_t) dt\right] < \infty; \]
\item[(ii)] there exists a unique equivalent martingale measure $\mathbb Q^{u,z}$ for the price process $\tilde S$, with $h_T \in L^1 (\mathbb P) \cap L^1 (\mathbb Q^{u,z})$; 
\item[(iii)] there exists a real-valued progressively measurable process $(\Delta_t^{u,z})_{t\in [0,T]}$ satisfying
\[ 
\mathbb E\left[ \int_0^T \left( |\Delta_t^{u,z} u_t| + |\Delta_t^{u,z} |^2 (1+z_t) \right) dt \right] < \infty,
\]
and such that the following holds $\mathbb Q^{u,z}$-a.s.
\[ h_t^{u,z} := \E^{\Q^{u,z}} [ h_T| \mathcal F_t] = \E^{\Q^{u,z}} [ h_T]   + \int_0 ^t \Delta_s^{u,z} d \widetilde S_s ,\]
for all $t \in [0,T]$. \end{enumerate}
The set of all admissible pairs will be denoted by $\mathcal A$. 
\end{Definition}

Analogously as in the previous model, from $\tilde S$'s dynamics we have that the measure $\Q^{u,z}$, whose existence is postulated in (ii) above, is necessarily given by the following Radon-Nikodym derivative
\[ \frac{d\Q^{u,z}}{d\P} = \exp\left\{  -\int_0 ^T \gamma_t dW_t - \frac{1}{2}\int_0 ^T \gamma_t ^2 dt \right\}, \quad \gamma_t = \frac{\mu - au_t}{\sigma \sqrt{1+ z_t}}. \]
Before giving the final formulation of the producer's optimization problem in this model as well, we can exploit the admissibility properties above to rewrite the objective functional \eqref{eq:v2} as follows
\[ J^\lambda (u,z,h_0) = \Esp{ \int_0^{T} \left( P_t   - \frac{\kappa}{2}u^2 _t - \frac{g}{2}z^2 _t - \lambda \Delta_t^{u,z} (\mu -au_t) \right) dt} = : \tilde J^\lambda (u,z) . \]
Finally, the producer's optimization problem, that we are going to solve in the next sub-section, is given by
\begin{equation}\label{maxJ-case2}
\sup_{(u,z) \in \mathcal A} \tilde J^\lambda (u,z).
\end{equation}

\subsection{Heuristics} 
In this sub-section we describe the heuristics that led us to propose some candidate solution. It follows the same lines as in the first model. Being the market complete, there exists a unique martingale measure $\mathbb Q$ under which $\widetilde S$ is a martingale. Indeed we have
\begin{eqnarray*}
d \widetilde S_t &=& (\mu - a u_t) dt + \sigma \sqrt{1+z_t} d W_t \\
& = & (\mu - a u_t - \gamma_t \sigma \sqrt{1+z_t}) dt + \sigma \sqrt{1+z_t} d W_t^{\mathbb Q},
\end{eqnarray*}
where $W^{\mathbb Q}$ is a $\mathbb Q$-Brownian motion and by choosing $\gamma_t = \frac{\mu - a u_t}{\sigma \sqrt{1+z_t}}$, $\widetilde S$ is a $\mathbb Q$-martingale.
The price at time $t=0$ of the European claim $h_T$ equals $\mathbb E^{\mathbb Q} [  \widetilde S_T^2 ]$, so that we expect that $ h_T = h_0 + \int_0^T \Delta_t d \widetilde S_t $, where $\Delta$ is the corresponding delta hedging strategy. Hence, the difference $h_0 - h_T $ also reads
\begin{equation}\label{DeltaH}
h_0 - h_T  = - \int_0^T \Delta_t d \widetilde S_t.
\end{equation}
The last quantity to be introduced is the financial claim's price, that we denote by $\varphi$: 
$$
\varphi(t,q,s) := \mathbb E^{\mathbb Q} \left[ \widetilde S_T^2 \vert q_t =q , S_t = s \right].
$$
\begin{rem}
{\rm Notice that the filtration generated by $q$ and $\widetilde S$ is the same as the one associated with $q$ and $S$, namely $\mathcal F_t^{\widetilde S} \vee \mathcal F_t^q = \mathcal F_t^{S} \vee \mathcal F_t^q $ for every $t \in [0,T]$. We conveniently choose to consider $q$ and $S$ as state variables.}
\end{rem}
We clearly expect $\varphi$ to solve the following PDE
\begin{equation}\label{P_Sstoch}\left\{
\begin{array}{rcl}
\varphi_t + \frac{\sigma^2}{2}  (1+z(t,q,s)) \varphi_{ss} & = & 0 \\
\varphi( T,q,s) & = & {(s-a q)}^2.
\end{array}
\right.
\end{equation}
Since we formally have $\Delta := \frac{\partial \varphi}{ \partial \widetilde s} = \frac{\partial \varphi}{ \partial s} \frac{\partial s}{ \partial \widetilde s} =  \frac{\partial \varphi}{ \partial s}$, using the dynamics $\widetilde S$ under $\mathbb P$ together with \eqref{DeltaH} we find that the value function $v^{\lambda}$ satisfies
\begin{equation*}\label{v_Sstoch}
v^{\lambda}(0,q_0,s_0) = \sup_{u,z} \ \mathbb E_{q_0,s_0} \left [  \int_0^T \left(  q_t (S_t - a q_t) - \frac{g}{2} z_t^2 - \frac{\kappa}{2} u_t^2 - \lambda \varphi_s (t,q_t, S_t) (\mu - a u_t) \right) dt   \right ]. 
\end{equation*}
The value function $v^{\lambda}$ is solution to the following HJB equation, which depends on $\varphi$ (satisfying \eqref{P_Sstoch})
\begin{equation}\label{HJB_Sstoch}
\left\{
\begin{array}{rcl}
\displaystyle - v_t &=& \sup_{u,z} \big\{ q(s- a q) - \frac{g}{2} z^2 - \frac{\kappa}{2} u^2 - \lambda \varphi_s (\mu - a u ) + v_q u + v_s \mu \\
& & + \frac{ \sigma^2}{2} (1+z) v_{ss} \big\} \\
v( T,q,s) & = & 0
\end{array}
\right.
\end{equation}

\begin{ans}\label{ans-case2}
We guess the value function $v$ and the price $\varphi$ have the following form
\begin{eqnarray*}
v(t,q,s) &=&  A(t) q^2 + B(t) s^2 + C(t) qs + D(t) q + E(t) s + F(t), \\
\varphi(t,q,s) &=&  \bar A(t) q^2 + \bar B(t) s^2 + \bar C(t) qs + \bar D(t) q + \bar E(t) s + \bar F(t).
\end{eqnarray*}
\end{ans}

The first order conditions on the \eqref{HJB_Sstoch} lead us to the candidate optimal controls
\begin{equation*}
\widehat z = \frac{\sigma^2}{2 g} v_{ss} \, , \qquad   \widehat u = \frac{1}{\kappa} \left( a\lambda \varphi_s + v_q \right),
\end{equation*}
and thus, using the ansatz above,
\begin{equation}\label{hat_uz}
\widehat z_t = \frac{\sigma^2}{g} B(t) \qquad   \widehat u_t = \frac{1}{k} \left\{  \lambda  a  \left[ 2 \bar B(t) s + \bar C(t) q + \bar E(t)  \right] + 2 A(t) q + C(t) s + D(t)  \right\},
\end{equation}
for all $t \in [0,T]$.

\paragraph{Solving for $\varphi$.}
We can now explicitly find $\varphi$ by exploiting the Ansatz \ref{ans-case2} and replacing the control pair $(\widehat u, \widehat z)$ into \eqref{P_Sstoch}. We find:
\begin{equation*}
\left\{
\begin{array}{rcl}
\bar A(t) & \equiv & a^2\\
\bar B(t) & \equiv & 1 \\
\bar C(t) & \equiv & - 2 a \\
\bar D(t) & \equiv & 0 \\
\bar E(t) & \equiv & 0 \\
\bar F(t) & = & \displaystyle \int_t^T \sigma^2 \left[  1+\frac{\sigma^2}{g} B(u) \right] du \\
\end{array}
\right.
\end{equation*}

\paragraph{Solving for $v^{\lambda}$.}
We proceed in the same way as above to find the value function $v^{\lambda}$. We find the following systems of ODEs
\begin{align} \label{riccati-case2}
A'(t) & =  a - \frac{2}{\kappa} {(A(t) - \lambda a^2)}^2 \\
B'(t) & =  -   \frac{1}{2\kappa} {(2 \lambda a + C(t))}^2 \\
C'(t) & = -1-\frac{2}{\kappa} (2 \lambda a + C(t))  (A(t) - \lambda a^2)\\
D'(t) & =     - \frac{2}{k} D(t) \big( A(t) - a^2 \lambda \big) - \mu \big( C(t) + 2 a \lambda\big) \\
E'(t) & =   - \frac{1}{k} D(t) \big(C(t) + 2 a \lambda \big) - 2 \mu ( B(t) - \lambda)  \\
F'(t) & =   -\frac{1}{2} \frac{\sigma^4 B(t)^2}{g} - \frac{D(t)^2}{2\kappa} - \mu E(t) - \sigma^2 B(t)  \label{F-2}
\end{align}
with null terminal conditions $A(T)=\cdots = F(T)=0$.

\begin{rem} \label{rem:model2}
{\rm Notice, in particular, that $B$ is a positive decreasing function of time, which implies (recall Equation \eqref{hat_uz}) that the control $\widehat z$ is always positive. This means that there is always interest in increasing the market price volatility, even if the producer buys the derivative.}
\end{rem}

\subsection{Verification}
\begin{Theorem}\label{verif-case2}
Let $A,B,C,D$ be solutions to the system \eqref{riccati-case2}-\eqref{F-2}. There exists an optimal policy $(\widehat u, \widehat z) \in \mathcal A$ for problem \eqref{maxJ-case2}, where 
\begin{equation}
\label{opt-uz-case2} 
\widehat u_t = \frac{1}{\kappa}\left[ (2\lambda a + C(t)) \widehat S_t  + 2(A(t) -\lambda a^2) \widehat q_t + D(t) \right], \quad \widehat z_t = \frac{\sigma^2}{g}B(t), \quad t\in [0,T].\end{equation}
The no-arbitrage price process for the derivative $h_T$ is given by
\begin{equation}\label{h-price-case2} 
\widehat h_t := h_t^{\widehat u, \widehat z}= (\widehat S_t -a\widehat q_t)^2 + \sigma^2  \int_t ^T \left( 1+ \frac{\sigma^2}{g}B(u)\right) du, \quad t \in [0,T],
\end{equation}
and the hedging process is
\begin{equation}
\widehat \Delta _t := \Delta_t^{\widehat u, \widehat z} = 2 (\widehat S_t -a\widehat q_t), \quad t\in [0,T].
\end{equation}
\end{Theorem}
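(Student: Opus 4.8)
The plan is to mirror the two-step structure of the proof of Theorem~\ref{verif-case1}: first show that the candidate pair $(\widehat u,\widehat z)$ is admissible, then establish its optimality via the martingale optimality principle. First I would record that the coupled system \eqref{riccati-case2}--\eqref{F-2} admits a solution on the whole of $[0,T]$: the equation for $A$ is, up to a sign, the scalar Riccati equation already handled in Remark~\ref{rem:sign-D} through Lemma~10.12 of \cite{Filipovic09}, and once $A$ is known, $C$ solves a linear ODE and then $B,D,E,F$ are obtained by quadrature, all with continuous coefficients.

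\textbf{Admissibility.} For Definition~\ref{adm-case2}(i), progressive measurability and real-valuedness are immediate. Contrary to the first model, the constraint $\widehat z_t=\frac{\sigma^2}{g}B(t)>-1$ needs no smallness assumption on $T$: since $B'=-\frac{1}{2\kappa}(2\lambda a+C)^2\le 0$ with $B(T)=0$, we have $B\ge 0$ on $[0,T]$ (Remark~\ref{rem:model2}), hence $\widehat z_t\ge 0$. The control $\widehat u_t$ in \eqref{opt-uz-case2} is affine in $(\widehat q_t,\widehat S_t)$ with continuous deterministic coefficients, and $(\widehat q,\widehat S)$ solves the closed linear SDE $d\widehat q_t=\widehat u_t\,dt$, $d\widehat S_t=\mu\,dt+\sigma\sqrt{1+\widehat z_t}\,dW_t$; hence $(\widehat q_t,\widehat S_t)$ is Gaussian with second moments continuous in $t$, and all integrability requirements in (i) and (iii) reduce via Fubini to $\mathbb E[\int_0^T(\widehat q_t^2+\widehat S_t^2)\,dt]<\infty$, which then holds. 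For (ii), the only candidate EMM has the stated density with $\widehat\gamma_t=(\mu-a\widehat u_t)/(\sigma\sqrt{1+\widehat z_t})$; \cite[Theorem~2.1]{Rydberg97} applies because $\sigma^2(1+\widehat z_t)>0$ on $[0,T]$, and Girsanov together with the square-integrability of the state gives that $\widetilde S$ is a true $\widehat\Q$-martingale with $d\widetilde S_t=\sigma\sqrt{1+\widehat z_t}\,dW_t^{\widehat\Q}$, so $\widetilde S_T=\widehat S_T-a\widehat q_T$ is Gaussian under both measures and $h_T=\widetilde S_T^2\in L^1(\P)\cap L^1(\widehat\Q)$. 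For (iii), $\widehat\Delta_t=2(\widehat S_t-a\widehat q_t)$ is progressively measurable with the required integrability, and the martingale representation is checked by applying Itô to \eqref{h-price-case2}: since $\widehat q$ has no martingale part, $d(\widehat S_t-a\widehat q_t)^2=2(\widehat S_t-a\widehat q_t)\,d(\widehat S_t-a\widehat q_t)+\sigma^2(1+\widehat z_t)\,dt$, and the $dt$-term cancels the derivative of $\int_t^T\sigma^2(1+\frac{\sigma^2}{g}B(u))\,du$, leaving $d\widehat h_t=\widehat\Delta_t\,d\widetilde S_t$; finally Itô's isometry yields $\widehat h_0=\mathbb E^{\widehat\Q}[\widetilde S_T^2]=(s_0-aq_0)^2+\int_0^T\sigma^2(1+\frac{\sigma^2}{g}B(t))\,dt$.

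\textbf{Optimality.} Since $\bar B\equiv 1$, $\bar C\equiv-2a$ and the other barred coefficients vanish, $\varphi_s(t,q,s)=2(s-aq)$, so along any admissible state the hedge $\Delta_t^{u,z}=2(S_t-aq_t)$ is independent of the control. Put $V(t,q,s)=A(t)q^2+B(t)s^2+C(t)qs+D(t)q+E(t)s+F(t)$, so $V(T,\cdot)=0$, and set
\[ Y_t^{u,z}:=\int_0^t\!\Big(q_r(S_r-aq_r)-\tfrac{g}{2}z_r^2-\tfrac{\kappa}{2}u_r^2-2\lambda(S_r-aq_r)(\mu-au_r)\Big)dr+V(t,q_t,S_t). \]
Itô's formula gives $dY_t^{u,z}=(\text{drift})\,dt+V_s\,\sigma\sqrt{1+z_t}\,dW_t$; the stochastic integral is a true $\P$-martingale by Definition~\ref{adm-case2}(i), and, because $V$ solves the HJB equation \eqref{HJB_Sstoch} by construction of \eqref{riccati-case2}--\eqref{F-2}, the drift equals $V_t$ plus the bracket maximised in \eqref{HJB_Sstoch} evaluated at $(u_t,z_t)$, hence is $\le 0$ for every $(u,z)\in\mathcal A$ and $=0$ for $(\widehat u,\widehat z)$. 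Consequently $Y^{u,z}$ is a supermartingale and $Y^{\widehat u,\widehat z}$ a martingale; taking expectations and using $V(T,\cdot)=0$ gives $\tilde J^\lambda(u,z)\le V(0,q_0,s_0)=\tilde J^\lambda(\widehat u,\widehat z)$, which proves optimality and identifies $v^\lambda(0,q_0,s_0)$ with $V(0,q_0,s_0)$.

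I do not expect a genuinely hard step: this model is actually more forgiving than the first, since the positivity of $B$ makes the constraint $\widehat z_t>-1$ hold automatically with no restriction on the horizon. The points needing some care are the global solvability of the coupled Riccati system on $[0,T]$ (which reduces to the scalar case already treated) and the Itô bookkeeping in (iii), where the cancellation of the $\sigma^2(1+\widehat z_t)\,dt$ term against $\bar F'(t)$ is exactly what makes $\widehat h$ a $\widehat\Q$-martingale.
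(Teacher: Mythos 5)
Your proof is correct and follows essentially the same two-step route (admissibility, then the martingale optimality principle with a two-dimensional state) as the paper's own argument. The only cosmetic difference is in establishing $\mathbb E[\int_0^T \widehat q_t^{\,2}\,dt]<\infty$: you invoke joint Gaussianity of the linear system $(\widehat q,\widehat S)$, whereas the paper runs a Gronwall estimate on $t\mapsto\mathbb E[\widehat q_t^{\,2}]$; both are valid.
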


\begin{proof}
The proof is very similar to the previous model (cf. Theorem \ref{verif-case1}), hence we give details only for those steps which are slightly different.
\begin{enumerate}[wide, labelwidth=!, labelindent=0pt] 
\item \emph{Admissibility.}
First, we observe that checking the admissibility property (i), namely  $\widehat z_t > -1$ for all $t\in [0,T]$, is actually easier as no conditions on small $T$ are needed. Indeed, $\widehat z_t > -1$ if and only if $B(t) > - g/\sigma^2$, where $B$ solves the corresponding equation in the system \eqref{riccati-case2}-\eqref{F-2}. The latter inequality is satisfied since, being $B'(t) \le 0$ and $B(T) =0$, we have $B(t) \ge 0$ for all $t\in [0,T]$.

Regarding the integrability properties in (i), checking them is equivalent to show that
\[ \mathbb E\left[\int_0 ^T \widehat q_t ^2 dt\right] < \infty, \quad \mathbb E\left[\int_0 ^T \widehat S_t ^2 dt\right] < \infty.\]
As for the condition on $\widehat S$, observe that under $\mathbb P$ the process $\widehat S$ satisfies 
$
d \widehat S_t = \mu dt + \sigma \sqrt{1 + \frac{\sigma^2}{g} B(t)} dW_t,
$
namely it is a Gaussian process. So, as previously noticed in the proof of Theorem \ref{verif-case1}, we apply first of all  Fubini's theorem and then we remark that the function $t \mapsto \mathbb E[\widehat S_t ^2]$ is continuous over $[0,T]$ and its integral is finite.  
We now work on $\mathbb E[\widehat q_t ^2]$, which is more delicate, since $d \widehat q_t = \widehat  u_t dt$ and $\widehat u$ depends also on $\widehat S$ (see \eqref{opt-uz-case2}). We have 
$$
\widehat q_t = q_0 +\frac{1}{\kappa} \int_0^t (2\lambda a + C(s))  \widehat S_s  ds + \frac{2}{\kappa} \int_0^t (A(s) -\lambda a^2) \widehat q_s ds + \frac{1}{\kappa} \int_0^t D(s) ds  ,
$$
and so
\begin{eqnarray*}
\mathbb E[\widehat q_t ^2] & \le & 3 \left( q_0 + \frac{1}{\kappa} \int_0^t D(s) ds \right)^2 +  \frac{3}{\kappa^2} \mathbb E \left( \int_0^t (2\lambda a + C(s))  \widehat S_s  ds \right)^2 \\
&& +  \frac{12}{\kappa^2} \mathbb E \left( \int_0^t (A(s) -\lambda a^2) \widehat q_s ds \right)^2   \\
& \le & \underbrace{3 \left( q_0 + \frac{1}{\kappa} \int_0^t D(s) ds \right)^2 +  \frac{3  t }{\kappa^2} \int_0^t [2\lambda a + C(s)]^2 \mathbb E[ \widehat S_s^2]  ds }_{:= \alpha(t)} \\
&& +  \frac{12 t}{\kappa^2}  \int_0^t [A(s) -\lambda a^2]^2 \mathbb E[\widehat q_s^2] ds  .
\end{eqnarray*}
Now, $\mathbb E[ \widehat S_s^2]$ is positive and finite and so we can safely introduce the positive continuous function $\alpha$ as above and write
\begin{eqnarray*}
\mathbb E[\widehat q_t ^2] & \le & \alpha(t) +  \frac{12 t}{\kappa^2}  \int_0^t [A(s) -\lambda a^2]^2 \mathbb E[\widehat q_s^2] ds  .
\end{eqnarray*}
An application of Gronwall's lemma leads to
$$
\mathbb E[\widehat q_t ^2] \le \alpha(t) +\int_0^t \alpha(s) K(s) e^{\int_s^t K(u) du} ds,
$$
with $K(u) := \frac{12 t}{\kappa^2} [A(u) -\lambda a^2]^2 >0$. So, $t \rightarrow\mathbb E[\widehat q_t ^2] $ is bounded by a continuous function and its integral over $[0,T]$ is finite.

\noindent Regarding condition (ii), we proceed again as in the proof of Theorem \ref{verif-case1} by checking Assumption 2.2 in \cite{Rydberg97}, which in our case is satisfied as long as $\sigma^2 (1+\widehat z_t) > 0$ for all $t\in [0,T]$. This is automatically true, since here $B(t) \ge 0 $ for all $t \in [0,T]$ and so $\sigma^2 (1 + \widehat z_t) =\sigma^2 ( 1 + \frac{\sigma^2}{g} B(t))  \ge \sigma^2 >0$.
To end checking condition (ii), we have to show $\widehat h_T \in L^1(\mathbb P) \cap L^1(\widehat \Q)$. Now, $\widehat h_T = (\widehat S_T -a\widehat q_T)^2$, hence quadratic in both $\widehat S_t$ and in $\widehat q_T$. For what we have seen up to now $q_T ^2$ and $ \widehat S_T$ belong to $  L^1(\mathbb P)$ and they also belong to $ L^1(\widehat \Q)$, which gives the desired property.

It remains to check (iii). We proceed again as in the previous theorem, except that now $\widehat h_t$ in \eqref{h-price-case2} is a function of both $\widehat S$ and $\widehat q$. First, notice that
\[\widehat h_t = \widehat h_0 + 2 \int_0 ^t  (\widehat S_s - a \widehat q_s) (d \widehat S_s - a d \widehat q_s)= \widehat h_0 + \int_0 ^t \widehat \Delta_s d\widetilde S_s.\] 
Now, since $ \widehat h_T = (s_0 - a q_0)^2 + 2 \int_0^T (\widehat S_s - a \widehat q_s) \sigma \sqrt{1 + \widehat z_s} d \widehat W_s$ we have
\begin{eqnarray*}
\widehat {\mathbb E} [h_T \vert \mathcal F_t] &=& (s_0 - a q_0)^2 + 2 \int_0^t (\widehat S_s - a \widehat q_s) \sigma \sqrt{1 + \widehat z_s} d \widehat W_s\\
& &  + 2 \ \widehat {\mathbb E} \left[  \int_t^T (\widehat S_s - a \widehat q_s) \sigma \sqrt{1 + \widehat z_s} d \widehat W_s \right] = \widehat h_t. 
\end{eqnarray*}
Finally, progressive measurability and integrability of $\widehat \Delta $ in condition (iii) in Definition \ref{adm-case2} can be treated exactly as in the proof of Theorem \ref{verif-case1}, using (i).

\item \emph{Optimality} This can be proved by proceeding exactly as in the proof of Theorem \ref{verif-case1}, by taking into account that now the state variable is two-dimensional.
\end{enumerate}
\vspace{-0.7cm}
\end{proof}

\section{Producer--trader competition}
\label{sec:model3}

In this section, we finally consider a game between a producer who can manipulate the commodity price through the drift and a trader who can manipulate the volatility of the price. Hence, the trader is no longer price-taker or passive as in the previous two models. Here, she can affect the price of commodity by paying some (quadratic) cost.

The dynamics for $S$ and $q$ are as in \eqref{eq:sq}. The controls are still given by $(u,z)$ with the big difference that now only $u$ is controlled by the producer, while $z$ is controlled by the trader. Clearly, the corresponding costs are allocated accordingly. When the strategy profile for both players is $(u,z)$ and the derivative price is $h_0$, the producer payoff is
\begin{equation}
J_{\rm pr}(u , z, h_0) = \E \left[  \int_0 ^T \left(q_t (S_t -aq_t)  - \frac{\kappa}{2}u_t ^2 \right)dt + \lambda (h_0 -h_T) \right].
\end{equation}
On the other hand, the trader who has the opposite position in the derivatives will get the payoff
\begin{equation}
J_{\rm tr} (u,z,h_0) = \E\left[ - \frac{g}{2} \int_0 ^T z_t ^2 dt - \lambda (h_0 -h_T) \right].
\end{equation}
Now, we can give the definition of admissible policies, including the strategy profiles of both players together with the pricing and hedging strategy of the investment department in the production firm. Notice that it is the same as for the model in Section \ref{sec:model2}. We recall that $\tilde S_t = S_t -aq_t$, for $t \in [0,T]$.

 \begin{Definition} \label{adm-case3}
We say that any pair $(u,z)$ is \emph{admissible} if the following properties are satisfied:
		\begin{enumerate}
\item[(i)] $(u_t ,z_t )_{t \in [0,T]}$ are progressively measurable processes with values in $\mathbb R \times (-1,\infty)$ such that
\[ \mathbb E\left[\int_0 ^T (u_t ^2 + z^2 _t)dt\right] < \infty, \quad \mathbb E\left[\int_0 ^T \tilde S_t ^2 (1 + z_t)dt\right] < \infty; \]
\item[(ii)] there exists a unique equivalent martingale measure $\mathbb Q^{u,z}$ for the price process $\tilde S$, with $h_T \in L^1 (\mathbb P) \cap L^1 (\mathbb Q^{u,z})$; 
\item[(iii)] there exists a real-valued progressively measurable process $(\Delta_t^{u,z})_{t\in [0,T]}$ satisfying
\[ 
\mathbb E\left[ \int_0^T \left( |\Delta_t^{u,z} u_t| + |\Delta_t^{u,z} |^2 (1+z_t) \right) dt \right] < \infty ,
\]
and such that the following holds $\mathbb Q^{u,z}$-a.s.
\[ h_t^{u,z} := \E^{\Q^{u,z}} [ h_T| \mathcal F_t] = \E^{\Q^{u,z}} [ h_T]   + \int_0 ^t \Delta_s^{u,z} dq_s , \] for all $t \in [0,T]$.\end{enumerate}
The set of all admissible pairs will be denoted by $\mathcal A$. 
\end{Definition}

On the other hand the definition of solution is slightly different, due to the fact that the trader can also play strategically in this model. Before proceeding, we exploit the definition of admissibility, conditions (ii) and (iii) in particular, to rewrite as in the previous two models the payoffs of both the trader and the producer as follows
\begin{eqnarray*}
J_{\rm pr}(u , z, h_0) &=& \E \left[  \int_0 ^T \left(q_t (S_t -aq_t)  - \frac{\kappa}{2}u_t ^2 -\lambda (\mu-au_t) \Delta_t^{u,z} \right)dt \right] =: \tilde J_{\rm pr}(u , z),\\
J_{\rm tr} (u,z,h_0) &=& \E\left[ \int_0 ^T \left(- \frac{g}{2}  z_t ^2 +\lambda (\mu-au_t) \Delta_t^{u,z} \right) dt \right] =: \tilde J_{\rm tr} (u,z),
\end{eqnarray*}
for any admissible pair $(u,z) \in \mathcal A$.

\begin{Definition}\label{def:eq-case3}
We say that the pair $(\widehat u, \widehat z) \in \mathcal A$ is a Nash equilibrium if it is a Nash equilibrium between the producer and the trader, i.e.,
\begin{equation}\label{nash}
\tilde J_{\rm pr}(\widehat u , \widehat z) \ge \tilde J_{\rm pr}(u , \widehat z), \quad \tilde J_{\rm tr}(\widehat u , \widehat z) \ge \tilde J_{\rm tr}(\widehat u , z),
\end{equation}
for all deviations $u,z$ such that $(u,\widehat z)$ and $(\widehat u,z)$ belong to $\mathcal A$.
\end{Definition}

\subsection{Heuristics} 
We want to compute explicitly a Nash equilibrium for the game between producer and trader, described just above. We start from some heuristics that would lead to some candidate equilibrium, while the rigorous verification is postponed to the next sub-section as usual. 

First, assuming $h_t = \varphi(t,q_t,S_t)$ and consequently $\Delta_t = \varphi_s (t,q_t,S_t)$, while exploiting the market completeness as for the previous two models, we can rewrite the running best-response functions of the two players as follows
\begin{eqnarray*}
v(t,q,s; z) &=& \sup_u \E \left[  \int_t ^T \left(q_r (S_r -aq_r)  - \frac{\kappa}{2}u_r ^2 - \lambda (\mu -a u_r) \varphi_s \right)dr \mid q_t =q, S_t =s  \right], \\
w(t,q,s; u) &=& \sup_z \E\left[  \int_t ^T \left( - \frac{g}{2} z_r ^2 + \lambda(\mu -au_r) \varphi_s \right) dr \mid q_t =q, S_t =s \right],
\end{eqnarray*}
where $\varphi_s$ is the delta hedging of the derivative, that will have to be determined at the equilibrium. It is reasonable to expect that the derivative's price $\varphi(t,q,s)$ is the solution to the following PDE:
\begin{equation} \varphi_t + \frac{\sigma^2}{2}(1+z) \varphi_{ss} =0, \quad \varphi(T, s,q) = (s-aq)^2, \label{PDE-phi}\end{equation}
for some function $z=z(t,q,s)$ coming from the trader's best-response.
The PDE above is coupled with the following two HJB equations, arising from the best-response functions of the two players:  
\begin{eqnarray}
-v_t &=& \sup_{u} \left\{ q(s- a q)  - \frac{\kappa}{2} u^2 - \lambda \varphi_s (\mu - a u ) + v_q u + v_s \mu + \frac{\sigma^2}{2} (1+z) v_{ss} \right\}, \label{HJB-prod} \\
-w_t &=& \sup_{z} \left\{  - \frac{g}{2} z^2 + \lambda \varphi_s (\mu - a u ) + w_q u + w_s \mu + \frac{\sigma^2}{2} (1+z) w_{ss} \right\}, \label{HJB-trad}
\end{eqnarray}
with terminal conditions $v(T)=w(T)=0$.
Solving the optimization problems within the HJB equations above, we get the (candidate) equilibrium strategies for the producer and the trader in terms of the corresponding payoff functions:
\b*
\widehat u = \frac{1}{\kappa} \big( v_q + \lambda a \varphi_s \big),  \quad \widehat z = \frac{\sigma^2}{2 g} w_{ss}.
\e*
The HJB equations for the producer and the trader become respectively as
\b*
- v_t = q (s - a q) + \frac{1}{2 \kappa} \big( v_q + \lambda a \varphi_s \big)^2 
         - \mu \lambda \varphi_s + \mu v_s 
         + \frac{\sigma^2}{2} \left( 1 + \frac{\sigma^2}{2 g} w_{ss} \right) v_{ss}
\e*
and
\b*
- w_t = \lambda \mu \varphi_s - \frac{1}{\kappa} \big(  v_q + \lambda a \varphi_s \big)  \big(   \lambda a \varphi_s - w_q \big)
        + \mu w_s +  \frac{\sigma^2}{2} w_{ss} +  \frac{\sigma^4}{8 g} w^2_{ss} .
\e*
Furthermore, the PDE giving the option equilibrium price $\varphi$ is given:
\b*
\varphi_t + \frac{\sigma^2}{2} \left( 1 + \frac{\sigma^2}{2g} w_{ss} \right) \varphi_{ss} = 0, \text{with} \quad \varphi(T,q,s) = (s-aq)^2.
\e*
Analogously as in the previous two models, we use the following ansatz for $w$:
 \b*
w(t,q,s) = A_w(t) q^2 + B_w(t) s^2 + C_w(t) q s + D_w(t) q + E_w(t) s + F_w(t),
 \e*
 and similarly for $v$ and $\varphi$ with self-explanatory notation for their coefficients. Therefore, using the ansatz and proceeding in the usual way, we easily get
\b*
\varphi(t,q,s) = (s - a q) ^2 + F_{\varphi}(t),
\e*
where
\b*
F_{\varphi}(t) := \sigma^2 \int_t^T \left( 1 + \frac{\sigma^2}{g} B_{w}(r) \right) dr.
\e*
After tedious yet straightforward computations we obtain
\begin{align}
- A_v^{'} & = - a + \frac{2}{\kappa} \left(   A_v  - a^2 \lambda  \right)^2 \label{eqAw} \\
- B_v^{'} & = \frac{1}{2 \kappa} \left(   C_v  + 2 a \lambda   \right)^2  \\
- C_v^{'} & = 1 + \frac{2}{\kappa} \left(  A_v  -  a^2 \lambda \right)  \left(   C_v  + 2 a \lambda   \right) \\
- D_v^{'} & =  \mu \left(   C_v  + 2 a \lambda   \right)  +  \frac{2}{\kappa} D_v \left(   A_v  -  a^2 \lambda \right)  \\
- E_v^{'} & =  2 \mu \left(   B_v  -  \lambda   \right)  +  \frac{1}{\kappa} D_v \left(   C_v  + 2 a \lambda \right)  \\
- F_v^{'} & =   \mu   E_v    +  \frac{1}{2 \kappa} D^2_v + \sigma^2 \left( 1 + \frac{\sigma^2}{g}  B_w \right) B_v  \\
- A_w^{'} & = \frac{4 }{\kappa} \left(   A_v  -  a^2 \lambda \right)   \left( A_w  +  a^2 \lambda \right) \\
- B_w^{'} & =  \frac{1}{\kappa}  \left( C_w - 2 a \lambda \right) \left( C_v + 2 \lambda a \right)  \label{eq-Bw}  \\
- C_w^{'} & =    \frac{2 }{\kappa}  \left[ ( A_v - \lambda a^2) ( C_w - 2 a \lambda ) 
                  +  ( A_w  +  a^2 \lambda ) ( C_v + 2 \lambda a )  \right]  \\
- D_w^{'} & =    \mu \left( C_w - 2 a \lambda \right) 
                  + \frac{2  }{\kappa} D_v \left( A_w  +  a^2 \lambda \right)  +  \frac{2  }{\kappa} D_w \left( A_v  -  a^2 \lambda \right)  \\
- E_w^{'} & =    2 \mu \left( B_w +  \lambda \right)
                  + \frac{1 }{\kappa} D_w \left( C_v  + 2  a \lambda \right)  +  \frac{1  }{\kappa} D_v \left( C_w  - 2 a \lambda \right)  \\
- F_w^{'} & =   \mu   E_w    +  \frac{1}{\kappa} D_v D_w + \sigma^2 B_w +  \frac{\sigma^4}{2 g}  B^2_w
\label{eqFw}
\end{align}
with zero terminal condition for all ODEs above. The first equation, which is a Riccati ODE, can be solved explicitly  giving the same expression as for $D$ in the first model:
\[ A_v (t) =  - \frac{2(a-\frac{2\lambda^2 a^4}{\kappa})(e^{\theta (T-t)} -1)}{\theta (e^{\theta(T-t)} +1) +\frac{4\lambda a^2}{\kappa} (e^{\theta(T-t)} -1)}, \quad \theta := \sqrt{\frac{8a}{\kappa}}.\]
The other equations are linear, hence they can be solved in integral form. For the moment, we give only the expressions for the coefficients that we need in order to compute the equilibrium strategy $\widehat z$ of the trader. They are given by:
\begin{eqnarray*}
A_w (t) &=& \frac{4a^2 \lambda}{\kappa} \int_t ^T e^{\frac{4}{\kappa}\int_t ^u (A_v(r) - a^2 \lambda) dr}  (A_v(u) - a^2 \lambda) du,\\
B_w (t) &=& \frac{1}{\kappa} \int_t ^T (C_w(r)-2a\lambda)( C_v(r) + 2a\lambda)dr ,\\
C_w(t) &=& \frac{2}{\kappa} \int_t ^T e^{\frac{2}{\kappa}\int_t ^u (A_v(r) - a^2 \lambda) dr} \left[ (A_w(u) + a^2 \lambda) (C_v (u) + 2a\lambda) - 2a\lambda (A_v (u) - a^2 \lambda) \right] du ,\\
C_v (t) &=& \int_t ^T e^{\frac{2}{\kappa}\int_t ^u (A_v(r) - a^2 \lambda) dr} \left( 1+ \frac{4 a \lambda}{\kappa}(A_v (u) - a^2 \lambda ) \right) du.
\end{eqnarray*}

\subsection{Verification}
\begin{Theorem}\label{verif-case3}
Let $A_v,C_v,D_v, B_w$ be solutions to the system \eqref{eqAw}-\eqref{eqFw} such that 
\begin{equation}
\label{cond-Bw} B_w (t) > -\frac{g}{\sigma^2}, \quad t \in [0,T].
\end{equation}
Then there exists a Nash equilibrium $(\widehat u, \widehat z) \in \mathcal A$ as in Definition \ref{def:eq-case3}, where
\begin{equation}
\label{opt-uz-case3} \widehat u_t = \frac{1}{\kappa}\left[ C_v (t) \widehat S_t  + 2A_v (t) \widehat q_t + D_v (t) + 2\lambda a (\widehat S_t -a\widehat q_t) \right], \quad \widehat z_t = \frac{\sigma^2}{g}B_w(t), \quad t\in [0,T].
\end{equation}
The no-arbitrage equilibrium price process for the derivative $h_T$ is given by
\begin{equation}\label{h-price-case3} 
\widehat h_t := h_t^{\widehat u, \widehat z} = (\widehat S_t -a\widehat q_t)^2 + \sigma^2 \int_t ^T \left( 1+ \frac{\sigma^2}{g}B_w(u)\right) du, \quad t \in [0,T],
\end{equation}
and the hedging process at equilibrium is
\begin{equation}
\widehat \Delta _t := \Delta_t^{\widehat u, \widehat z} = 2 (\widehat S_t -a\widehat q_t), \quad t\in [0,T].
\end{equation}
\end{Theorem}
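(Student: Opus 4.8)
The plan is to follow the two-step template of Theorems~\ref{verif-case1} and~\ref{verif-case2} — first admissibility of $(\widehat u,\widehat z)$, then the two best-response inequalities in \eqref{nash} — the essential new feature being that the second step is now a \emph{game} verification, which I would carry out by applying the martingale optimality principle to each player's problem separately, holding the other player's control at its equilibrium value.

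For admissibility, the constraint $\widehat z_t>-1$ is equivalent to $B_w(t)>-g/\sigma^2$, i.e.\ precisely the standing hypothesis \eqref{cond-Bw}; in particular, unlike Model~1, no smallness condition on $T$ is needed here, since the required sign information on $B_w$ is imposed by assumption. Under $\mathbb P$ the pair $(\widehat S_t,\widehat q_t)$ solves the linear system $d\widehat S_t=\mu\,dt+\sigma\sqrt{1+(\sigma^2/g)B_w(t)}\,dW_t$, $d\widehat q_t=\widehat u_t\,dt$ with $\widehat u_t$ affine in $(\widehat S_t,\widehat q_t)$ with time-continuous coefficients, so $(\widehat S_t,\widehat q_t)$ is Gaussian; the integrability requirements in Definition~\ref{adm-case3}(i) then follow by Fubini and a Gronwall estimate on $t\mapsto\mathbb E[\widehat q_t^2]$, exactly as for Theorem~\ref{verif-case2}. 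Condition~(ii) follows from \cite[Theorem~2.1]{Rydberg97}, whose hypothesis reduces to $\sigma^2(1+\widehat z_t)>0$ (true by \eqref{cond-Bw}), and $h_T=(\widehat S_T-a\widehat q_T)^2$ lies in $L^1(\mathbb P)\cap L^1(\widehat{\mathbb Q})$ as a quadratic function of jointly Gaussian variables under both measures. For condition~(iii), applying It\^o's formula to $\widehat h_t$ in \eqref{h-price-case3} yields $d\widehat h_t=2(\widehat S_t-a\widehat q_t)\,d(\widehat S_t-a\widehat q_t)=\widehat\Delta_t\,d\widetilde S_t$, while $\widehat h_0=\widehat{\mathbb E}[h_T]$ follows from It\^o's isometry under $\widehat{\mathbb Q}$; integrability of $\widehat\Delta$ again reduces to $\mathbb E[\int_0^T\widehat q_t^2\,dt]<\infty$.

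For the producer's inequality $\tilde J_{\rm pr}(\widehat u,\widehat z)\ge\tilde J_{\rm pr}(u,\widehat z)$, I would exploit a key simplification: under the risk-neutral measure the drift of $\widetilde S$ is annihilated whatever the producer does, and since $\widehat z$ is deterministic, the option price is $\varphi^{\widehat z}(t,q,s)=(s-aq)^2+\sigma^2\int_t^T\bigl(1+(\sigma^2/g)B_w(r)\bigr)\,dr$, so the hedge along any trajectory equals $2(S_t-aq_t)$ and depends on the deviation $u$ only through the controlled state $q^u$ (while $S=\widehat S$ is unaffected by $u$). The producer thus faces a genuine single-agent linear-quadratic problem, and the argument of Theorem~\ref{verif-case1} applies with the value function $V(t,q,s)=A_v q^2+B_v s^2+C_v qs+D_v q+E_v s+F_v$: by \eqref{eqAw}--\eqref{eqFw} this $V$ solves \eqref{HJB-prod} with $z=\widehat z$ and $\varphi_s=2(s-aq)$, so It\^o's formula shows the process $Y^u_t:=\int_0^t\!\bigl(q_r(S_r-aq_r)-\tfrac{\kappa}{2}u_r^2-2\lambda(\mu-au_r)(S_r-aq_r)\bigr)dr+V(t,q_t,S_t)$ has drift $\le0$ for every admissible $u$, with equality at $u=\widehat u$, the stochastic part being a true martingale by Definition~\ref{adm-case3}(i). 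Hence $Y^u$ is a supermartingale, a martingale for $\widehat u$, which is the claim.

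The trader's inequality is the main obstacle, because when the trader deviates from $\widehat z$ to $z$ the option price $\varphi^z$ solving \eqref{PDE-phi}, and hence the hedge $\Delta^{\widehat u,z}_t=\varphi^z_s(t,q_t,S_t)$ appearing in $\tilde J_{\rm tr}$, depends on $z$ non-locally: the claim embeds the realized variation $\int_0^T\sigma^2(1+z_r)\,dr$, whose risk-neutral conditional value is state-dependent unless $z$ is deterministic. A naive martingale-optimality computation with $W(t,q,s)=A_w q^2+\dots+F_w$ then produces, on top of the benign term $-\tfrac{g}{2}(z-\widehat z)^2\le0$, a residual $\lambda(\mu-a\widehat u)\bigl(\varphi^z_s-2(S-aq)\bigr)$ of indefinite sign. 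To control it I would first rewrite, via It\^o's formula applied to $(\widetilde S^z)^2$ under both $\mathbb P$ and $\mathbb Q^{\widehat u,z}$,
\[
\tilde J_{\rm tr}(\widehat u,z)=\mathbb E^{\mathbb P}\!\Bigl[\int_0^T\!\bigl(-\tfrac{g}{2}z_r^2+2\lambda(\mu-a\widehat u_r)\widetilde S^z_r\bigr)dr\Bigr]+\lambda\Bigl(\mathbb E^{\mathbb P}\bigl[{\textstyle\int_0^T}\sigma^2(1+z_r)dr\bigr]-\mathbb E^{\mathbb Q^{\widehat u,z}}\bigl[{\textstyle\int_0^T}\sigma^2(1+z_r)dr\bigr]\Bigr),
\]
in which the bracketed change-of-measure term vanishes identically for deterministic feedback $z$; restricted to that class (which contains $\widehat z$), the trader's problem is the standard linear-quadratic control problem whose HJB is exactly \eqref{HJB-trad} with $\varphi_s=2(S-aq)$, so the martingale-optimality argument closes with supermartingale drift $-\tfrac{g}{2}(z-\widehat z)^2\le0$, giving $\tilde J_{\rm tr}(\widehat u,\widehat z)\ge\tilde J_{\rm tr}(\widehat u,z)$. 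Extending this to arbitrary non-deterministic deviations would require a quantitative bound on the residual change-of-measure term, which is the only genuinely new difficulty relative to the first two models. Finally, as before, one records that $\widehat q$ is the unique solution of the linear SDE obtained by inserting $\widehat z$ into $q$'s dynamics and that $(\widehat S_t,\widehat q_t)$ is jointly Gaussian.
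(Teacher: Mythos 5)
Your proposal follows the same two--step architecture as the paper's proof --- admissibility exactly as in Theorems \ref{verif-case1} and \ref{verif-case2}, then the martingale optimality principle applied to each player's best--response problem with the opponent frozen at equilibrium --- and the admissibility step and the producer's inequality match the paper's argument essentially line by line. Your observation that a producer deviation $u$ leaves the pricing function, hence the hedge $2(S_t-aq_t)$, unchanged (because $\widehat z$ is deterministic, so $h^{u,\widehat z}_t=\tilde S_t^2+\int_t^T\sigma^2(1+\widehat z_r)\,dr$ for every admissible $u$) is exactly the mechanism that makes the paper's supermartingale computation for $Y^{u,\widehat z}$ close. Note in passing that the paper's proof attaches the coefficients $A_w,\dots,F_w$ to the producer's gain process and $A_v,\dots,F_v$ to the trader's, an apparent swap of labels that your version corrects.

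The one place you diverge is the trader's inequality, and there you have put your finger on a point the paper does not discuss: for a stochastic deviation $z$ the admissible hedge $\Delta^{\widehat u,z}$ is the integrand in the martingale representation of $\E^{\Q^{\widehat u,z}}\!\left[\tilde S_T^2\mid\mathcal F_t\right]=\tilde S_t^2+\E^{\Q^{\widehat u,z}}\!\left[\int_t^T\sigma^2(1+z_r)\,dr\mid\mathcal F_t\right]$, so it equals $2\tilde S_t$ plus a correction that vanishes only when $z$ is deterministic. The paper's proof simply asserts that ``applying the same arguments as for the producer, one can easily check'' the supermartingale property of $Z^{\widehat u,z}$; it neither restricts the class of deviations nor bounds that correction. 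Your restriction to deterministic feedback deviations is therefore an honest partial result rather than a defect peculiar to your write--up: to obtain the theorem as stated, with deviations ranging over all of $\mathcal A$, either your residual change--of--measure term must be controlled, or the equilibrium notion must be read as holding against deviations with deterministic volatility impact only. Apart from this point --- which is a gap in the paper's own argument as much as in yours --- your proof is the paper's proof.
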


\begin{proof}
Analogously as for the previous two models, the proof is structured in two steps.\begin{enumerate}[wide, labelwidth=!, labelindent=0pt]
\item \emph{Admissibility}. We start from the verification that the pair $(\widehat u, \widehat z)$ belongs to $\mathcal A$. The two processes $\widehat u,\widehat z$ are clearly progressively measurable by definition, where $\widehat u$ takes real values. Moreover, assumption \eqref{cond-Bw} implies that $\widehat z_t > -1$ for all $t \in [0,T]$. Concerning the integrability properties in Definition \ref{adm-case3}(i), since both $\widehat u$ and $\widehat z$ are affine in the state variables $\widehat q_t,\widehat S_t$ with time continuous (hence bounded) coefficients, they boil down to checking
\begin{equation}\label{square-int-qS} \E \left[ \int_0 ^T (\widehat q_t ^2 + \widehat S_t ^2) dt \right] < \infty.\end{equation}
For the square integrability of $\widehat S$, observe that since $\widehat z$ is a deterministic continuous function of time, each $\widehat S_t$ is normally distributed, hence it has every moment and they are continuous in time. Therefore $\E[\int_0 ^T \widehat S_t ^2 dt] < \infty$. To verify the square integrability of $\widehat q$, notice that at equilibrium we have
\[ d\widehat q_t = [\alpha(t)\widehat q_t + \beta(t)\widehat S_t + \gamma(t)] dt, \quad \widehat q_0 =q_0,\]
for some deterministic continuous functions of time $\alpha, \beta$ and $\gamma$. Such a linear ODE can be solved pathwise, giving
\[ \widehat q_t = e^{\int_0 ^t \alpha(r) dr} \left(q_0 + \int_0 ^t e^{-\int_0 ^r \alpha(u)du}\left( \beta(t)\widehat S_r + \gamma(r)\right) dr\right), \quad t\in [0,T]. \] This implies that showing $\E[\int_0 ^T \widehat q_t ^2 dt] < \infty$ reduces to $\E[ \int_0 ^T (\int_0 ^t \widehat S_ r dr)^2 dt] < \infty$, which follows since $\Sigma_t := \int_0 ^t \widehat S_r dr$ is a Gaussian process with time continuous second moment.

\noindent Regarding condition (ii), we need to show that there exists a unique EMM $\widehat{\mathbb Q} = \mathbb Q^{\widehat u,\widehat z}$ for the production process $\widehat q$. 
Let $\widehat \gamma_t := \frac{\mu-a\widehat u_t}{\sigma \sqrt{1+\widehat z_t}}$ and let us consider
\[ L_t ^{\widehat u,\widehat z} := \exp \left\{ -\int_0 ^t \widehat \gamma_t dW_t - \frac{1}{2} \int_0 ^t \widehat \gamma_t ^2 dt  \right\},\quad t \in [0,T].\]
We use once more \cite[Theorem 2.1]{Rydberg97} to prove that under our assumptions the probability $d\widehat \Q := L_T ^{\widehat u,\widehat z} d\mathbb P$ is well-defined (see also \cite{Ruf15}). For this model, Assumption 2.2 in \cite{Rydberg97} is satisfied as long as $\sigma^2 (1+\widehat z_t) > 0$ for all $t\in [0,T]$, which is immediately given by our assumption that $B_w (t) > -g/\sigma^2$ ensuring, as we already saw above, that $\widehat z_t > -1$ for all $t \in [0,T]$. 

\noindent To end this part, we need to check property (iii) in Definition \ref{adm-case3}. The first part, relative to $\widehat h$, is done as in the proof of Theorem \ref{verif-case1}, hence the details are omitted. Concerning $\widehat \Delta$, first notice that it is clearly progressively measurable and takes real values. Due to the fact that $\widehat \Delta_t$ is linear in both $\widehat q_t$ and $\widehat S_t$, its integrability property is equivalent to \eqref{square-int-qS}, which has already been checked before.

\item \emph{Equilibrium}. We are going to use the martingale optimality principle here as well to verify that at the proposed equilibrium $(\widehat u, \widehat z)$ both players are implementing an optimal response to each other strategy. Let us consider the producer first and define the following process
\begin{equation}\label{def-Y-prod}
Y^{u,\widehat z}_t := \int_0 ^t  \left(q_r (\widehat S_r -aq_r) - \frac{\kappa}{2}u_r ^2 - \lambda \Delta_r^{u, \widehat z}  (\mu-a u_r) \right)dr + W(t,q_t, \widehat S_t),
\end{equation}
with
\[ W(t,q,s) = A_w(t) q^2 + B_w(t) s^2 + C_w(t) q s + D_w(t) q + E_w(t) s + F_w(t).\]
Similarly as in the proof of Theorem \ref{verif-case1}, one can verify by applying It\^o's formula and the HJB equation \eqref{HJB-prod} satisfied by the function $W(t,q,s)$ by construction in the heuristics part, that $Y^{u,\widehat z}_t$ is a supermartingale for all $u$ such that $(u, \widehat z) \in \mathcal A$, and a martingale for $u=\widehat u$. 
For the trader, we consider the process
\begin{equation}\label{def-Y}
Z^{\widehat u, z}_t := \int_0 ^t  \left( - \frac{g}{2}z_r ^2 + \lambda \Delta_r^{\widehat u, z} (\mu-a \widehat u_r) \right)dr + U(t, \widehat q_t, S_t),
\end{equation}
with
\[ U(t,q,s) = A_v(t) q^2 + B_v(t) s^2 + C_v(t) q s + D_v(t) q + E_v(t) s + F_v(t).\]
Applying the same arguments as for the producer, one can easily check that $Z^{\widehat u, z}$ is a supermartingale for all $z$ such that $(\widehat u, z) \in \mathcal A$ and a martingale for $z=\widehat z$. 
\end{enumerate}
Finally, an application of the martingale optimality principle combined with the admissibility of $(\widehat u, \widehat z)$, gives that the latter is a Nash equilibrium as in Definition \ref{def:eq-case3}. Therefore, the proof is complete.
\end{proof}

\begin{rem}
{\rm Observe that in the theorem above we assumed that $B_w (t) > -g/\sigma^2$ for all $t \in [0,T]$. This is satisfied when the maturity $T$ is small enough. Indeed, one can reason heuristically in the following way: when $T \approx 0$, using the equation \eqref{eq-Bw} for $B_w$ we have $B'_w (T) \approx \frac{1}{\kappa}(2\lambda a)^2 >0$ and we also have $B_w (T) = 0$. Therefore, it is natural to expect $B_w (t) > -g/\sigma^2$ for all $t \in [0,T]$ when $T$ is small enough, which would also guarantee that the Radon-Nikodym derivative $d\mathbb Q^{\widehat u,\widehat z}/d\mathbb P = L_T ^{\widehat u,\widehat z}$ is well-defined (see the second part of the proof above). Unfortunately, the study of the function $B_w$ is much more difficult in this case than in the model of Section \ref{sec:model1}, where we were able to quantify precisely how small $T$ must be.}
\end{rem}

\section{Numerical illustration}
\label{sec:numeric}

In this section, we use numerical simulations to illustrate and explain the behaviours of the producer in the three models and of the trader in the third one. We set the drift of the commodity price to zero, $\mu=0$, in Models 2 and 3, to simplify the analysis.

\paragraph{Model 1: production-based manipulation.}The understanding of the first model is quite straightforward and it is illustrated by the first column of Figure~\ref{fig:sim}. Starting from a zero production rate $q$, the optimal strategy of the producer, whether or not she holds a derivative position, is to reach as fast as possible the optimal production rate that maximises the running profit $\displaystyle q^\star := \frac{s_0}{2 a}$. We have seen that when the producer has no position in the derivative market, she has no interest in increasing the volatility using some randomisation of her production rate $q$.  Indeed, her expected profit is proportional to $\E\big[-q_t^2]$ and thus, increasing the volatility decreases her expected profit. On the contrary, since controlling the volatility has a cost, she makes costly efforts to reduce it. When the producer holds a derivative position, we first note that she uses her market power to drive the price of the commodity at maturity to a level that suits her profit. If she has bought  (resp. sold) the derivative, she drives the commodity price up (resp. down). For instance, in case of a sale, the derivative is sold at, say, 100, but at maturity its payoff is close to zero, ensuring a profit of nearly 100. Figure~\ref{fig:v0w0} (left) gives the value function of the producer at time zero as a function of the derivative position. We see that selling derivatives ($\lambda>0$) can only make her better off while buying derivatives requires a certain amount of sales before it is worth the cost.

Regarding the volatility, since the price $h_0$ of the derivative is an increasing function of the realised volatility, the producer may have an interest in increasing the volatility to push the value of the derivative up. But, since increasing the volatility has a negative effect on the expected profit, the producer has to assess this trade-off. Using the Remark \ref{rem:sign-D} together with the expression for $\widehat z$ in (\ref{opt-uz}) and noting that it makes sense to increase the volatility only in case the producer has sold the derivative ($\lambda >0$) we have that
\begin{align}
 \widehat z \ge 0 \quad \Leftrightarrow \quad  \lambda \geq \sqrt{\frac{\kappa}{2 a^3}}.
\end{align}
If the net position exceeds the threshold above, the benefit of increasing the volatility outweighs the cost. The higher the market power, the lower the threshold. Besides, it is worth noting that this threshold does not depend on the cost of intervention $g$ to reduce the volatility. It only depends on the parameters affecting the drift of the commodity price process. In Figure~\ref{fig:sim}, we choose a large short position of $\lambda =1$ which makes the profit on the derivative as important as the profit from production. In that case, the producer increases more than by half the volatility. Figure~\ref{fig:h0} illustrates the variation of the price of the derivative $h_0$, of the expected payoff $\E^\P[ h_T]$ but also of the price of the derivative if no volatility manipulation was undertaken, noted $h_0^{z=0}$, as a function of the holding position $\lambda$. In these simulations, we started the initial production rate at its optimal stationary level $q^\star$ to get rid of transitory effects. For the first model, we observe that $h_0$ is an increasing function of the position, but it varies much less than the expected terminal payoff. It means that much of the benefit from holding a derivative position comes from the manipulation of the price at maturity.

\paragraph{Model 2: production and information based manipulation.} The story for the second model is illustrated by the second column of Figure~\ref{fig:sim} and it has many points in common with the first model: the producer optimal production strategy is to reach the stationary optimal level $q^\star$ and to drive the price at maturity up in case of a purchase and down in case of a sale. But, now, contrary to the former case, as pointed out in Remark~\ref{rem:model2}, the producer always increases the volatility whatever her net position, long or short, because her profit rate is an increasing function of the volatility. As a consequence, we observe in the second column of Figure~\ref{fig:h0} that $h_0$ is always greater than $h_0^{z=0}$. Further, the variation of $\E^\P[h_T] - h_0$ is much larger now, when the producer can separate the manipulation of the drift and of the volatility, than in the first model. Figure~\ref{fig:v0w0} (middle) provides the value function of the producer at time zero as a function of the derivative position. The situation here is very similar to what we observed in the first model, namely selling derivatives ($\lambda>0$) results in a profit, while buying derivatives is worth the cost as soon as the amount of sales exceeds a threshold depending on the model parameters. \medskip

In both models 1 and 2, the capacity of driving the price of the commodity at maturity at a desired level reveals itself an efficient tool to take advantage of a derivative position. If the producer has sold the derivative at, say, 100, she increases her production rate so that at terminal date, the price of the commodity decreases, making the price of the derivative decrease below the initial price and thus ensuring a profit on the derivative.

\paragraph{Model 3: producer-trader competition.} What happens when the producer is facing an opponent who can control the level of volatility? The third column of Figure~\ref{fig:sim} illustrates the interaction between the producer and the trader. The fact that the producer now faces an opponent does not change her overall production strategy: she still reaches the stationary optimal level of production rate $q ^\star$ and she manipulates the commodity price at maturity at her own advantage. But, the actions of the trader on the volatility reduces the potential profit made by the producer on the derivative. When the producer has sold (resp. bought) the derivative to the trader, the trader reduces (resp. increases) the volatility to push the price $h_0$ of the derivative down. The third column of Figure~\ref{fig:h0} shows a much lower variation of the derivative profit $h_0 - \E^\P[h_T]$ than in the first two models. Besides, we observe that for $\lambda>0$, we have $h_0- \E^\P[h_T] >   0$ and for  $\lambda<0$, we have $h_0- \E^\P[h_T] <  0$, meaning that in each case, the trader is making a loss. 

\begin{figure}[hbt!]
\begin{flushleft}
\begin{tabular}{c c c} 
Model 1 & Model 2 & Model 3 \\
\hspace{-5mm}\includegraphics[width=0.33\textwidth]{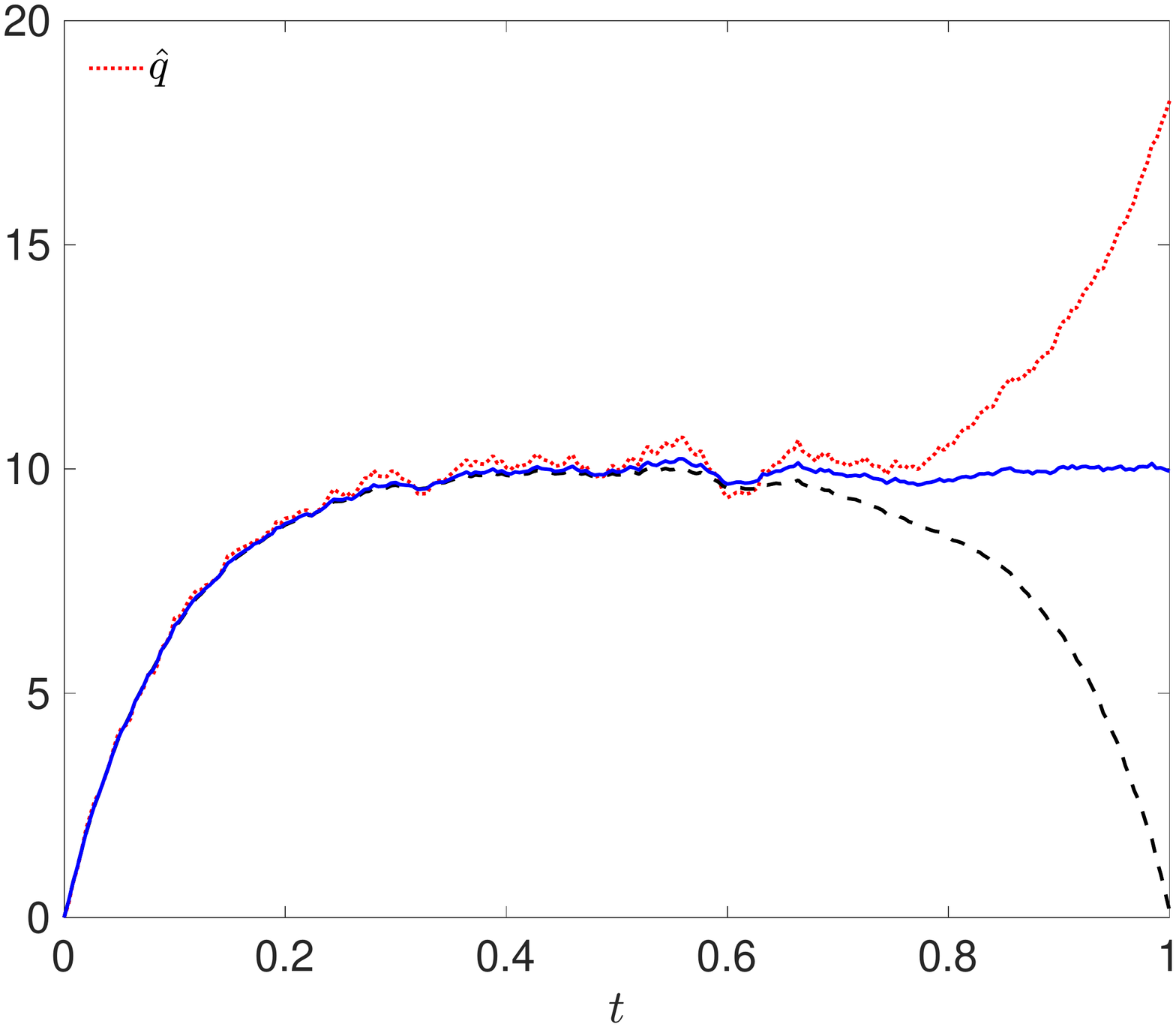} & \includegraphics[width=0.33\textwidth]{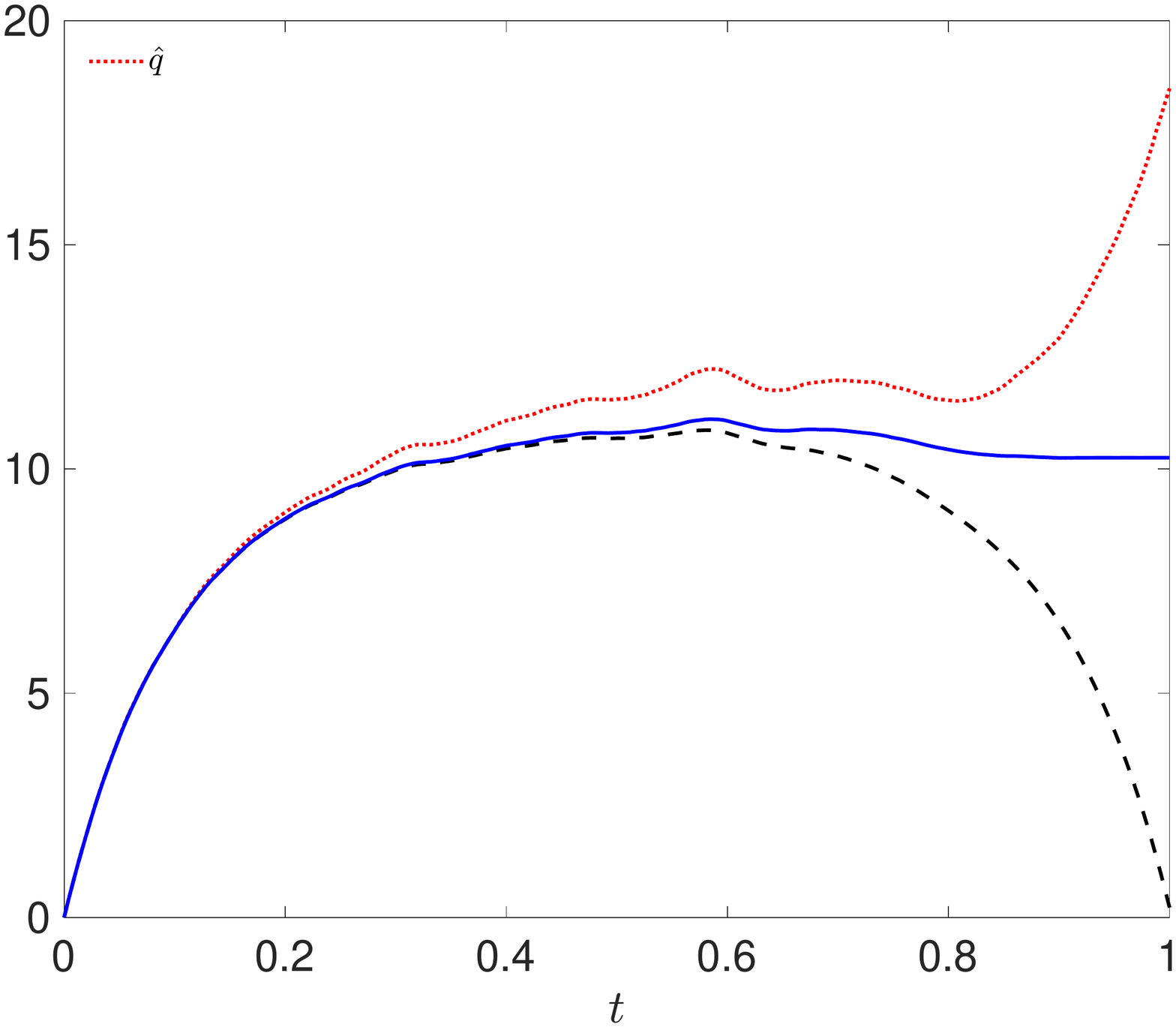} & \includegraphics[width=0.33\textwidth]{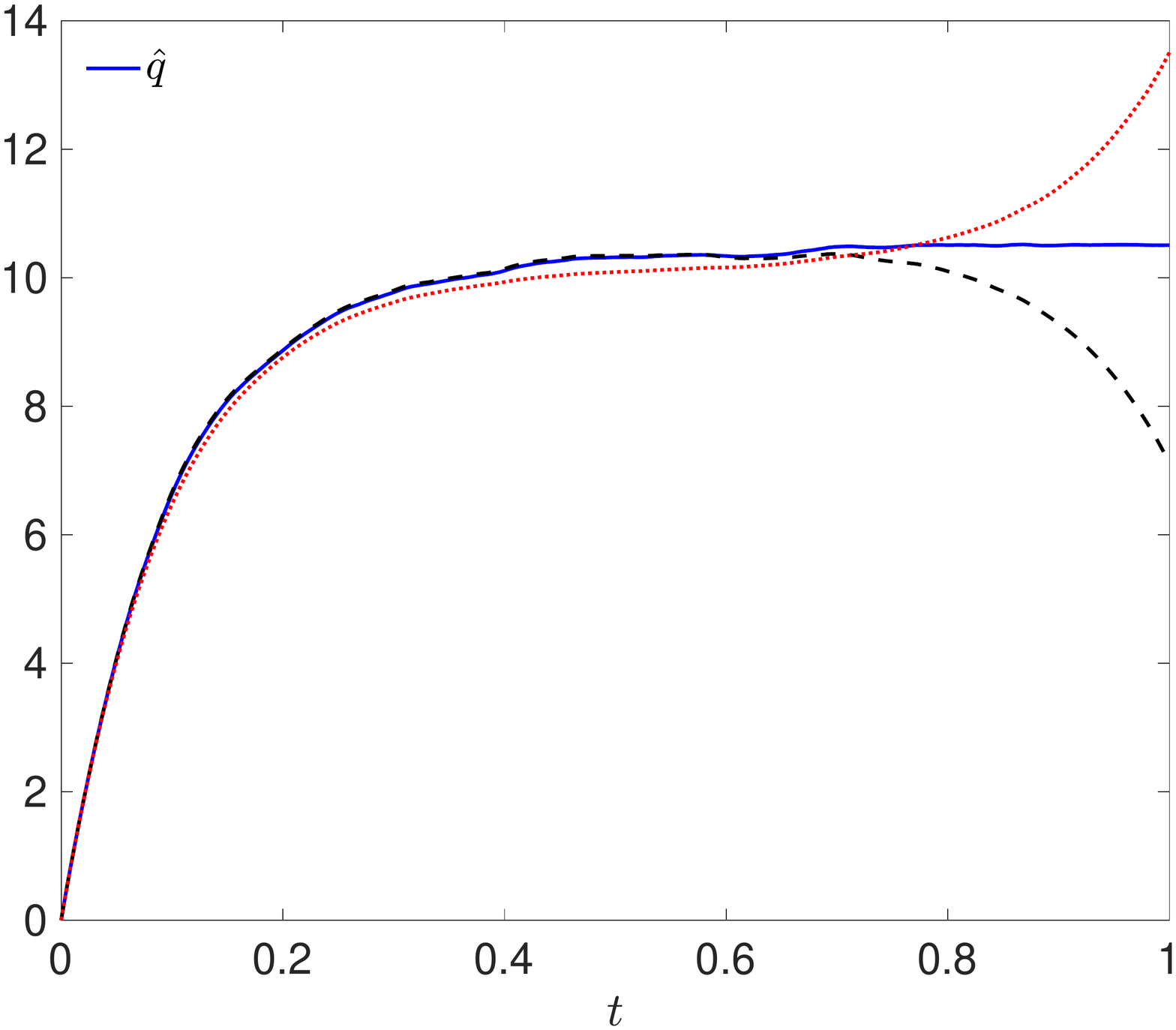}\\
\hspace{-5mm}\includegraphics[width=0.33\textwidth]{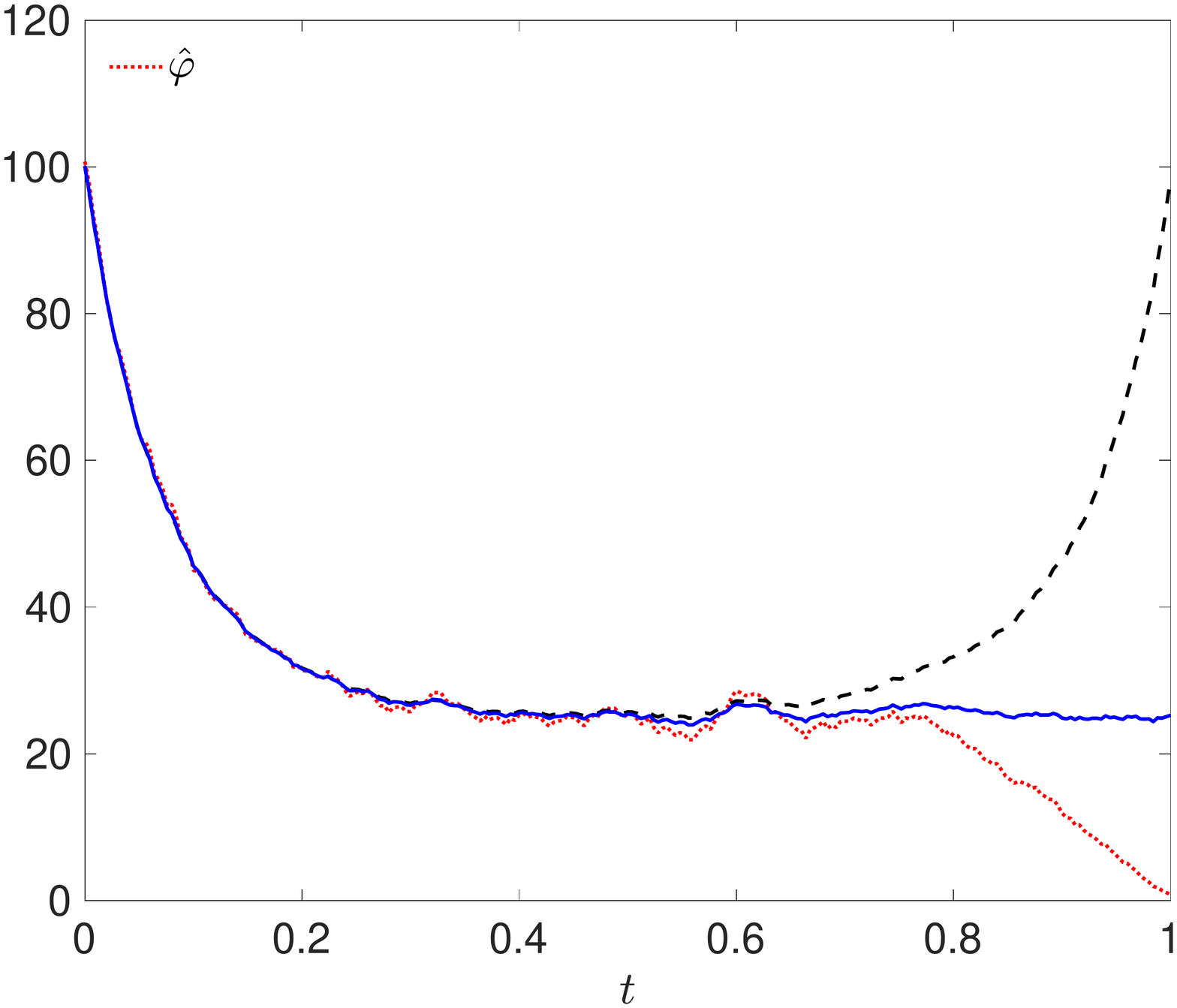} & \includegraphics[width=0.33\textwidth]{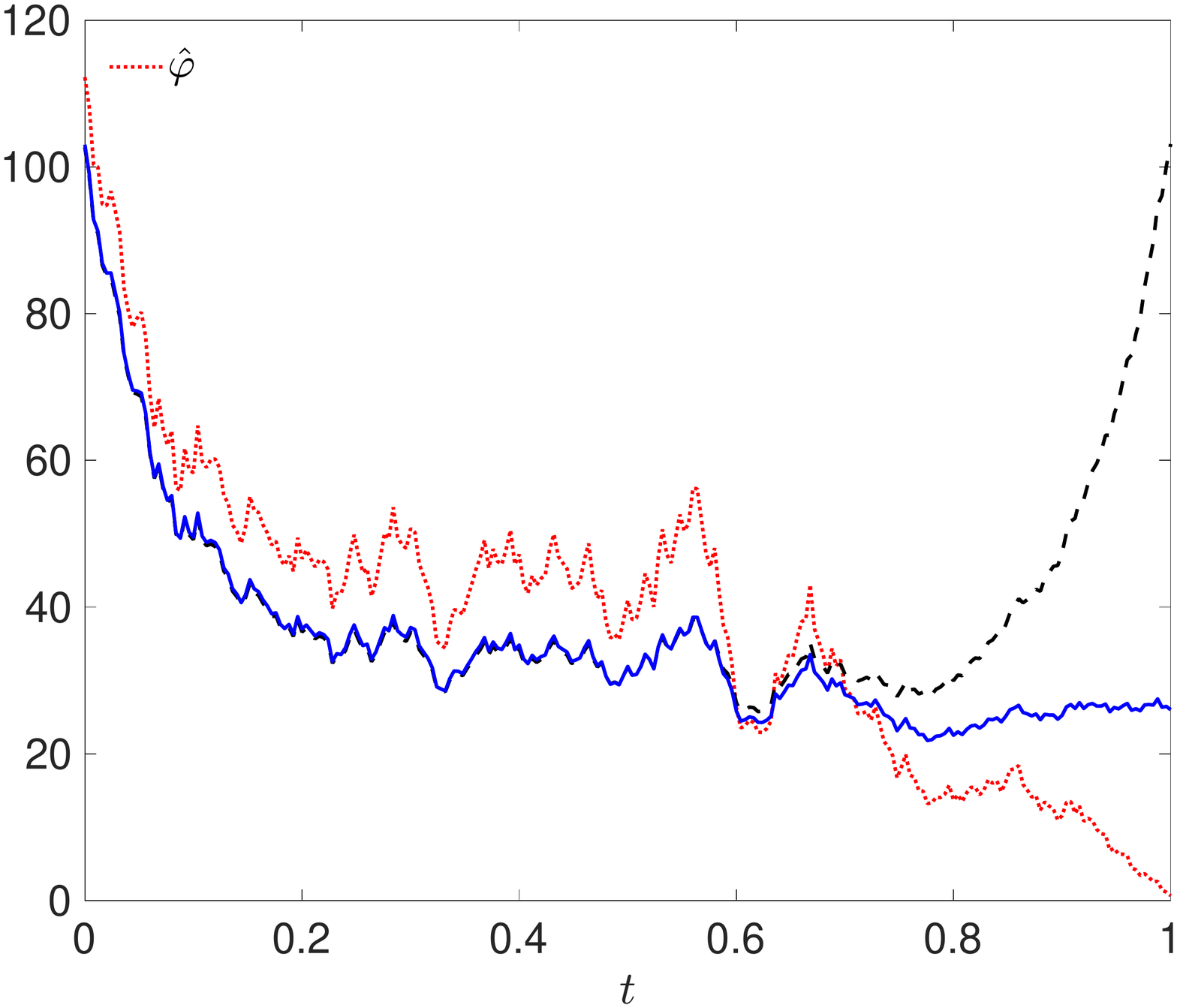} & \includegraphics[width=0.33\textwidth]{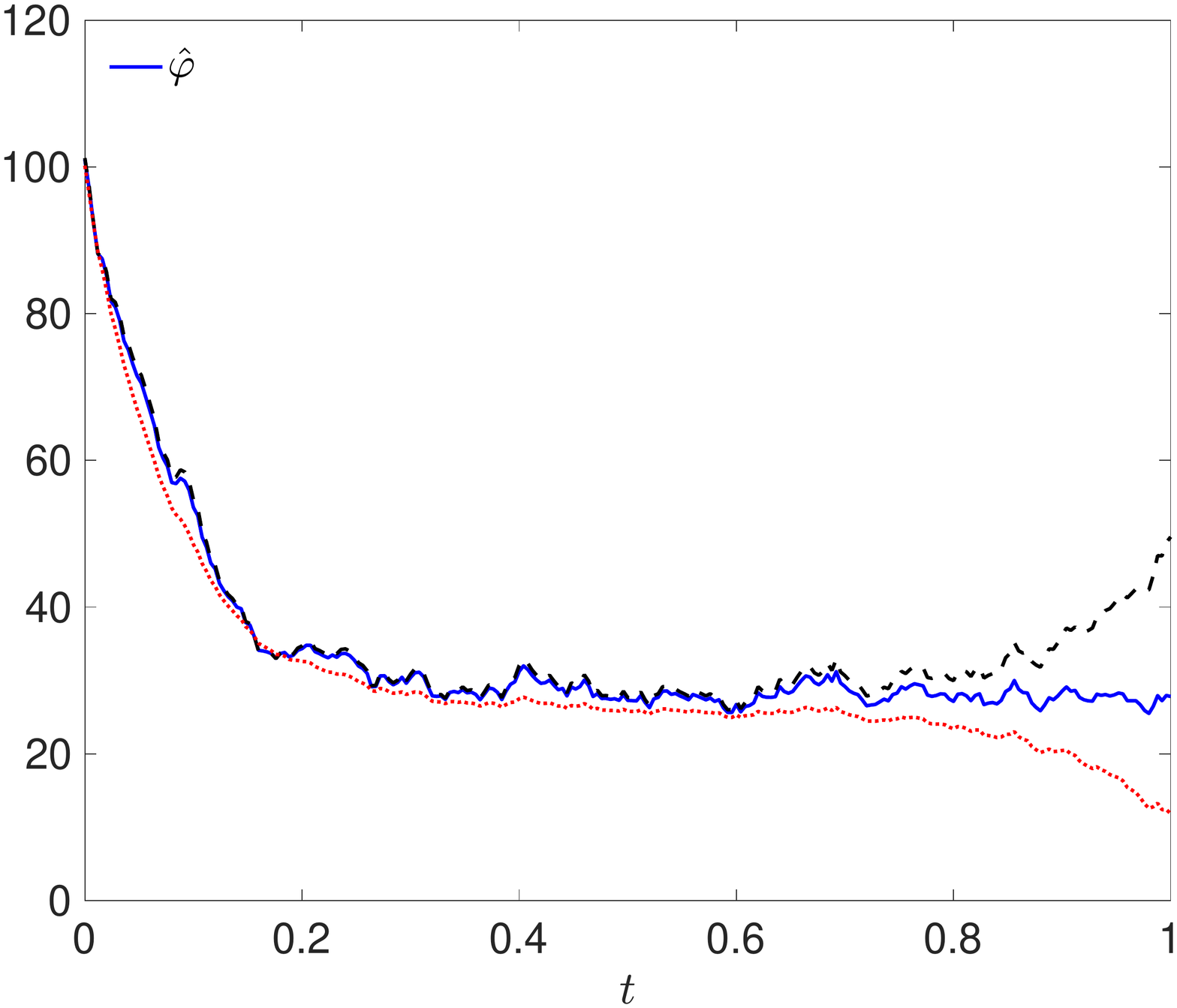} \\
\hspace{-5mm}\includegraphics[width=0.33\textwidth]{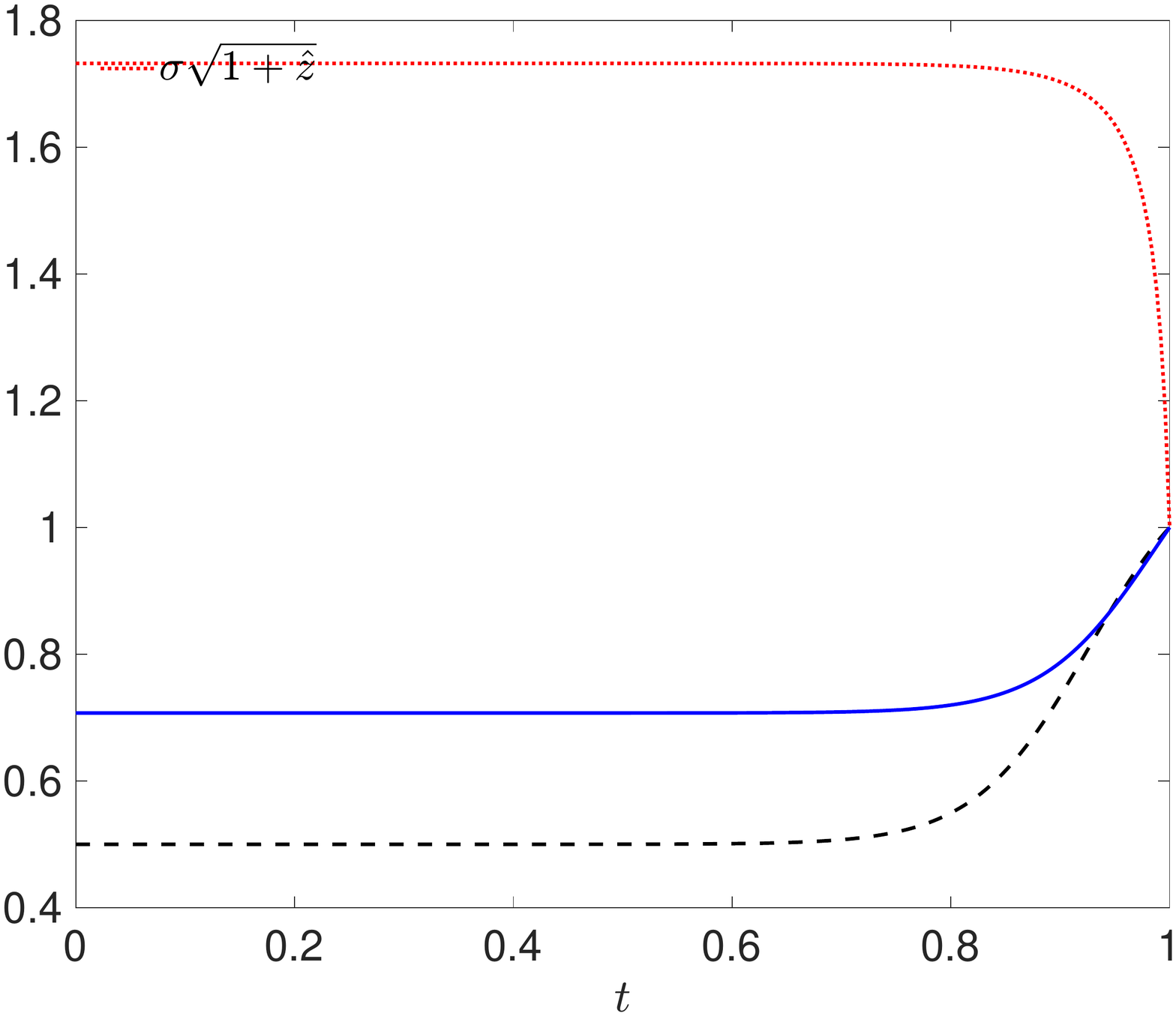} & \includegraphics[width=0.33\textwidth]{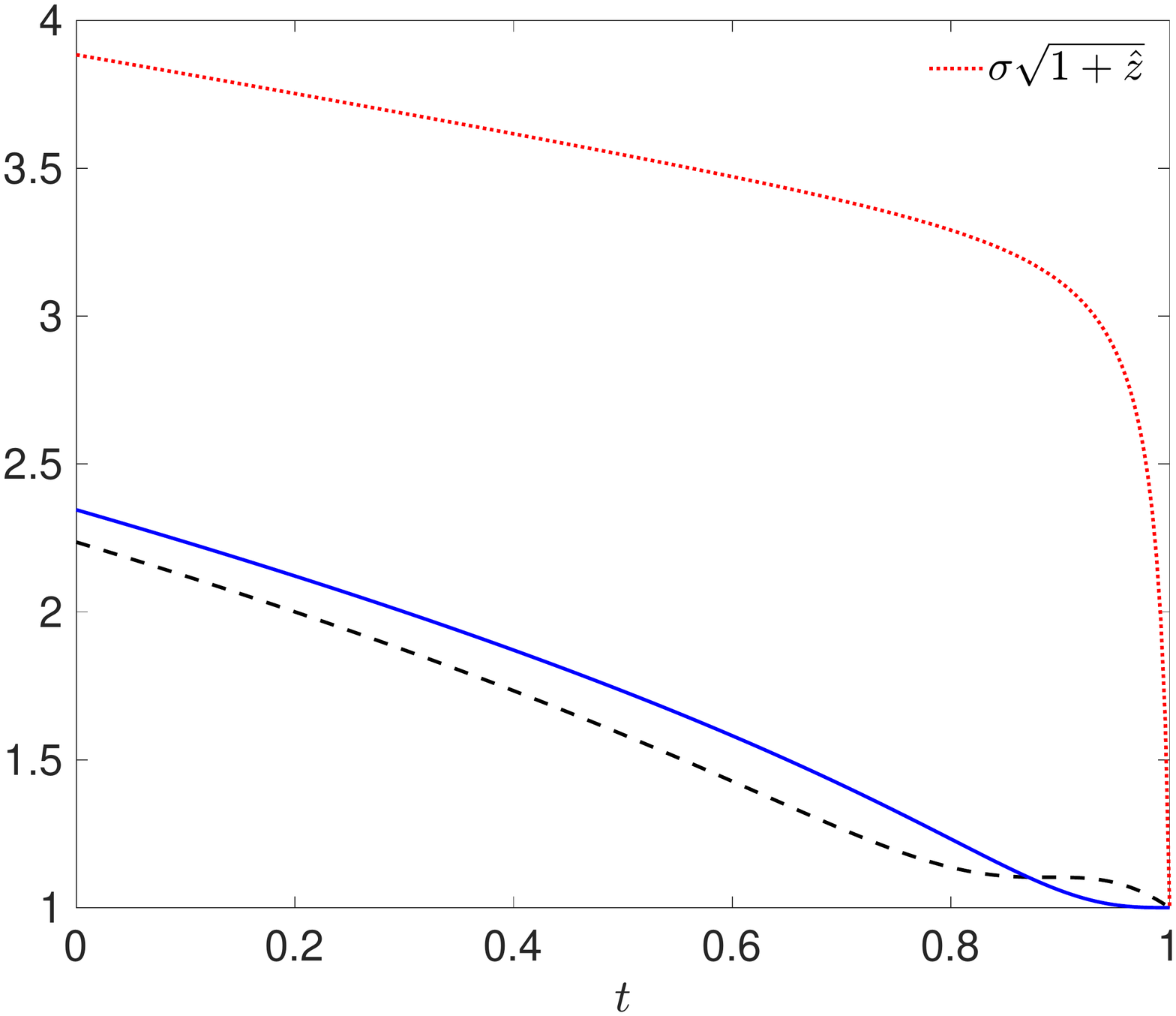} & \includegraphics[width=0.33\textwidth]{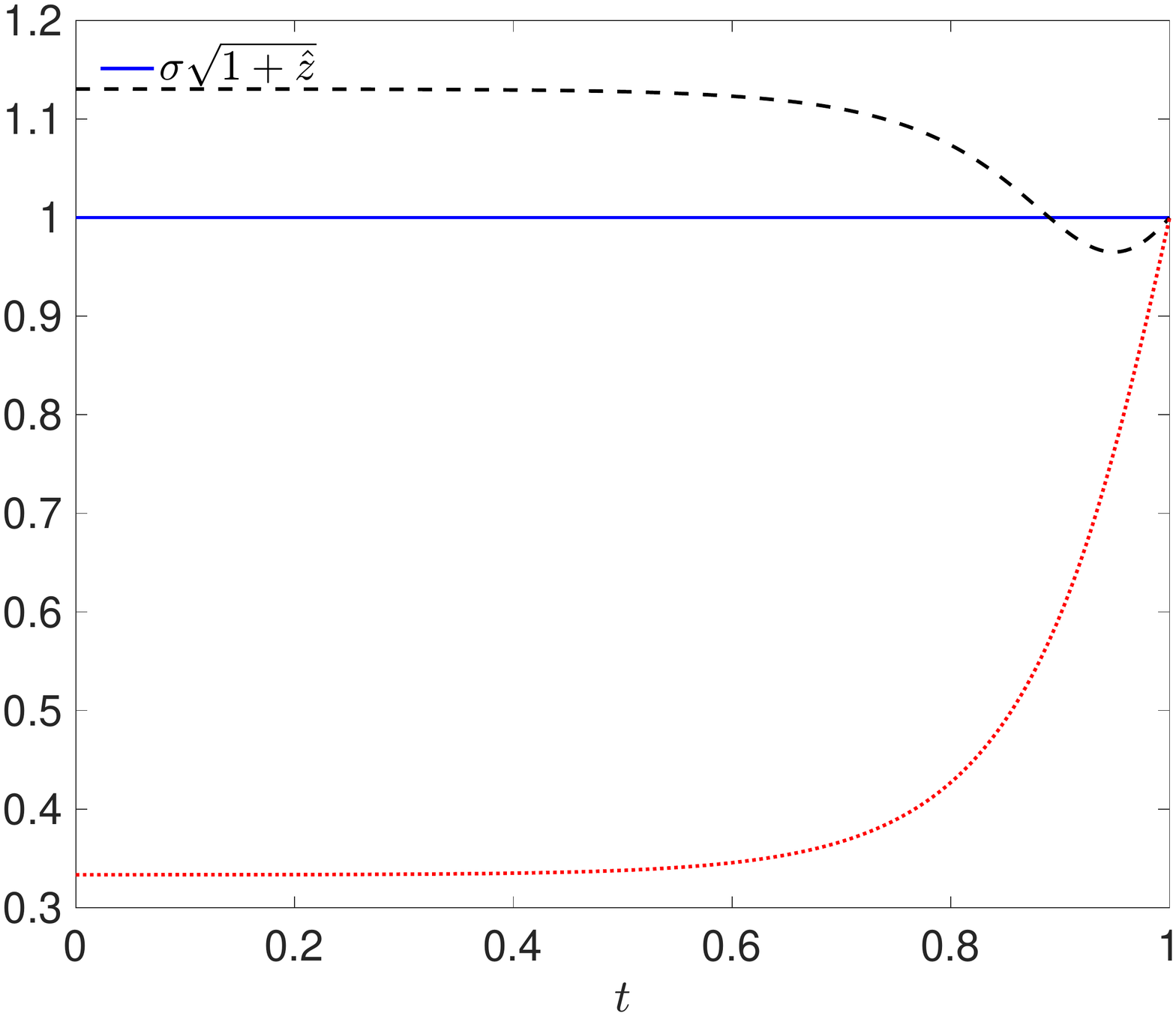} \\
\hspace{-5mm}\includegraphics[width=0.33\textwidth]{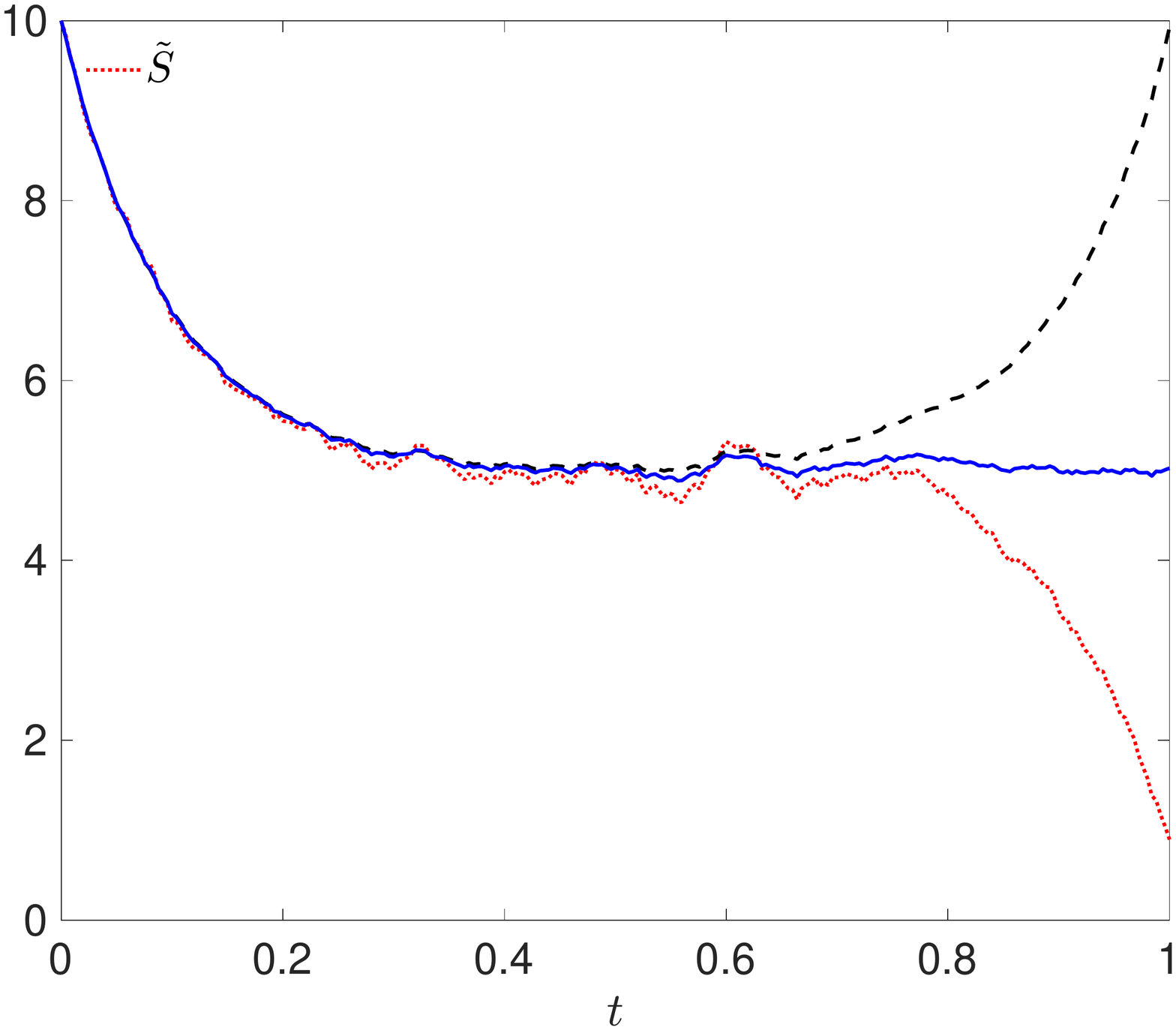}& \includegraphics[width=0.33\textwidth]{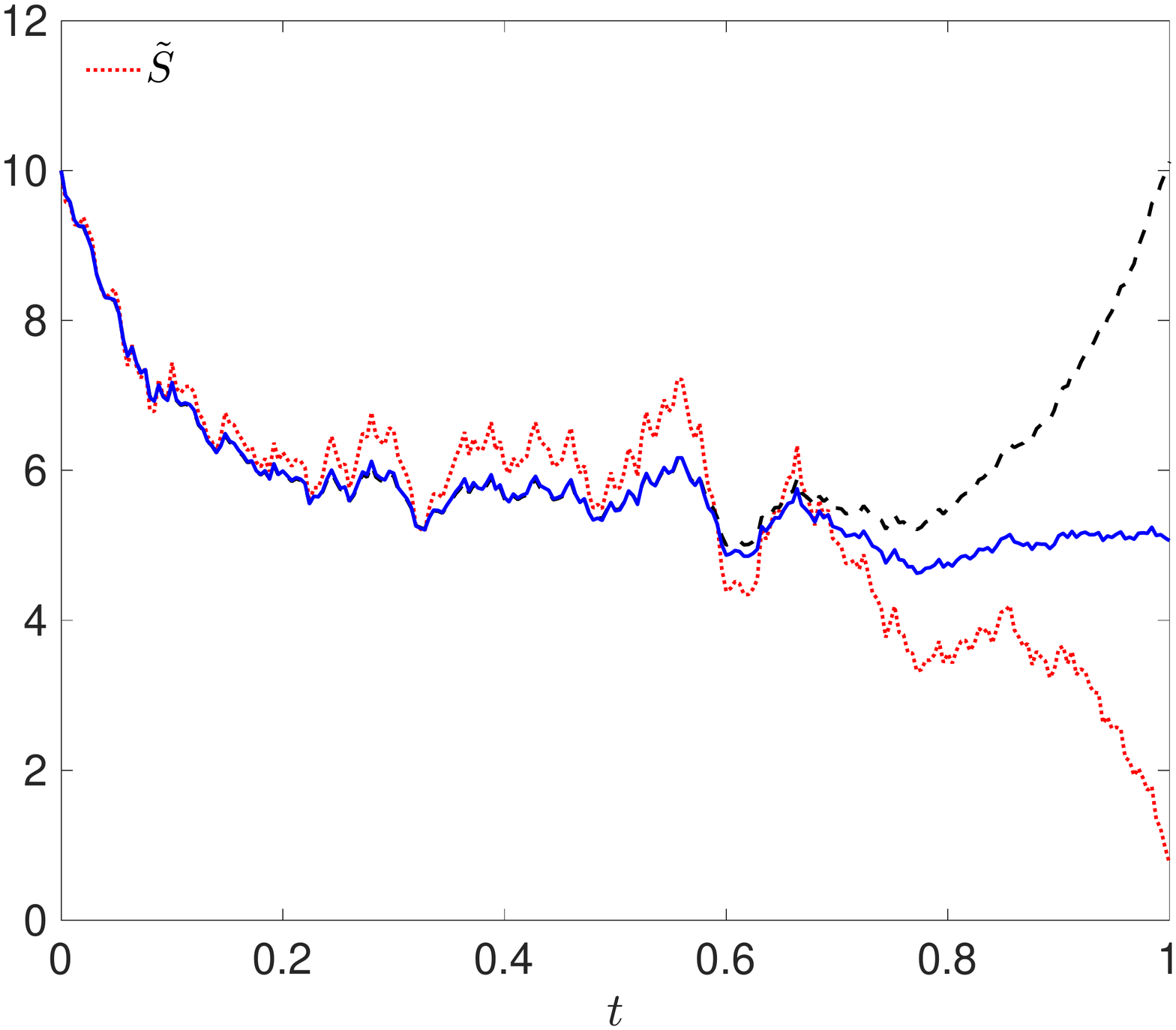} & \includegraphics[width=0.33\textwidth]{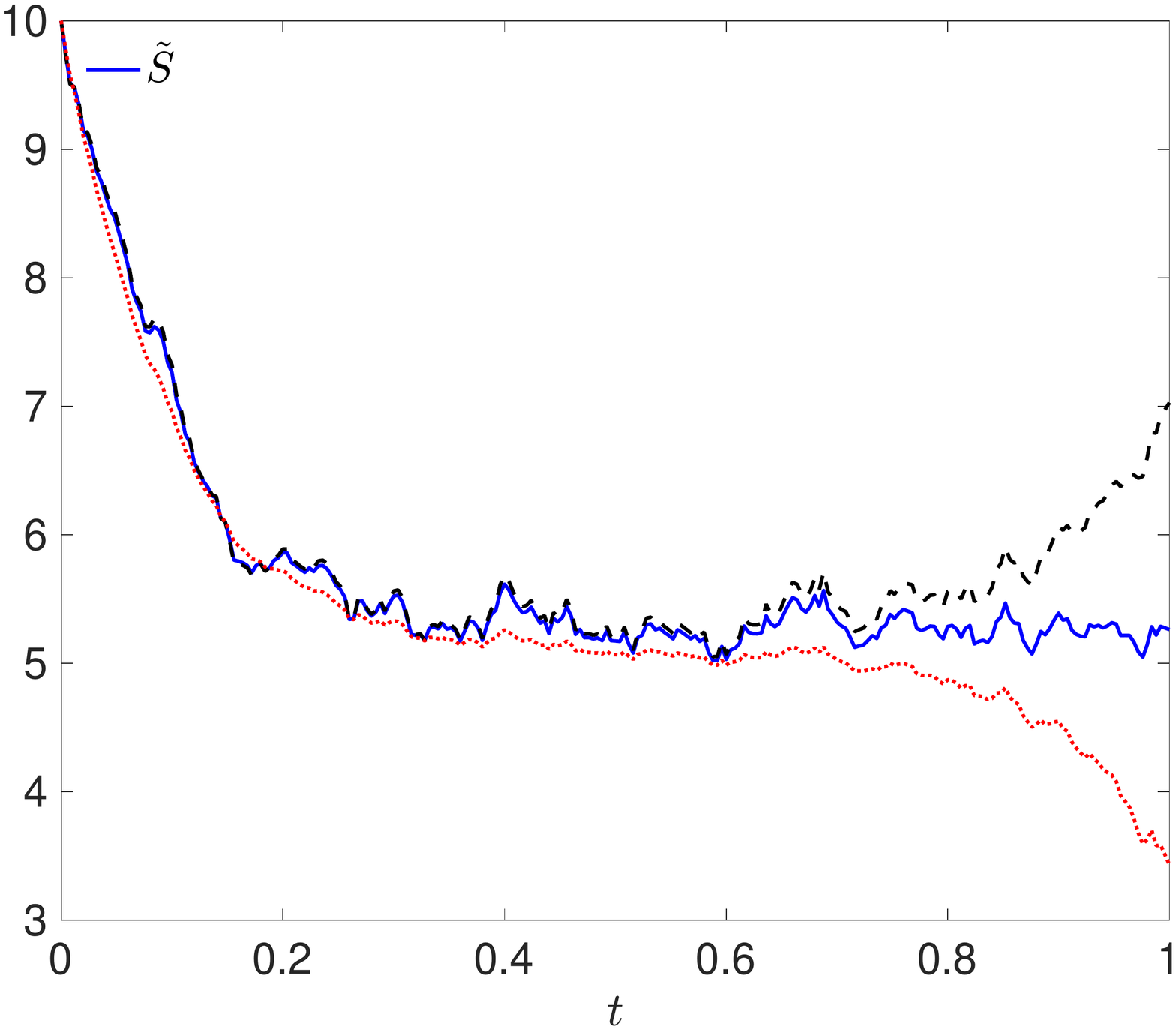}
\end{tabular}
\caption{{\small Optimal production rate $\widehat q$, derivative price $\widehat \varphi$, volatility $\widehat \sigma$ and commodity price $\widehat S$ when the producer has no derivative position $\lambda =0$ (blue), bought the derivative $\lambda < 0$ (black) and sold the derivative, $\lambda > 0$ (red). Parameter values: $s_0 = 10$,  $a = 0.5$, $g = 0.1$,  $\kappa = 0.01$,  $\sigma = 1$, $T = 1$, $\mu = 0.0$, $q_0=0$, $\lambda \in \{ -0.1, 1\}$ for Model 1 \& 2, $\lambda \in \{ -0.05, 0.1\}$ for Model 3.}}
\label{fig:sim}
\end{flushleft}
\end{figure}

\begin{figure}[hbt!]
\begin{flushleft}
\begin{tabular}{c c c} 
Model 1 & Model 2 & Model 3 \\
\hspace{-5mm}\includegraphics[width=0.33\textwidth]{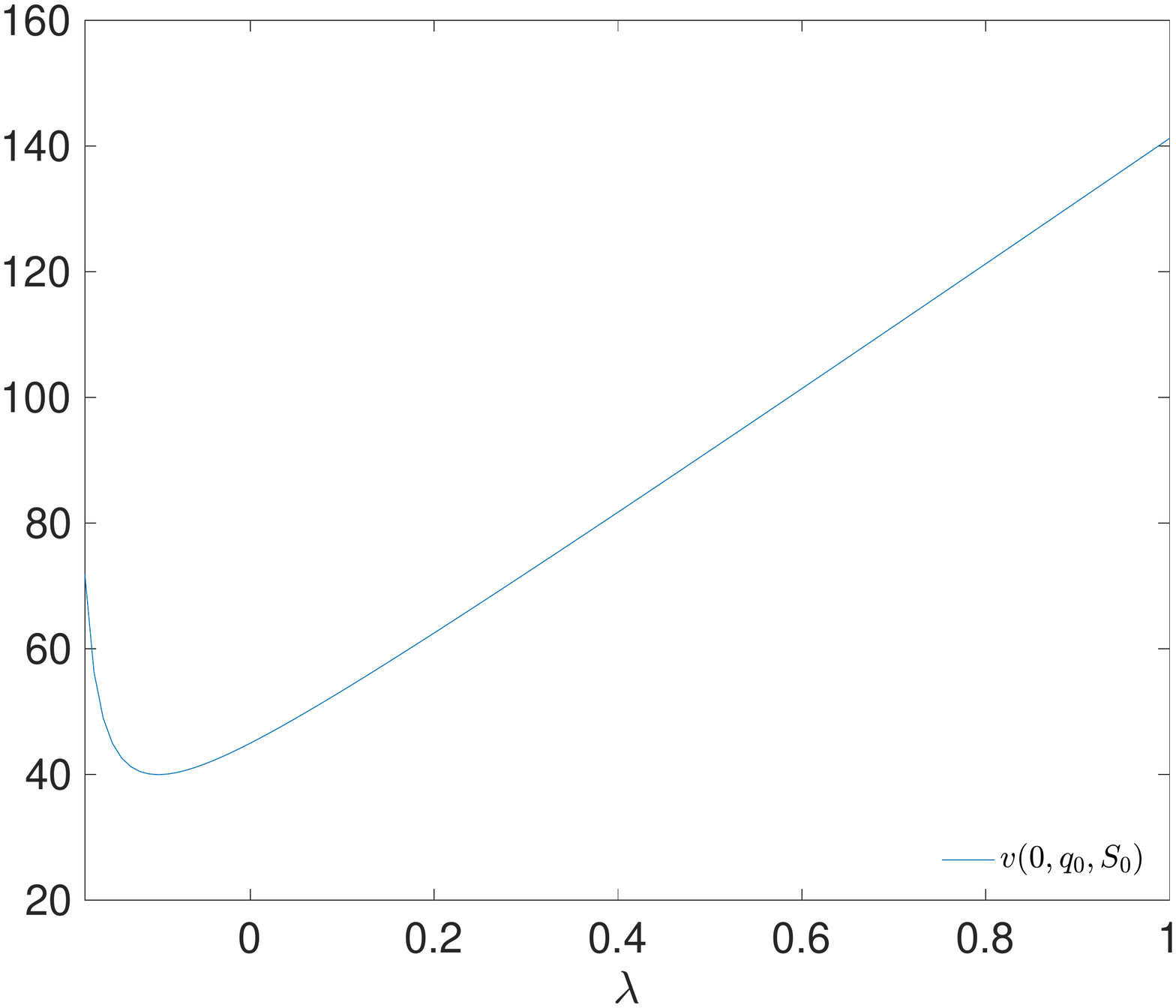} & \includegraphics[width=0.33\textwidth]{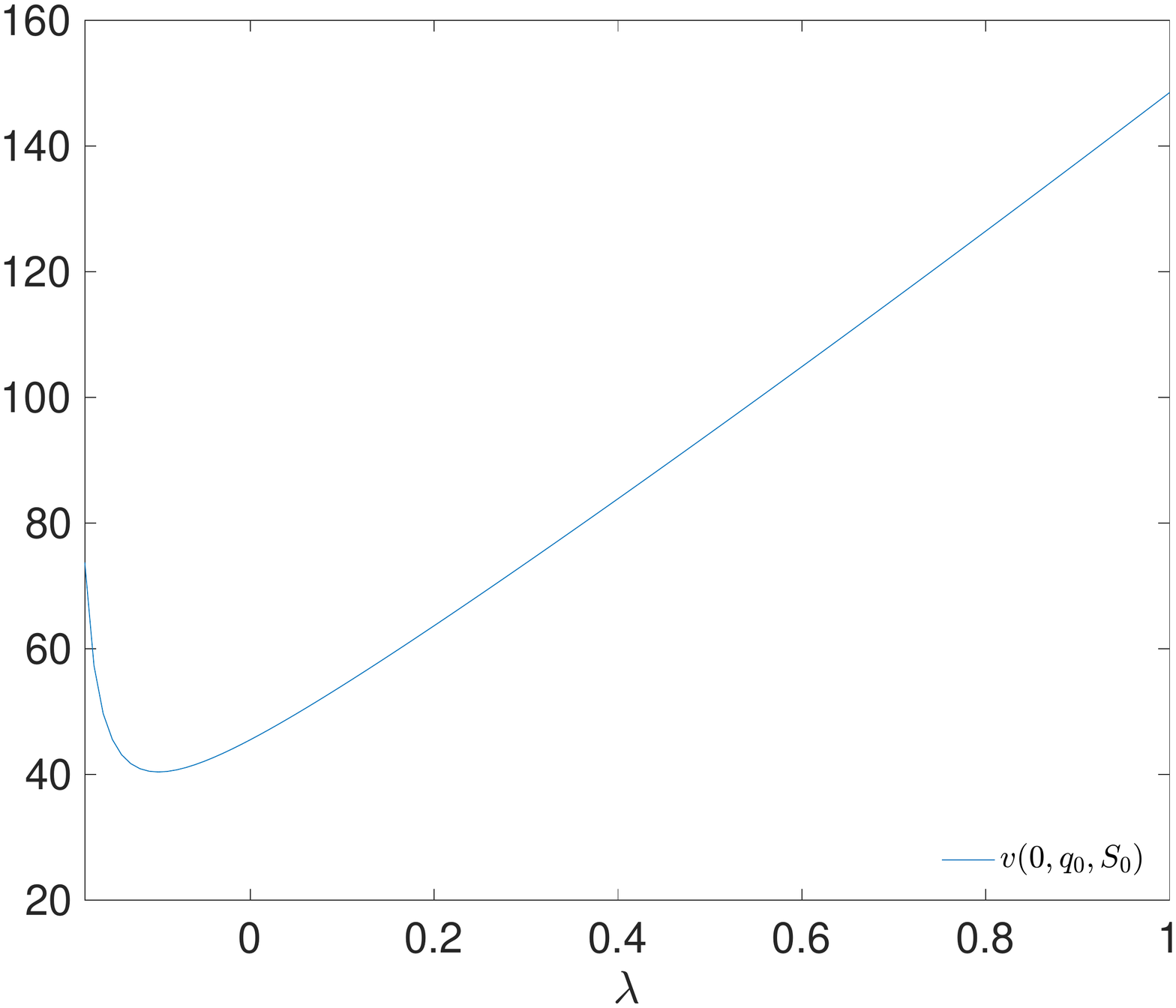} & \includegraphics[width=0.33\textwidth]{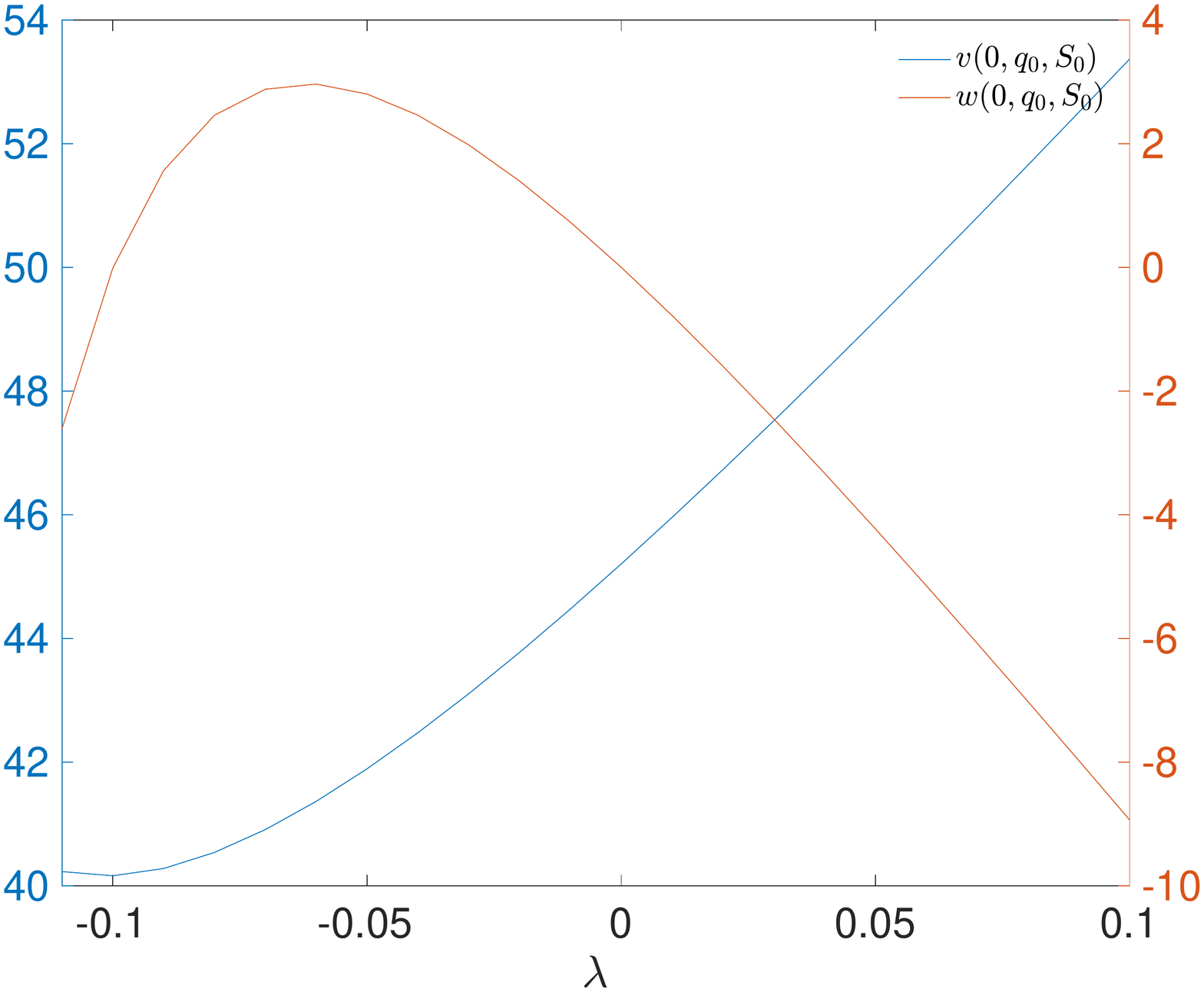}
\end{tabular}
\caption{{\small Producer's and trader's value function at initial time as a function of derivative position $\lambda$. Parameter values: $s_0 = 10$,  $a = 0.5$, $g = 0.1$,  $\kappa = 0.01$,  $\sigma = 1$, $T = 1$, $\mu = 0.0$, $q_0=0$.}}
\label{fig:v0w0}
\end{flushleft}
\end{figure}

\begin{figure}[t!]
\begin{center}
\begin{tabular}{c c c} 
Model 1 & Model 2 & Model 3 \\
\hspace{-5mm}\includegraphics[width=0.33\textwidth]{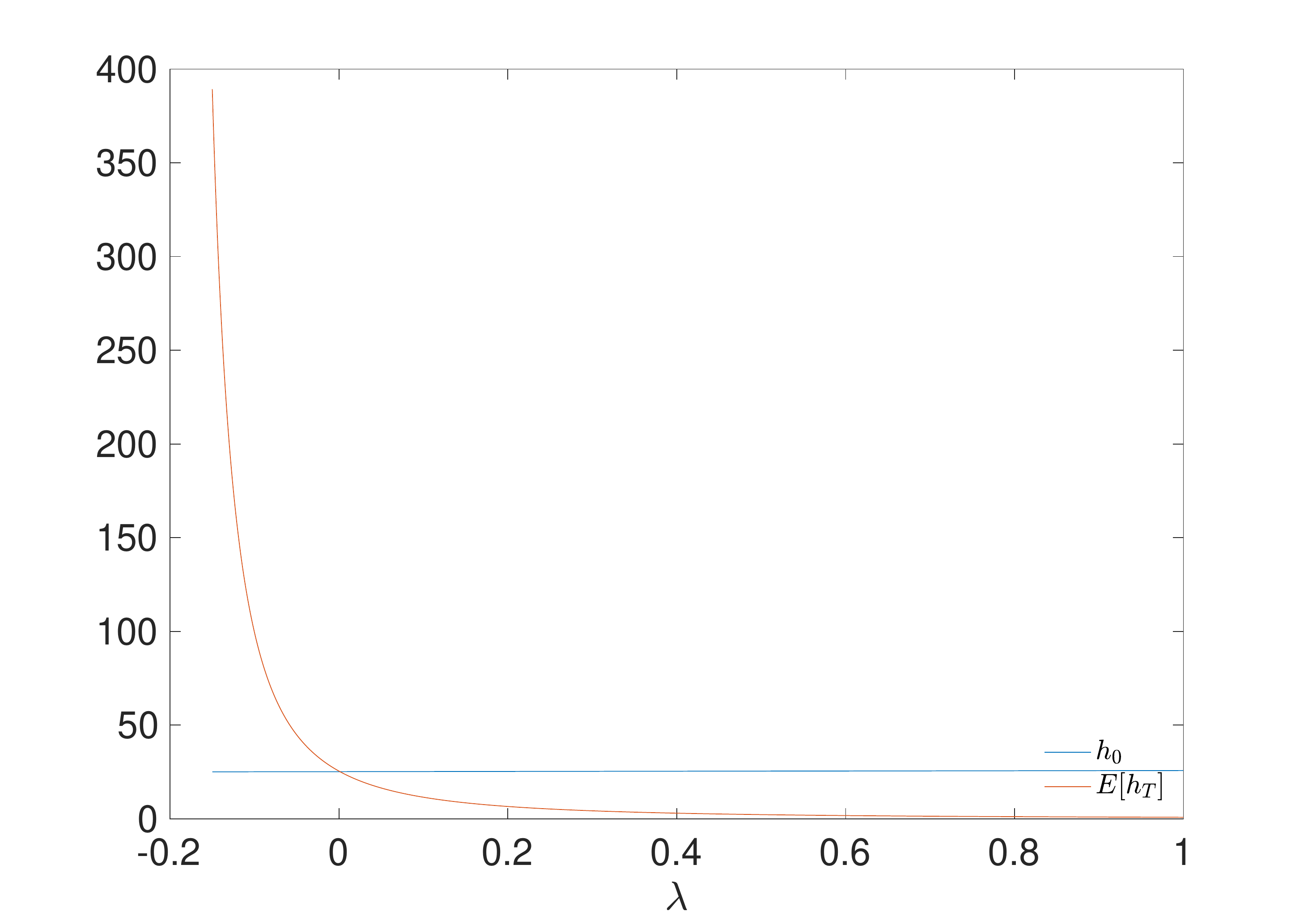} & \includegraphics[width=0.33\textwidth]{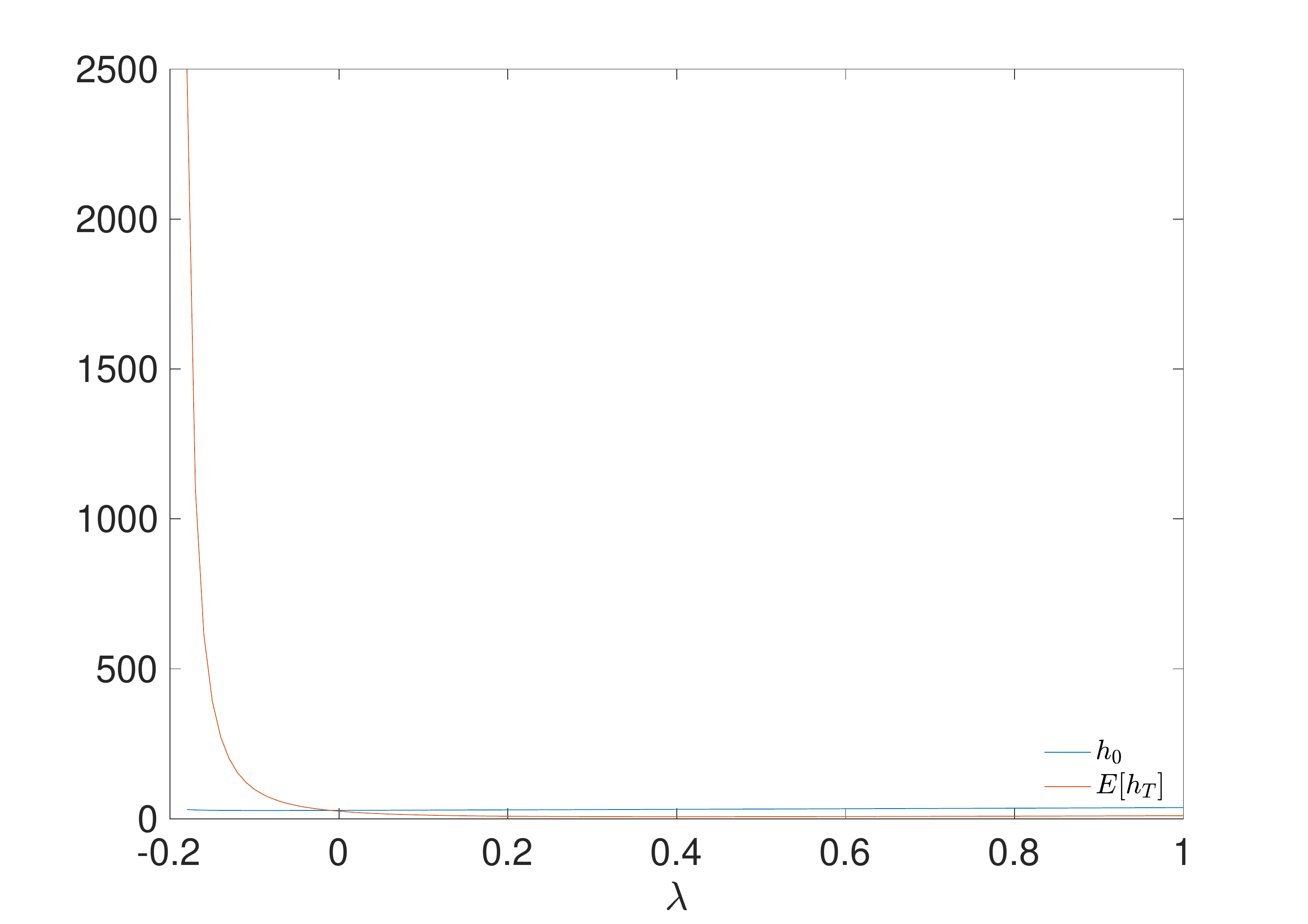} & \includegraphics[width=0.33\textwidth]{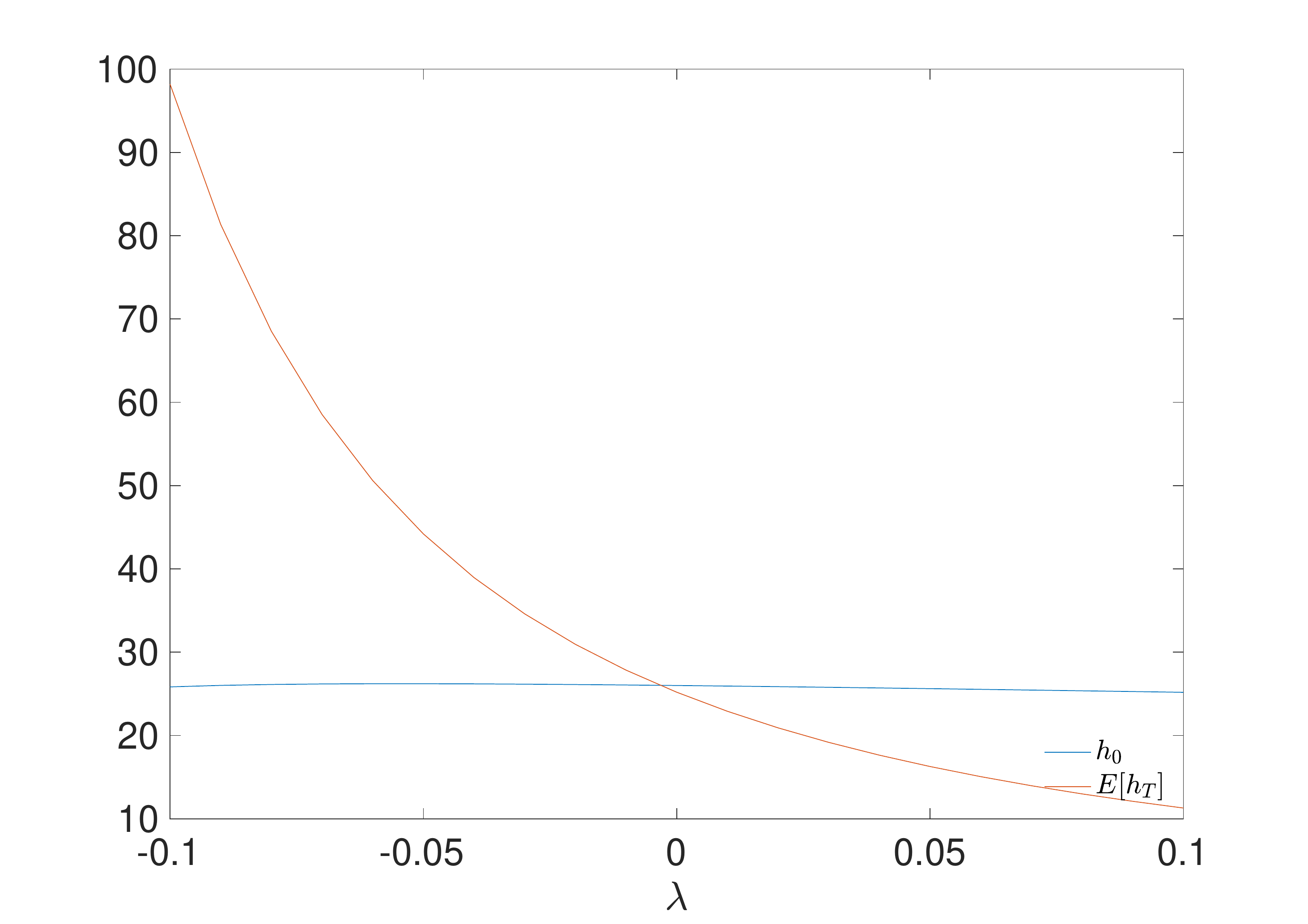} \\ 
\hspace{-5mm}\includegraphics[width=0.33\textwidth]{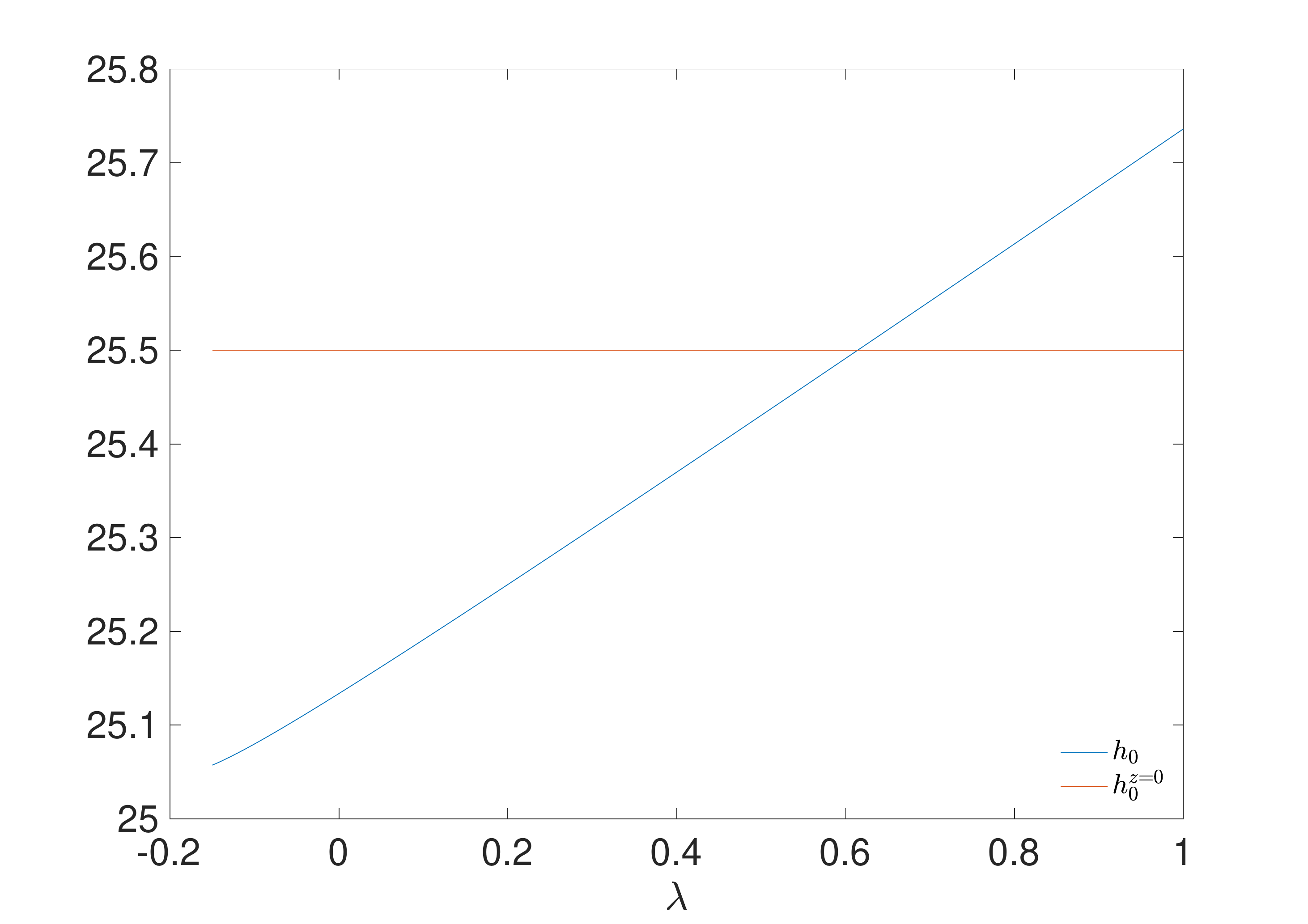} & \includegraphics[width=0.33\textwidth]{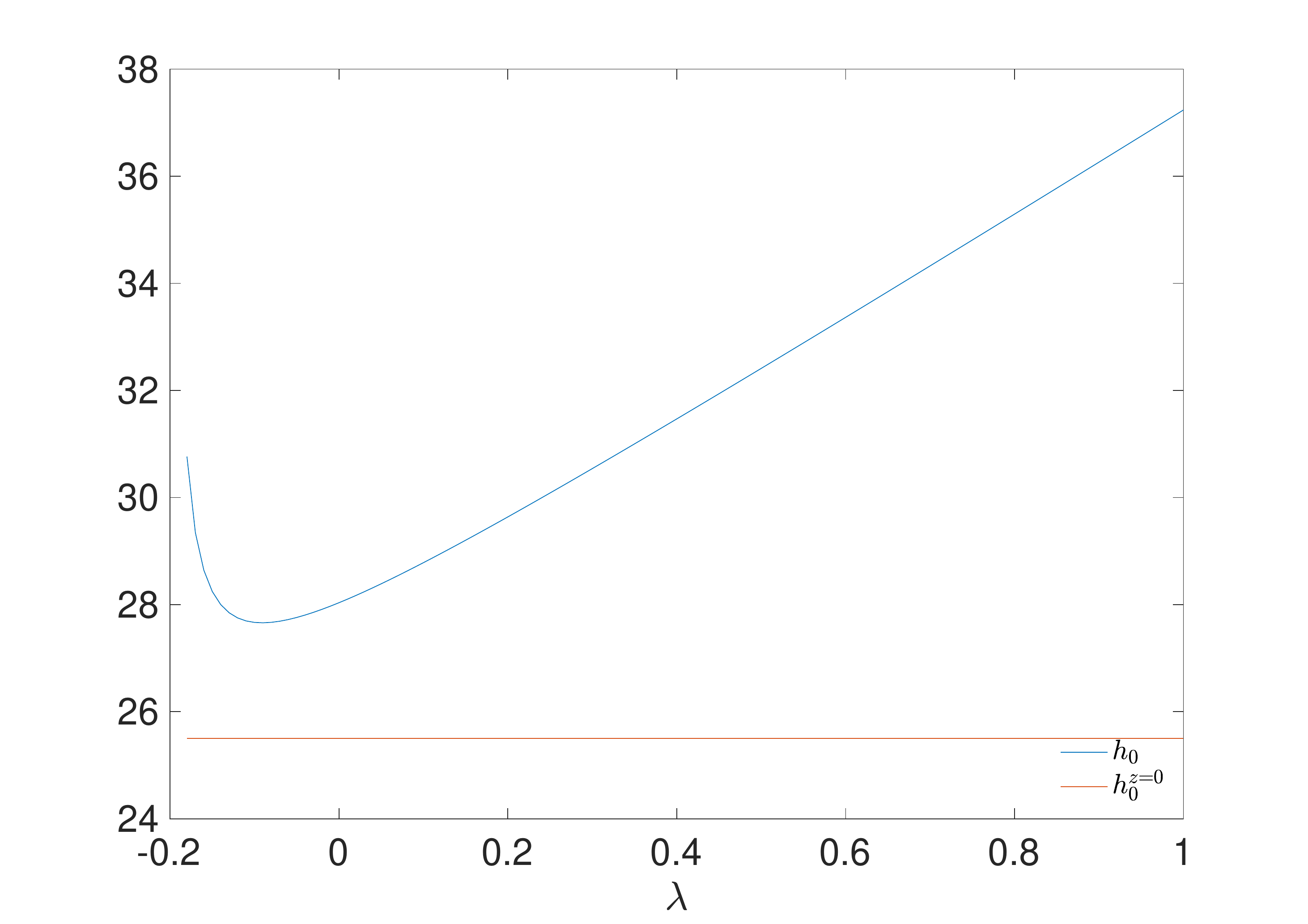} & \includegraphics[width=0.33\textwidth]{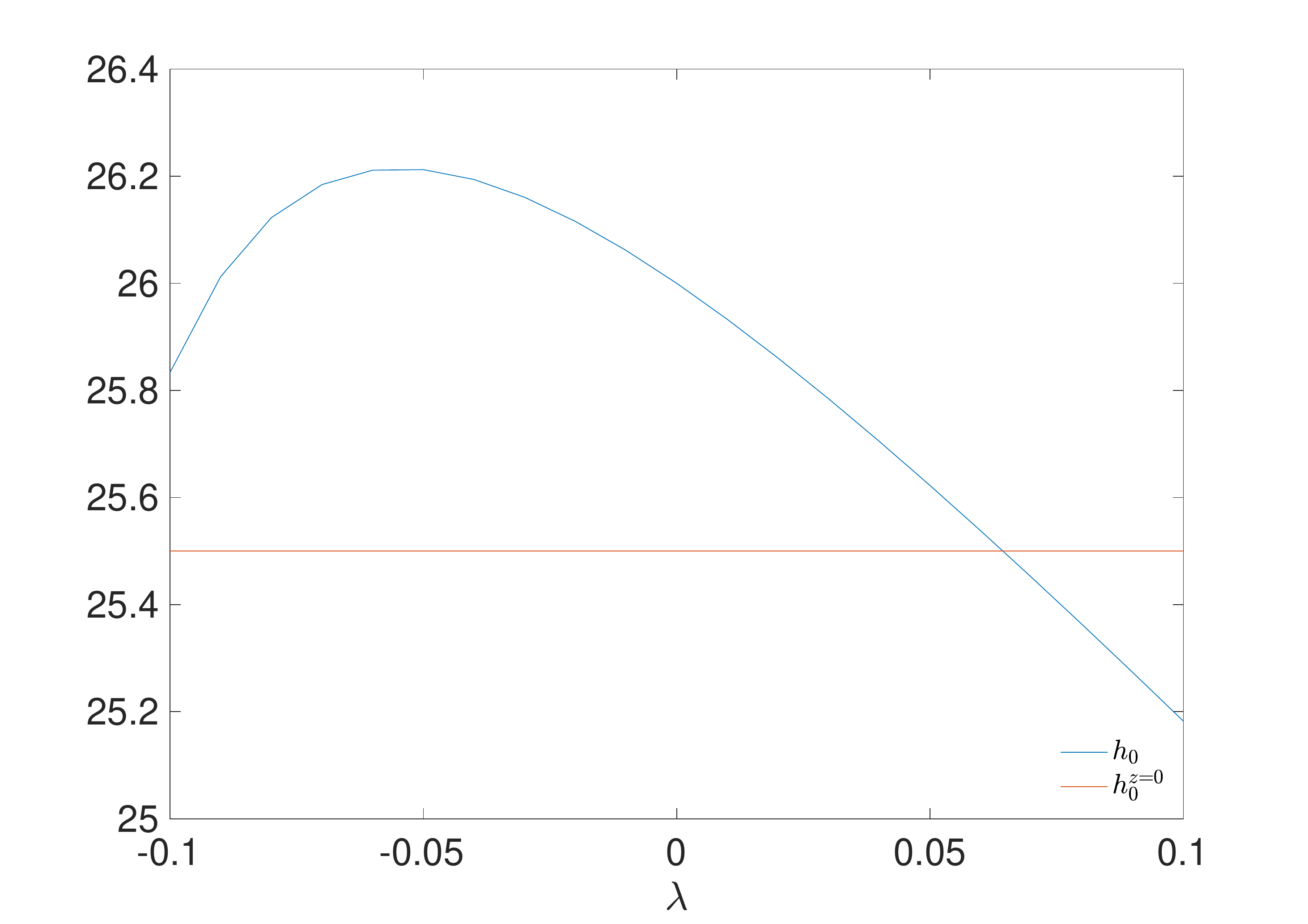} \\
\end{tabular}
\caption{{\small Values of the derivative $h_0$, the expected value of the payoff $\E^\P[h_T]$ and the value of the derivative in case of no volatility manipulation $h_0^{z=0}$ as a function of the net position $\lambda$. Parameter values: $s_0 = 10$,  $a = 0.5$, $g = 0.1$,  $\kappa = 0.01$,  $\sigma = 1$, $T = 1$, $\mu = 0.0$, $q_0=q^\star$.}}
\label{fig:h0}
\end{center}
\end{figure}

\begin{figure}[hbt!]
\vspace{-9mm}
\begin{center}
\begin{tabular}{c c c} 
$a =  0.1$ & $a=0.5$ & $a = 0.9$ \\
\hspace{-5mm}\includegraphics[width=0.33\textwidth]{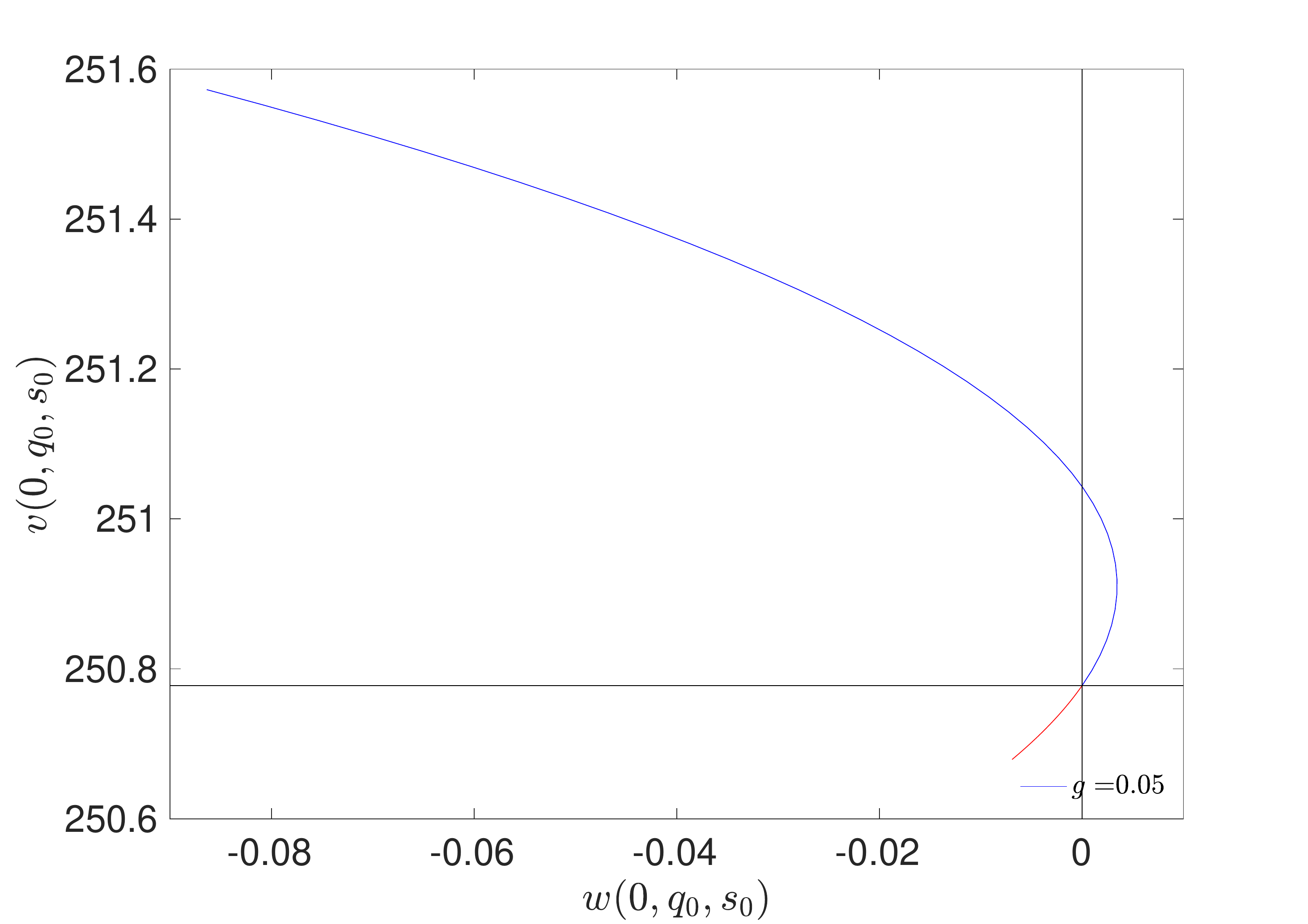} & \includegraphics[width=0.33\textwidth]{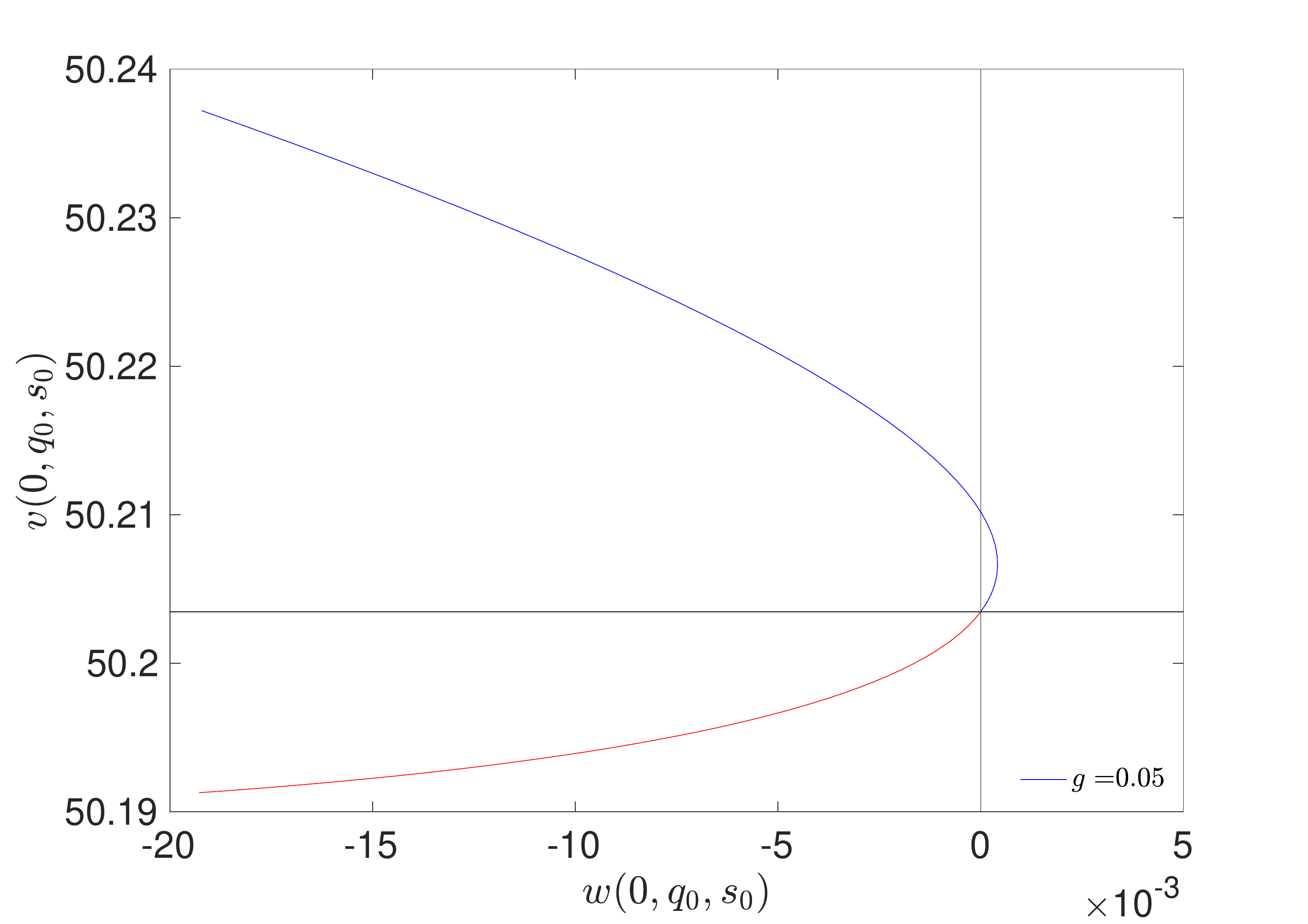} & \includegraphics[width=0.33\textwidth]{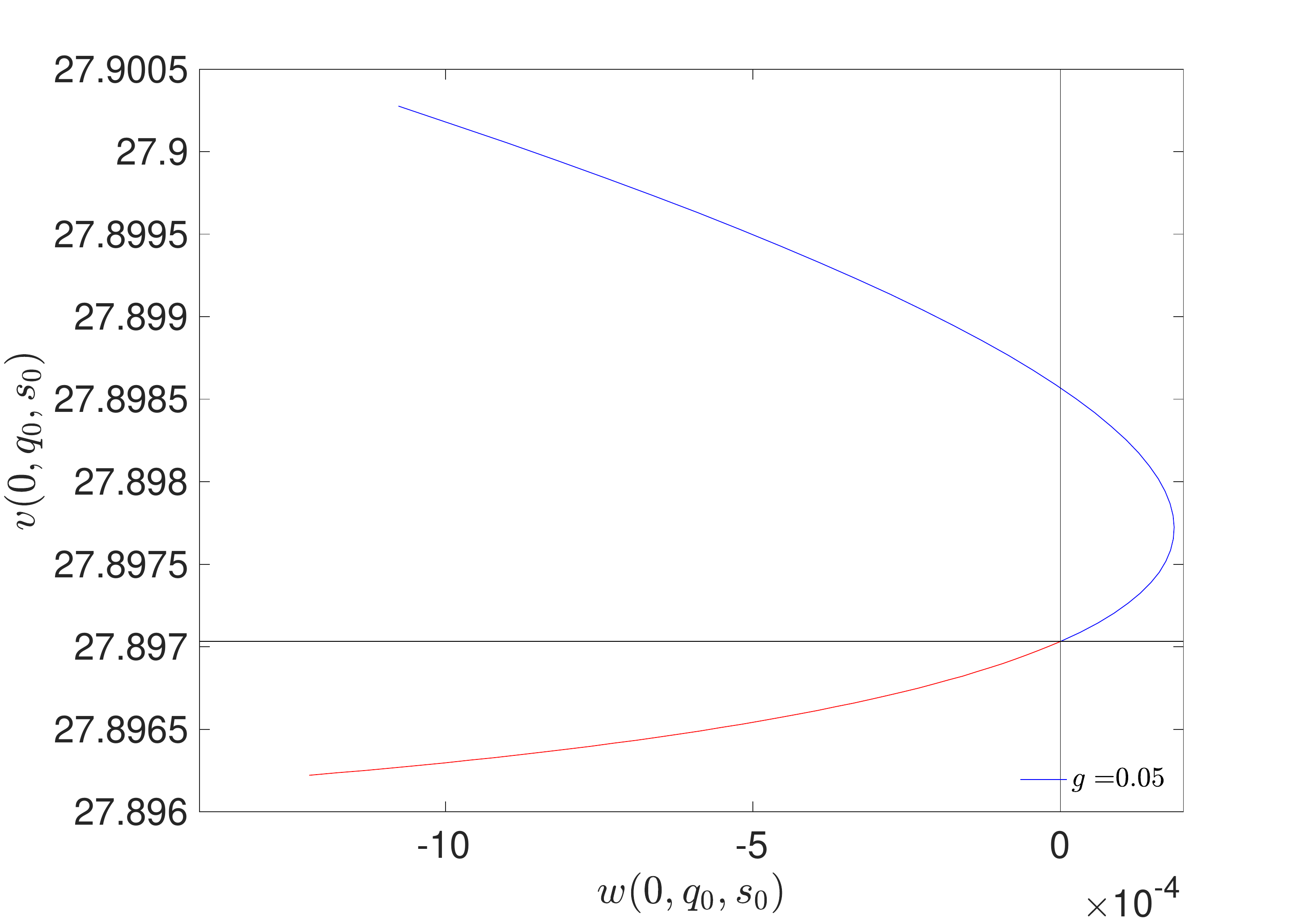} \\
\hspace{-5mm}\includegraphics[width=0.33\textwidth]{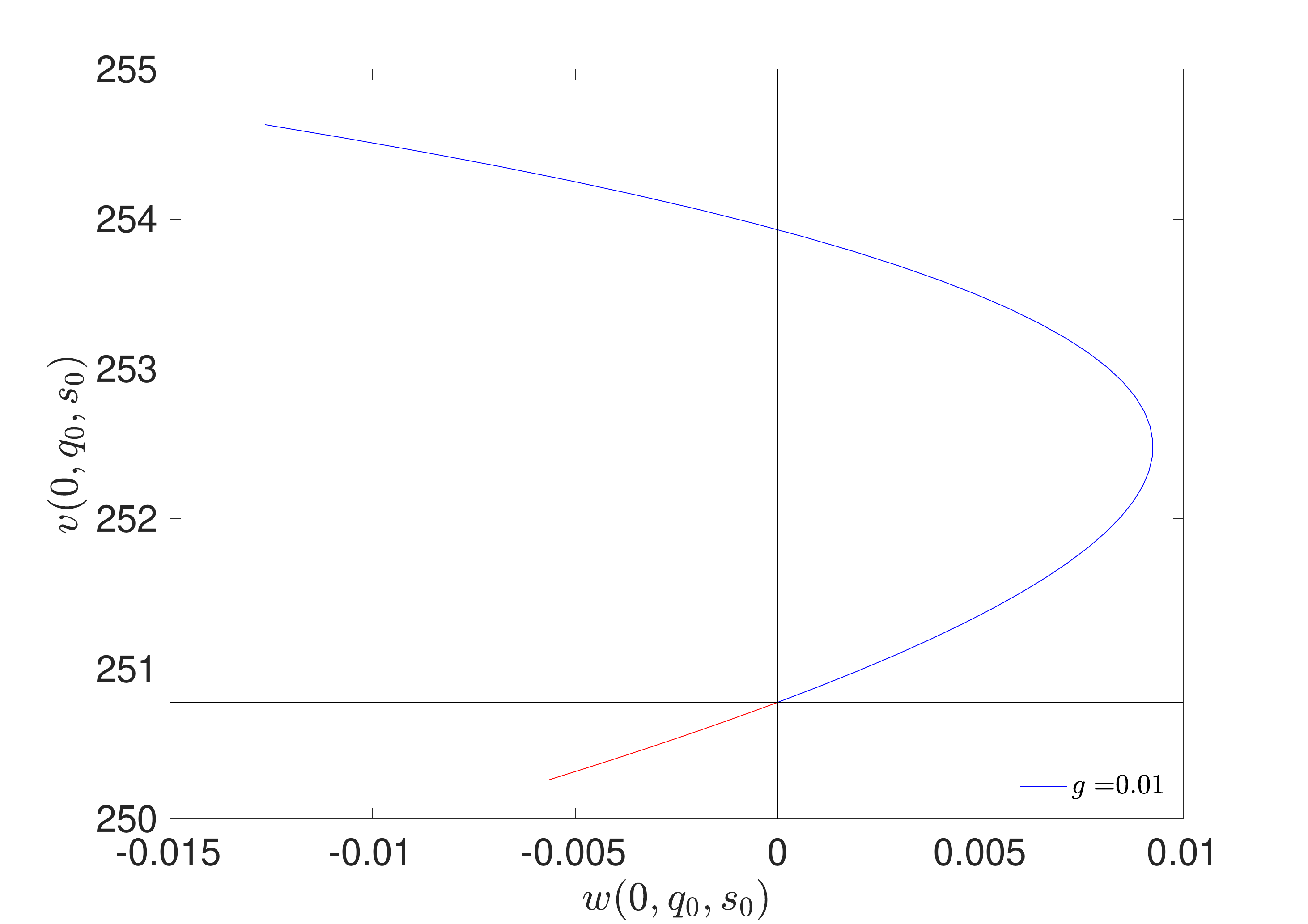}  & \includegraphics[width=0.33\textwidth]{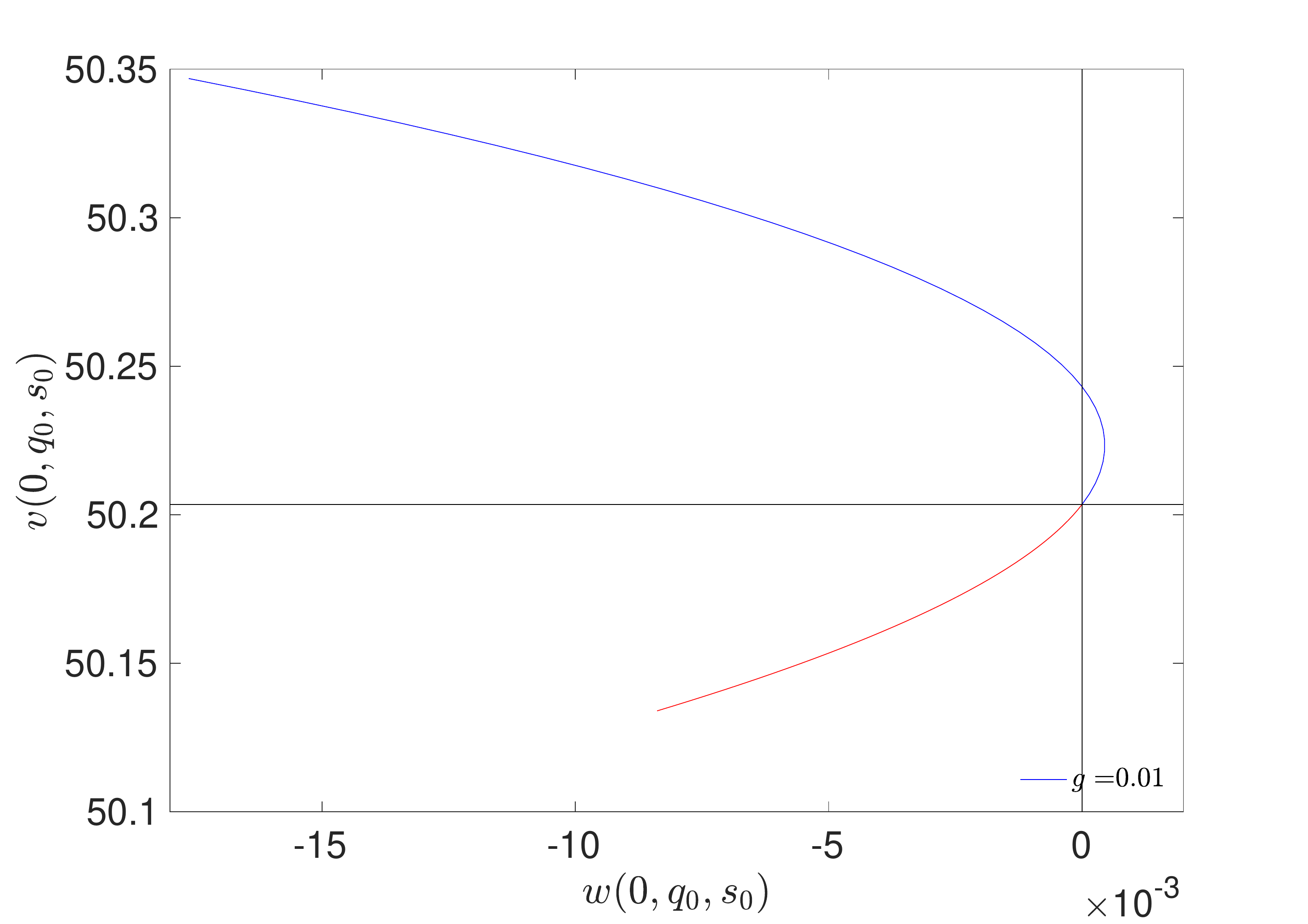} & \includegraphics[width=0.33\textwidth]{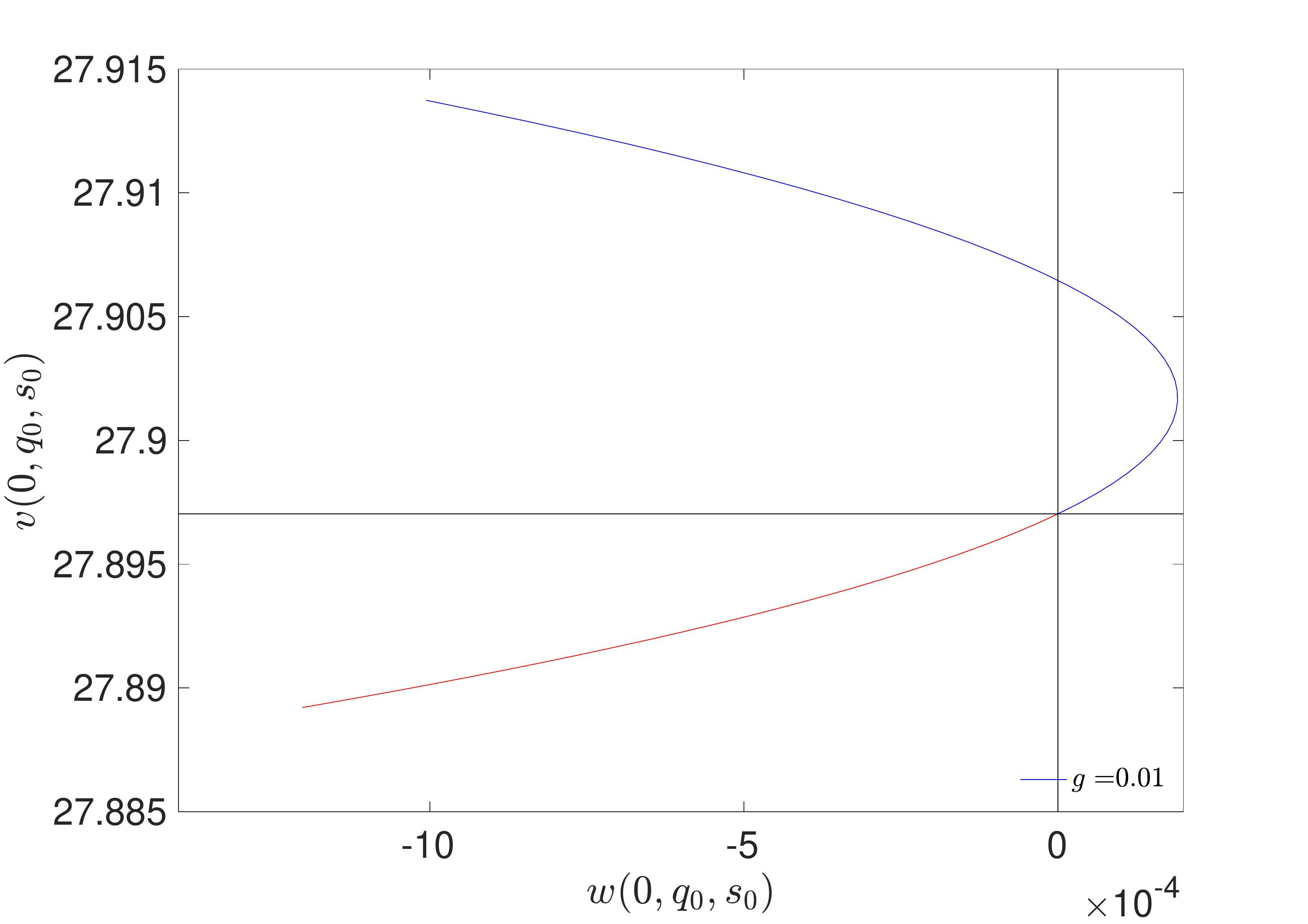}  \\
\hspace{-5mm}\includegraphics[width=0.33\textwidth]{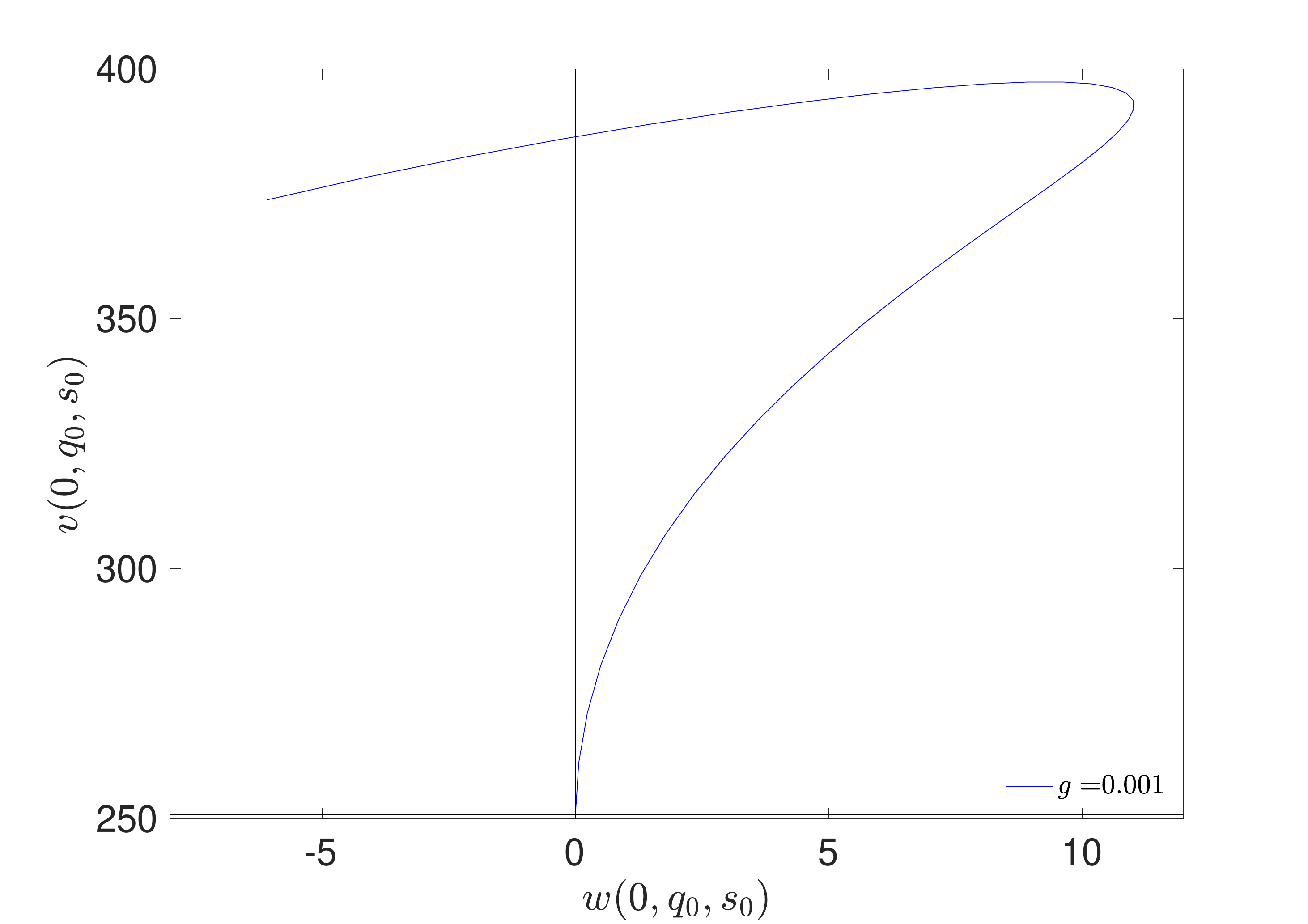} & \includegraphics[width=0.33\textwidth]{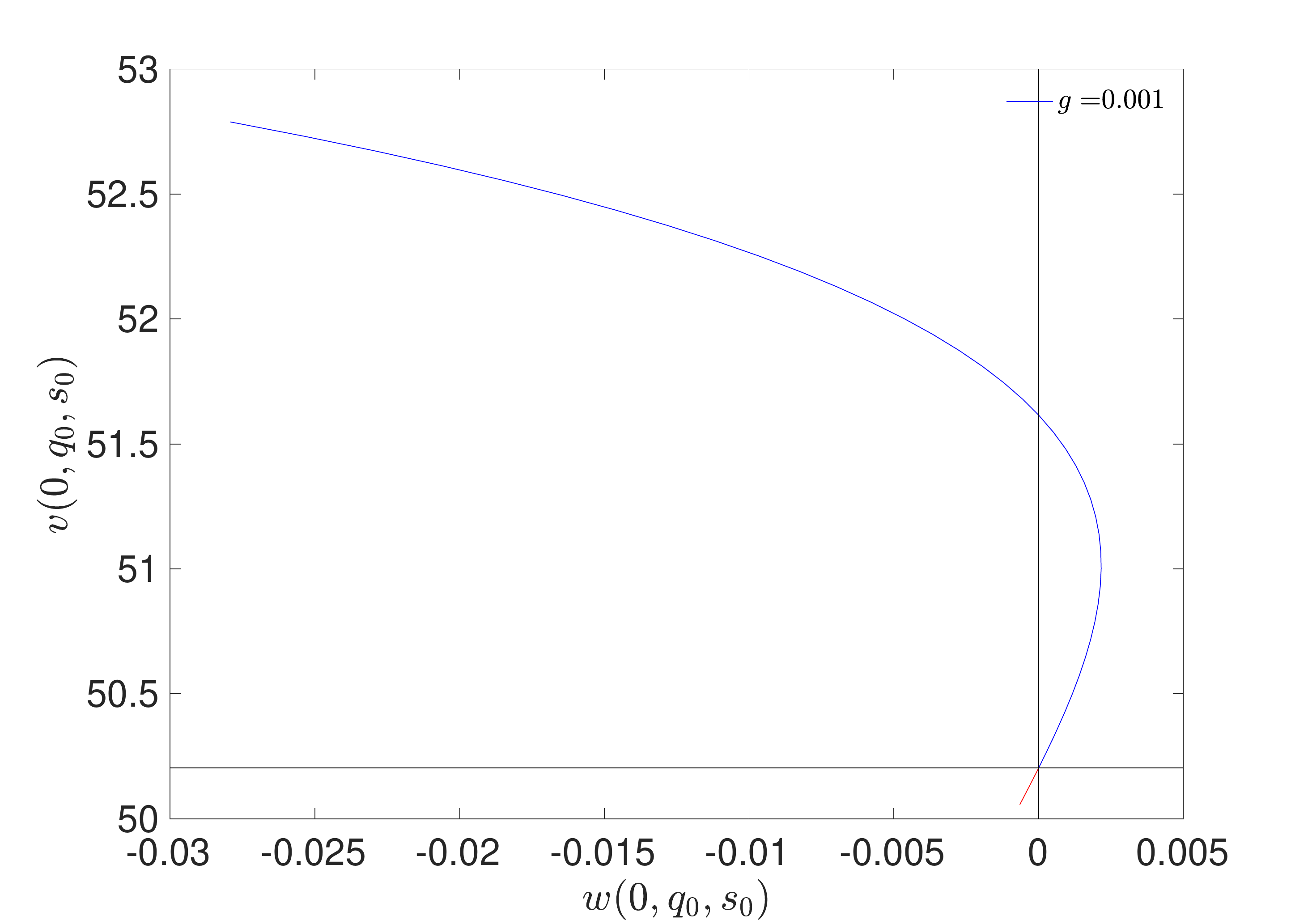} & \includegraphics[width=0.33\textwidth]{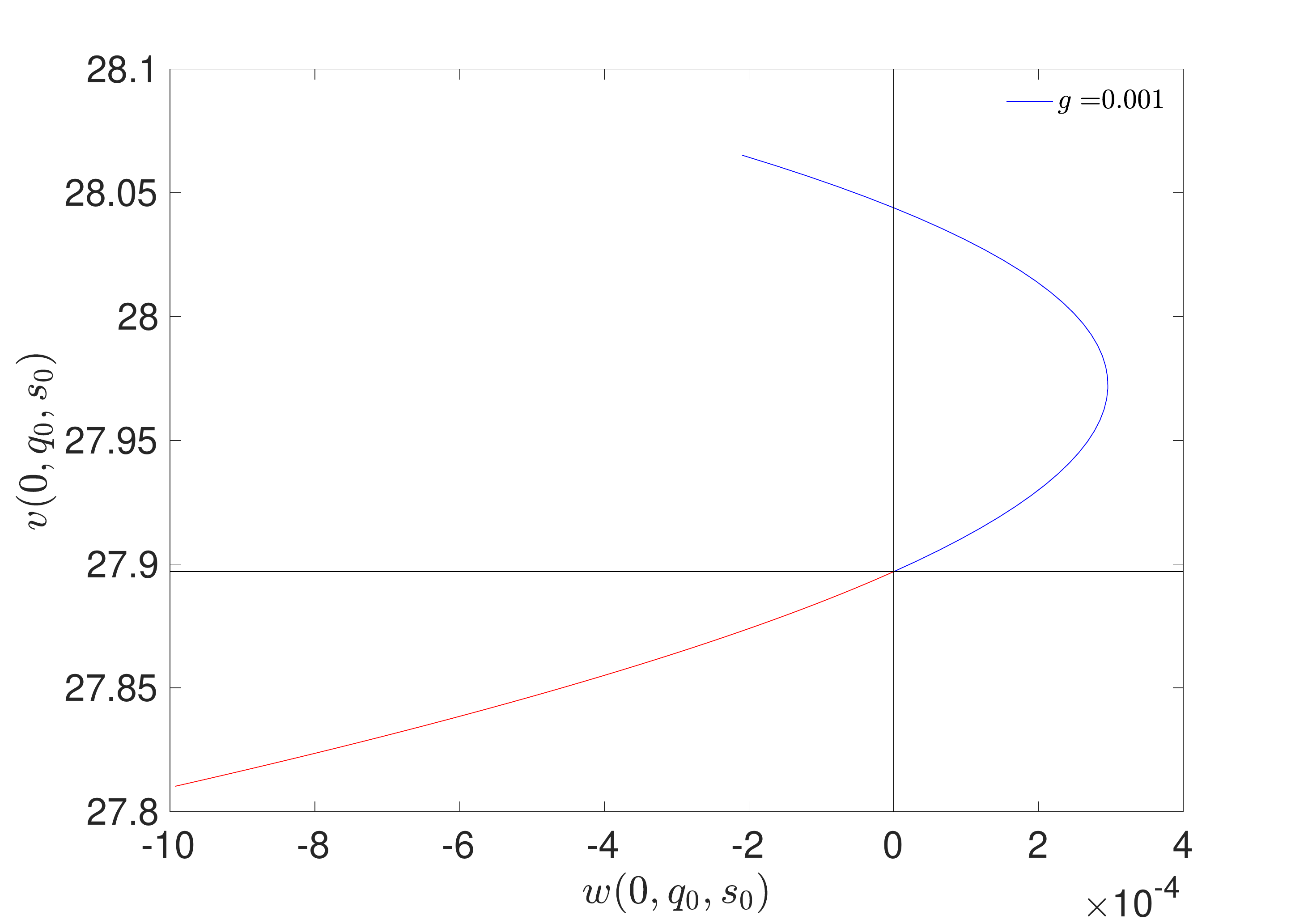}
\end{tabular}
\caption{{\small Values of $v(0,q_0,s_0)$ and $w(0,q_0,s_0)$ as a function of the trading position $\lambda$, $\lambda >0$ in red, $\lambda<0$ in blue, for $q_0 =  s_0/2 a$, and for different values of $a$.}}
\label{fig:ana}
\end{center}
\end{figure}

We have seen that the producer can drive the price at maturity at a level that would make her derivative position a profitable trade for her, providing her with an efficient tool in this asymmetric game of price manipulation. In this situation, considered the potential strong asymmetry of power in this game, a natural question on whether an exchange level $\lambda$ that would make both players better off exchanging might arise. Figure~\ref{fig:v0w0} (right) gives the value functions of the producer and of the trader at time zero as a function of the derivative position. We observe that even if the value function of the producer exhibits the same pattern as in the first two models, her expected profit is now considerably reduced due to the counteraction of the trader. Besides, the value function of the trader is concave and admits an optimum at a position that makes the producer worse off trading. Further, we observe that in this situation, with a zero initial rate of production, neither the producer nor the trader are better off trading.

But, if we consider that the production rate starts at its optimal stationary level $q^\star$, we find that whatever the market power of the producer, there is an exchange position making both the producer and the trader better off than not making a trade.  Figure~\ref{fig:ana} presents the value functions of the producer and the trader at initial time for different values of the market power parameter $a$ and different cost of intervention for the trader $g$. In each case, we chose as an initial production rate $q_0$ $=$ $q^\star$ $=$ $s_0/(2a)$, the stationary level of production, avoiding in this way the transitory phase to optimal production rate. When the producer and the trader do not trade ($\lambda=0$), their respective value $v(0,q_0,s_0)$ and $w(0,q_0,s_0)$ stand at the intersection of the black axis. As the trader starts to sell the derivative, $\lambda$ becomes negative and both values are greater than when $\lambda=0$, showing that both are better off making the trade.  Further, we observe that both are worse off in the case where the trader buys the derivative from the producer.

\clearpage
\bibliographystyle{plain}

 \end{document}